\newenvironment{mathprooftree}
  {\varwidth{.9\textwidth}\centering\leavevmode}
  {\DisplayProof\endvarwidth}
\DeclarePairedDelimiter\ket{\lvert}{\rangle}
\DeclarePairedDelimiterX\braket[2]{\langle}{\rangle}{#1 \delimsize\vert #2}
\DeclarePairedDelimiterX\inner[2]{\langle}{\rangle}{#1,#2}
\def\Ket#1{\left|#1\right\rangle}
\definecolor{Myblue}{rgb}{0,0,0.6}
\newtheorem{theorem}{Theorem}[section]
\newtheorem{proposition}[theorem]{Proposition}
\newtheorem{lemma}[theorem]{Lemma}
\newtheorem{corollary}[theorem]{Corollary}
\newtheoremstyle{example}{\topsep}{\topsep}
	{}
	{}
	{\bfseries}
	{.}
	{2pt}
	{\thmname{#1}\thmnumber{ #2}\thmnote{ #3}}
	\theoremstyle{example}
	\newtheorem{definition}[theorem]{Definition}
	\newtheorem{example}[theorem]{Example}
	\newtheorem{remark}[theorem]{Remark}
\def\eval{\operatorname{ev}}
\def\res{\operatorname{Res}}
\def\Hom{\operatorname{Hom}}
\def\vacu{\ket{\emptyset}}
\def\be{\begin{equation}}
\def\ee{\end{equation}}
\DeclareMathOperator{\End}{End}
\DeclareMathOperator{\Spec}{Spec}
\DeclareMathOperator{\Sym}{Sym}
\def\inta{\bold{int}}
\def\comp{\underline{\textup{comp}}}
\def\contract{\;\lrcorner\;}
\newcommand{\HOM}{\operatorname{HOM}}
\newcommand{\op}{\text{op}}
\newcommand{\kk}{k} 
\newcommand{\ob}{\operatorname{ob}}
\newcommand{\id}{\text{id}}
\newcommand{\Sum}{\sum\limits}
\newcommand{\cCoalg}{\textbf{Coalg}}
\def\sC{\mathscr{C}}
\def\sV{\cat{V}}
\newcommand{\lxto}[2][3.5em]{\xrightarrow{\mathmakebox[#1]{#2}}}
\newcommand{\lxmapsto}[2][3.5em]{\xmapsto{\mathmakebox[#1]{#2}}}
\newcommand{\dntn}[1]{\llbracket #1 \rrbracket} 
\newcommand{\proofvdots}[1]{\overset{\displaystyle #1}{\rvdots}}
\DeclareRobustCommand{\rvdots}{%
  \vbox{
    \baselineskip4\p@\lineskiplimit\z@
    \kern-\p@
    \hbox{}\hbox{.}\hbox{.}\hbox{.}
  }}
\begin{document}

\def\ScoreOverhang{1pt}

\def\Res{\res\!}
\newcommand{\ud}[1]{\operatorname{d}\!{#1}}
\newcommand{\Ress}[1]{\res_{#1}\!}
\newcommand{\cat}[1]{\mathcal{#1}}
\newcommand{\lto}{\longrightarrow}
\newcommand{\xlto}[1]{\stackrel{#1}\lto}
\newcommand{\mf}[1]{\mathfrak{#1}}
\newcommand{\md}[1]{\mathscr{#1}}
\newcommand{\church}[1]{\underline{#1}}
\newcommand{\prf}[1]{\underline{#1}}
\newcommand{\den}[1]{\llbracket #1 \rrbracket}
\def\l{\,|\,}
\def\sgn{\textup{sgn}}
\def\cont{\operatorname{cont}}
\def\counit{\varepsilon}
\def\inta{\textbf{int}}
\def\binta{\textbf{bint}}
\def\comp{\underline{\textup{comp}}}
\def\mult{\underline{\textup{mult}}}
\def\repeat{\underline{\textup{repeat}}}
\def\contract{\;\lrcorner\;}
\def\Coalg{\textbf{Coalg}}
\def\Alg{\textbf{Alg}}
\newcommand{\prom}{\operatorname{prom}}
\def\<{\langle} \def\>{\rangle}

\title{Cofree coalgebras and differential linear logic}
\author{James Clift, Daniel Murfet}

\maketitle

\begin{abstract} We prove that the semantics of intuitionistic linear logic in vector spaces which uses cofree coalgebras is also a model of differential linear logic, and that the Cartesian closed category of cofree coalgebras is a model of the simply-typed differential lambda calculus.
\end{abstract}

\tableofcontents

\setlength{\epigraphwidth}{0.82\textwidth}
\epigraph{In the discrete world of computing, there is no meaningful metric in which ``small" changes and ``small" effects go hand in hand, and there never will be.}{E.W.Dijkstra, \textit{On the cruelty of really teaching computer science}}

\section{Introduction}

The idea of taking derivatives of programs is an old one \cite[\S 2]{paige} with manifestations including automatic differentiation of algorithms computing real-valued functions \cite{autodiff} and incremental computation \cite{incdiff}. However, these approaches are limited to restricted classes of computations, and it is only recently with the development of the differential $\lambda$-calculus by Ehrhard-Regnier \cite{difflambda} and its refinement by differential linear logic \cite{blutecs, ehrhard-survey} that derivatives have been defined for general higher-order programs. These theories assign to each program $P$ another program $\partial P$, the derivative, which (in some sense) computes the infinitesimal change in the output of $P$ resulting from an infinitesimal change to its input. 

The connection between the Ehrhard-Regnier derivative and ordinary calculus is made explicit in the semantics of differential $\lambda$-calculus and differential linear logic, with the standard examples being the K\"othe and finiteness space semantics of Ehrhard \cite{ehrhard-kothe, ehrhard-finiteness} and the semantics of Blute-Ehrhard-Tasson \cite{blutecon} in convenient vector spaces. In this paper we explain how the simplest semantics of intuitionistic linear logic in vector spaces \cite{hyland, murfet_ll} is already a model of differential linear logic: tangent vectors and derivatives appear automatically when we use the cofree coalgebra to model the exponential.

Here is a sketch of the key point: let $\den{-}$ denote the natural semantics of linear logic in $\mathbb{C}$-vector spaces and suppose we are given a proof $\pi$ in linear logic computing a function from inputs of type $A$ to outputs of type $B$:
\[
\begin{mathprooftree}
\AxiomC{$\pi$}
\noLine\UnaryInfC{$\vdots$}
\def\extraVskip{5pt}
\noLine\UnaryInfC{${!} A \vdash B$\,.}
\end{mathprooftree}
\]
The space of inputs to $\den{\pi}$ is $\den{A}$, which for the sake of simplicity let us assume is finite-dimensional. A small change in the input starting from $P \in \den{A}$ is a tangent vector $v$ at the point $P$. We may identify $v$ with a vector in $\den{A}$, so that a representative infinitesimal curve segment of this tangent vector is
\[
(-1,1) \lto \den{A}\,, \quad t \longmapsto P + tv\,.
\]
A pair of vectors $(P, v)$ in $\den{A}$ is the same data as a linear map
\be\label{eq:linear_map_tangent_intro}
\xymatrix@C+2pc{
\mathbb{C}1^* \oplus \mathbb{C} \varepsilon^* \ar[r] & \den{A}
}
\ee
where the basis element $1^*$ is sent to $P$ and $\varepsilon^*$ is sent to $v$. The naming of these basis elements is chosen so that we may identify $\mathbb{C} 1^* \oplus \mathbb{C} \varepsilon^*$ with the dual vector space $(\mathbb{C}[\varepsilon]/\varepsilon^2)^*$. As the dual of an algebra this vector space is canonically equipped with the structure of a coalgebra, with comultiplication $1^* \mapsto 1^* \otimes 1^*$ and $\varepsilon^* \mapsto 1^* \otimes \varepsilon^* + \varepsilon^* \otimes 1^*$. Our tangent vector has now been uniquely associated to a linear map from the underlying vector space of the coalgebra $(\mathbb{C}[\varepsilon]/\varepsilon^2)^*$ to $\den{A}$. 

However, there is a \emph{universal} pair consisting of a coalgebra together with a linear map to $\den{A}$, namely Sweedler's cofree coalgebra ${!} \den{A}$ together with its canonical linear map ${!} \den{A} \lto \den{A}$. The universal property means that the linear map \eqref{eq:linear_map_tangent_intro} encoding our tangent vector lifts uniquely to a morphism of coalgebras
\be\label{eq:toucan}
\xymatrix@C+2pc{
(\mathbb{C}[\varepsilon]/\varepsilon^2)^* \ar[r] & {!} \den{A}\,.
}
\ee
The function assigning to $(P, v)$ this morphism of coalgebras is a bijection between pairs consisting of a point in $\den{A}$ and a tangent vector, and morphisms of coalgebras \eqref{eq:toucan}. 

Here is where $\pi$ enters the story. In the semantics of linear logic in vector spaces the denotation of ${!} A$ is ${!} \den{A}$, and the denotation of $\pi$ is a linear map $\den{\pi}: \den{{!} A} \lto \den{B}$ which lifts by the universal property to a morphism of coalgebras ${!} \den{A} \lto {!} \den{B}$. This lifting may be composed with \eqref{eq:toucan} to give a morphism of coalgebras
\be\label{eq:toucan2}
\xymatrix@C+2pc{
(\mathbb{C}[\varepsilon]/\varepsilon^2)^* \ar[r] & {!} \den{A} \ar[r] & {!} \den{B}\,.
}
\ee
Working the bijection between tangent vectors and coalgebra morphisms in reverse, we see that \eqref{eq:toucan2} uniquely specifies a point in $\den{B}$ and a tangent vector. The point in $\den{B}$ is just the output of the algorithm $\pi$ on the given input, while the tangent vector gives the infinitesimal variation of the output, when the input is varied in the direction of $v$.

The formal statement is that for any algebraically closed field $k$ of characteristic zero the semantics of intuitionistic linear logic in $k$-vector spaces defined using cofree coalgebras is a model of differential linear logic (Theorem \ref{main_theorem}). We refer to this as the \emph{Sweedler semantics}, since the explicit description of this universal coalgebra is due to him  \cite{sweedler,murfet_ll}. The proof is elementary and we make no claim here to technical novelty; the link between the symmetric coalgebra and differential calculus is well-known. Perhaps our main contribution is to give several detailed examples showing how to compute these derivatives (Section \ref{section:examples}) and to check that this model of differential linear logic gives rise to a model of differential lambda calculus in the category of cofree coalgebras (Theorem \ref{theorem:cofree_difflambda}).
\\

We conclude this introduction with a sketch of one such example and a comparison of our work to other semantics of differential linear logic. To elaborate a little more on the notation: for any type $A$ of linear logic (which for us has only connectives $\otimes, \multimap, !$) there is a vector space $\den{A}$, and for any proof $\pi$ of $A \vdash B$ there is a linear map $\den{\pi}: \den{A} \lto \den{B}$. In particular every proof $\xi$ of type $A$ has a denotation $\den{\xi} \in \den{A}$, and the promotion of $\xi$ has for its denotation a vector $\ket{\emptyset}_{\den{\xi}} \in \den{!A}$, see \cite[\S 5.3]{murfet_ll}.

For any binary sequence $S \in \{0,1\}^*$ there is an encoding of $S$ as a proof $\underline{S}$ of type
\[
\textbf{bint}_A = {!}(A \multimap A) \multimap \big({!}(A \multimap A) \multimap (A \multimap A)\big)\,.
\]
Repetition of sequences can be encoded as a proof
\[
\begin{mathprooftree}
\AxiomC{$\prf{\mathrm{repeat}}$}
\noLine\UnaryInfC{$\vdots$}
\def\extraVskip{5pt}
\noLine\UnaryInfC{${!} \textbf{bint}_A \vdash \textbf{bint}_A$\,.}
\RightLabel{\scriptsize $\multimap R$}
\end{mathprooftree}
\]
The denotation is a linear map $\den{ {!}\textbf{bint}_A } \lto \den{ \textbf{bint}_A }$ sending $\ket{\emptyset}_{\den{\underline{S}}}$ to $\den{\underline{SS}}$. The derivative of $\prf{\mathrm{repeat}}$ according to the theory of differential linear logic is another a proof
\[
\begin{mathprooftree}
\AxiomC{$\partial\, \prf{\mathrm{repeat}}$}
\noLine\UnaryInfC{$\vdots$}
\def\extraVskip{5pt}
\noLine\UnaryInfC{${!} \textbf{bint}_A, \textbf{bint}_A \vdash \textbf{bint}_A$\,}
\RightLabel{\scriptsize $\multimap R$}
\end{mathprooftree}
\]
which can be derived from $\prf{\mathrm{repeat}}$ by new deduction rules called codereliction, cocontraction and coweakening (see Section \ref{section:coder}). We prove in Section \ref{section:bint} that the denotation of this derivative in the Sweedler semantics is the linear map
\begin{gather*}
\den{\partial\, \prf{\mathrm{repeat}}}: \den{ {!}\textbf{bint}_A } \otimes \den{ \textbf{bint}_A} \lto \den{ \textbf{bint}_A }\,,\\
\ket{\emptyset}_{\den{\underline{S}}} \otimes \den{\underline{T}} \longmapsto \den{\underline{ST}} + \den{\underline{TS}}
\end{gather*}
whose value on the tensor $\ket{\emptyset}_{\den{\underline{S}}} \otimes \den{\underline{T}}$ we interpret as the derivative of the repeat program at the sequence $S$ in the direction of the sequence $T$. This can be justified informally by the following calculation using an infinitesimal $\varepsilon$
\begin{align*}
(S + \varepsilon T)( S + \varepsilon T) = SS + \varepsilon( ST + TS ) + \varepsilon^2 TT,
\end{align*}
which says that varying the sequence infinitesimally from $S$ in the direction of $T$ causes a variation of the repetition in the direction of $ST + TS$. 
\\

The Sweedler semantics is far from the first semantics of differential linear logic: basic examples include the categories of sets and relations \cite[\S 2.5.1]{blutecs} and suplattices \cite[\S 2.5.2]{blutecs}. The examples of K\"othe and finiteness spaces \cite{ehrhard-kothe, ehrhard-finiteness} and convenient vector spaces \cite{blutecon,frolicher} have already been mentioned. These papers explain that the geometric ``avatar'' of the exponential connective of linear logic is the functor sending a space $X$ to the space of distributions on $X$ (for a precise statement, see Remark \ref{remark:distr}). This remarkable analogy between logic and geometry deserves further study. 

Conceptually the Sweedler semantics is similar to these examples in that the exponential is modelled by a space of distributions (with finite support) but it is purely algebraic and there are simple explicit formulas for all the structure maps, which makes it suitable for concrete calculations with proof denotations. Moreover in the algebraic approach the differential structure emerges naturally from the exponential structure, rather than being ``baked in''. The downside is that the smoothness of proof denotations in our semantics is obscured; in particular, in the case $k = \mathbb{C}$ some extra work is required to see the relation between our differential structure and the derivatives in the usual sense.
\\

\emph{Acknowledgements.} Thanks to Kazushige Terui, who stimulated this project by asking if the cofree coalgebra gave a model of differential linear logic.

\section{Sweedler semantics}

We review the Sweedler semantics of multiplicative exponential intuitionistic linear logic (henceforth simply \emph{linear logic}) in the category of vector spaces $\cat{V}$ over an algebraically closed field\footnote{The formal theory of coalgebras is simpler over algebraically closed fields, which explains why we use $k = \mathbb{C}$ in our examples, but this is not really important: one could work over $k = \mathbb{R}$ by taking $\mathbb{C}$-points into account in the explicit description of the cofree coalgebra.}
 $k$ of characteristic zero (e.g. $k = \mathbb{C}$) in Section \ref{section:sweedler_sem}. This was introduced in \cite{hyland} and revisited in \cite{murfet_ll} with a focus on explicit formulas for the involved structures \cite{murfet_coalg}. For background material on linear logic and its semantics see \cite{girard_llogic, girard_prooftypes, mellies}. The multiplicative connectives $\otimes$ and $ \multimap$ have the obvious interpretation; the only nontrivial ingredient in the Sweedler semantics is the cofree coalgebra which interprets the exponential. We begin this section with a review of cofree coalgebras (Section \ref{section:cofree_coalg_review}) and how to think about points (Section \ref{section:group_like_elements}) and tangent vectors (Section \ref{section:tangent_vectors}) in coalgebraic language.

Let $\Alg_k$ denote the category of commutative unital $k$-algebras and $\Coalg_k$ the category of cocommutative counital coalgebras. Unless otherwise indicated, all algebras are commutative and unital, and all coalgebras are cocommutative and counital. Throughout $\otimes = \otimes_k$ and $\Delta, \varepsilon$ denote respectively the comultiplication and counit of a coalgebra.

\subsection{Cofree coalgebras}\label{section:cofree_coalg_review}

The following construction is from \cite[Chapter VI]{sweedler}:

\begin{definition} The \emph{Hopf dual} or continuous linear dual $A^{\circ}$ of an algebra $A$ is the subspace of $A^* = \Hom_k(A,k)$ consisting of linear maps $A \lto k$ which factor as a composite
\[
\xymatrix{
A \ar[r] & A/I \ar[r] & k
}
\]
where $I \subseteq A$ is an ideal, the first map is the quotient and $A/I$ is finite-dimensional.
\end{definition}

The dual $A^{\circ}$ is sometimes denoted $\Hom^{\cont}(A,k)$, as for example in \cite{murfet_coalg}. In \cite[Lemma 6.0.1]{sweedler} it is proven that the canonical injective map $A^* \otimes A^* \lto (A \otimes A)^*$ identifies the subspace $A^{\circ} \otimes A^{\circ}$ with $(A \otimes A)^{\circ}$ and that the dual of the multiplication
\[
M^*: A^* \lto (A \otimes A)^*
\]
satisfies $M^*(A^{\circ}) \subseteq (A \otimes A)^{\circ}$. Identifying the codomain with $A^{\circ} \otimes A^{\circ}$ defines a linear map
\[
\Delta: A^{\circ} \lto A^{\circ} \otimes A^{\circ}
\]
and in this way $(A^{\circ}, \Delta, \varepsilon)$ is a cocommutative coalgebra \cite[Proposition 6.0.2]{sweedler} where the counit $\varepsilon: A^{\circ} \lto k$ is evaluation at the identity $1 \in A$. Clearly if $A$ is finite-dimensional then $A^{\circ} = A^*$. The fundamental theorem about the Hopf dual is:

\begin{theorem}[Sweedler]\label{theorem:fund_adjunc} Given an algebra $A$ and coalgebra $C$, there is a natural bijection
\[
\emph{\Alg}_k( A, C^* ) \cong \emph{\Coalg}_k( C, A^{\circ} )\,.
\]
\end{theorem}
\begin{proof}
See \cite[Theorem 6.0.5]{sweedler}.
\end{proof}

An important example is the Hopf dual of the symmetric algebra $A = \Sym(V^*)$ over a finite-dimensional vector space $V$. Suppose $V$ has basis $e_1,\ldots,e_n$ with dual basis $x_i = e_i^*$. In this case the linear map
\begin{gather*}
\eta: k \lto V \otimes V^* \lto V \otimes \Sym(V^*)\\
1 \mapsto \sum_{i=1}^n e_i \otimes x_i
\end{gather*}
which is independent of the choice of basis, gives rise to a linear map
\[
c_V: \Sym(V^*)^{\circ} \lto V
\]
which sends $\theta \in \Sym(V^*)^{\circ}$ to the vector $c_V(\theta)$ computed by the composite
\[
\xymatrix@C+1pc{
k \ar[r]^-{\eta} & V \otimes \Sym(V^*) \ar[r]^-{1 \otimes \theta} & V \otimes k \cong V
}\,.
\]
The pair $(\Sym(V^*)^{\circ}, c_V)$ is the cofree coalgebra generated by $V$, more precisely:

\begin{theorem} For a finite-dimensional vector space $V$ the pair $(\Sym(V^*)^{\circ},c_V)$ is universal among pairs consisting of a (cocommutative) coalgebra and a linear map from that coalgebra to $V$, in the sense that the map
\begin{gather*}
\emph{\Coalg}_k(C, \Sym(V^*)^{\circ}) \lto \Hom_{k}(C, V)\,,\\
\varphi \longmapsto c_V \circ \varphi
\end{gather*}
is a bijection for any coalgebra $C$.
\end{theorem}
\begin{proof}
See \cite[Theorem 6.4.1, 6.4.3]{sweedler} or \cite[Theorem 2.20]{murfet_coalg}. Here is a sketch of the proof: since any coalgebra is a colimit of finite-dimensional sub-coalgebras \cite[Theorem 2.2.1]{sweedler} we can reduce to the case of $C$ finite-dimensional, where by Theorem \ref{theorem:fund_adjunc}
\begin{align*}
\Coalg_k( C, \Sym(V^*)^{\circ} ) &\cong \Alg_k( \Sym(V^*), C^* )\\
&\cong \Hom_k( V^*, C^* )\\
&\cong \Hom_k( C, V)
\end{align*}
as claimed.
\end{proof}

So much is immediate from \cite{sweedler}. However from the point of view of having a semantics of linear logic (or differential linear logic) in which one can actually do calculations, it is essential to have an \emph{explicit} description of $\Sym(V^*)^{\circ}$ and $c_V$. Providing such a description was the purpose of \cite{murfet_coalg} and we now give a (partially new) exposition of the relevant facts.
\\

In the following let $V$ be a finite-dimensional vector space. 

\begin{definition}
For $P \in V$ we define the linear map
\be
\Psi_P: \Sym(V) \lto \Sym(V^*)^{\circ}
\ee
using a choice of basis $e_1,\ldots,e_n$ of $V$ by
\[
\Psi_P( e_1^{a_1} \cdots e_n^{a_n} )(f) = \frac{\partial^{a_1}}{\partial x_1^{a_1}} \cdots \frac{\partial^{a_n}}{\partial x_n^{a_n}}(f)\Big\vert_{x_1=P_1,\ldots,x_n=P_n}
\]
where $x_i = e_i^*$ is the dual basis. Writing $\Sym_P(V)$ for a copy of $\Sym(V)$
\[
\Psi: \bigoplus_{P \in V} \Sym_P(V) \lto \Sym(V^*)^{\circ}
\]
is the linear map with $\Psi_P$ as its components.
\end{definition}

There are two things that need to be checked, for this $\Psi$ to be well-defined:

\begin{lemma} $\Psi$ is independent of the choice of basis used to define it.
\end{lemma}
\begin{proof}
A change of coordinates affects $e_i^{a_i}$ in the same way as it affects $\frac{\partial^{a_i}}{\partial x_i^{a_i}}$.
\end{proof}

\begin{lemma}\label{lemma:psilandsin} Given $P \in V$ and $a_1,\ldots,a_n \ge 0$, the functional
\[
\Psi_P(e_1^{a_1} \cdots e_n^{a_n}) \in \Sym(V^*)^*
\]
belongs to the subspace $\Sym(V^*)^{\circ}$.
\end{lemma}
\begin{proof}
We prove $\theta = \Psi_P(e_1^{a_1} \cdots e_n^{a_n})$ vanishes on $(x_1-P_1,\ldots,x_n-P_n)^{\sum_i a_i + 1}$. Suppose given a monomial $f = (x_1-P_1)^{b_1} \cdots (x_n-P_n)^{b_n}$ with $\sum_i b_i > \sum_i a_i$. Then the derivative of $f$ involved in $\theta(f)$ will be divisible by some $x_i - P_i$ and so vanishes at $P$.
\end{proof}


\begin{theorem} $\Psi$ is an isomorphism of vector spaces.
\end{theorem}
\begin{proof}
Set $R = \Sym(V^*)$ and given $P \in V$ let $\mf{m}_P = (x_1 - P_1,\ldots,x_n-P_n)$ denote the associated maximal ideal of $R$. Using the Chinese remainder theorem (see the proof of \cite[Lemma A.1]{murfet_coalg}) it is easy to see that for any functional $\theta \in R^{\circ}$ there is a unique $P \in V$ such that for some $j > 0$ the map $\theta$ factors as
\[
R \lto R/\mf{m}_P^j \lto k\,.
\]
Since $R_{\mf{m}_P} / \mf{m}_P^j R_{\mf{m}_P} \cong R/\mf{m}_P^j$ we have
\[
(R_{\mf{m}_P})^{\circ} = \varinjlim_{j > 0} \Hom_k( R/\mf{m}_P^j, k )\,.
\]
One way to restate the consequence of the Chinese remainder theorem is that every $\theta \in R^{\circ}$ belongs to $(R_{\mf{m}_P})^{\circ}$ for a unique $P$, that is, there is an isomorphism
\[
\xymatrix@C+2pc{
\bigoplus_{P \in V} (R_{\mf{m}_P})^{\circ} \ar[r]^-{\cong} & R^{\circ}\,.
}
\]
The proof of Lemma \ref{lemma:psilandsin} shows that the image of $\Psi_P$ lies in the subspace $(R_{\mf{m}_P})^{\circ} \subseteq R^{\circ}$, so it suffices to show that the map
\[
\Psi_P: \Sym_P(V) \lto (R_{\mf{m}})^{\circ}
\]
is a bijection. But the left hand side is a direct limit of subspaces $\Sym_P(V)_{\le j}$ spanned by monomials of degree $\le j$ and the right hand side is a direct limit of subspaces $\Hom_k( R/\mf{m}^{j+1}, k)$. Moreover there is a commutative diagram
\[
\xymatrix@C+2pc{
\Sym_P(V) \ar[r]^-{\Psi_P} & (R_{\mf{m}})^{\circ}\\
\Sym_P(V)_{\le j} \ar[u]\ar[r]_-{\Psi^j_P} & \Hom_k(R/\mf{m}^{j+1},k) \ar[u]
}
\]
where the vertical maps are inclusions. So it suffices to prove $\Psi_P^j$ is an isomorphism for each $j$. But this is clearly true for $j = 0$ and for $j > 0$ we proceed by induction. We have an exact sequence
\[
\xymatrix{
0 \ar[r] & \mf{m}^j / \mf{m}^{j+1} \ar[r] & R / \mf{m}^{j+1} \ar[r] & R/\mf{m}^j \ar[r] & 0
}
\]
and hence a commutative diagram with exact rows
\[
\xymatrix@C-1pc{
0 \ar[r] & \Hom_k(R/\mf{m}^j, k) \ar[r] & \Hom_k( R/\mf{m}^{j+1}, k) \ar[r] & \Hom_k(\mf{m}^j/\mf{m}^{j+1},k) \ar[r] & 0\\
0 \ar[r] & \Sym_P(V)_{\le j-1} \ar[u]^{\cong} \ar[r] & \Sym_P(V)_{\le j} \ar[u]^{\Psi^j_P} \ar[r] & \Sym_P(V)_{\le j}/\Sym_P(V)_{\le j-1} \ar[u]_-{\Bar{\Psi}_P^j}\ar[r] & 0
}
\]
where $\bar{\Psi}^j_P$ is the induced map on the quotients, and the leftmost vertical map is an isomorphism by the inductive hypothesis. So it suffices to prove that $\Bar{\Psi_P^j}$ is an isomorphism (by the Five Lemma). But the domain and codomain both pick out ``monomials'' of degree $j$, in one case by removing from monomials of degree $\le j$ all those of degree $\le j - 1$ and in the other case by removing from monomials of degree $\ge j$ all those of degree $\ge j + 1$.

More formally, we may directly calculate that
\begin{align*}
\Bar{\Psi}_P^j( e_1^{a_1} \cdots e_n^{a_n} ) &= \sum_{b_1 + \cdots + b_n = j} \frac{\partial^{a_1}}{\partial x_1^{a_1}} \cdots \frac{\partial^{a_n}}{\partial x_n^{a_n}}\big( \omega_{\bold{b}} \big)\Big\vert_P \cdot \omega_{\bold{b}}^*\\
&= a_1{!} \cdots a_n{!} \cdot \omega_{\bold{a}}^*
\end{align*}
where for $\bold{b} = (b_1,\ldots,b_n)$ we write $\omega_{\bold{b}} = \prod_{j=1}^n (x_j-P_j)^{b_j}$ which under the restriction $|\bold{b}| \le j$ give a $k$-basis for $\mf{m}^j$ with dual basis $\omega_{\bold{b}}^*$.
\end{proof}

\begin{remark} In the notation of \cite{murfet_coalg} the map $\Psi_P$ is the composite of the isomorphisms in \cite[Lemma 2.12]{murfet_coalg} and \cite[Theorem 2.6]{murfet_coalg}
\[
\xymatrix{
\Sym_P(V) \ar[r]^-{\cong} & \operatorname{LC}(V, P) \ar[r]^-{\cong} & \Sym(V^*)_{\mf{m}_P}^{\circ}\\
}
\]
defined by
\[
e_1^{a_1} \cdots e_n^{a_n} \longmapsto a_1{!} \cdots a_n{!} \left[ \frac{f}{z_1^{a_1}, \ldots, z_n^{a_n}} \frac{\ud{\underline{z}}}{\underline{z}} \right] \longmapsto \frac{\partial^{a_1}}{\partial x_1^{a_1}} \cdots \frac{\partial^{a_n}}{\partial x_n^{a_n}}(-)\Big\vert_P
\]
where $z_i = x_i - P_i$.
\end{remark}

For any vector space $V$ (not necessarily finite-dimensional) the underlying vector space of the symmetric algebra $\Sym(V)$ is naturally equipped with the structure of a coalgebra (the \emph{symmetric coalgebra}) with comultiplication $\Delta$ defined by
\[
\Delta(v_1 \cdots v_n) = \sum_{I \subseteq \{1,\ldots,n\}} v_I \otimes v_{I^c}
\]
where $v_i \in V$ for $1 \le i \le n$ and for $I \subseteq \{1,\ldots,n\}$ we denote by $v_I$ the tensor which is the product in $\Sym(V)$ of the set $\{ v_i \l i \in I \}$. By convention if $I = \emptyset$ then $v_I = 1$. The counit $\varepsilon: \Sym(V) \lto k$ satisfies $\varepsilon(1) = 1$ and vanishes on monomials of positive degree; for the details see Bourbaki \cite[III \S 11]{bourbaki} (our coalgebras are their coassociative counital \emph{cogebras}).

\begin{proposition} $\Psi$ is an isomorphism of coalgebras.
\end{proposition}
\begin{proof}
It suffices to show that
\[
\Psi^j_P: \Sym_P(V)_{\le j} \lto Hom_k( R/\mf{m}_P^{j+1}, k)
\]
is a morphism of coalgebras, where $\Sym_P(V)_{\le j}$ is a subcoalgebra of the symmetric coalgebra  $\Sym_P(V)$ and $\Hom_k( R/\mf{m}_P^{j+1}, k)$ is given the coalgebra structure as the dual of the finite-dimensional algebra $R/\mf{m}_P^{j+1}$. Given $\theta: R/\mf{m}_P^{j+1} \lto k$ we have
\[
\Delta(\theta) = \sum_{|\bold{b}| \le j, |\bold{b}'| \le j} \theta( \omega_{\bold{b}} \cdot \omega_{\bold{b}'} ) \cdot \omega_{\bold{b}}^* \otimes \omega_{\bold{b}'}^*
\]
where $\omega_{\bold{b}} = \prod_{i=1}^n (x_i-P_i)^{b_j}$ and $|\bold{b}| = \sum_i b_i$. Hence
\[
\Delta( \omega_{\bold{c}}^* ) = \sum_{\bold{b} + \bold{b}' = \bold{c}} \omega_{\bold{b}}^* \otimes \omega_{\bold{b}'}^*\,.
\]
Now we have already calculated that
\[
\Psi_P( e_1^{a_1} \cdots e_n^{a_n} ) = a_1{!} \cdots a_n{!} \cdot \omega_{\bold{a}}^*
\]
so we have
\begin{align*}
(\Psi_P \otimes \Psi_P)\Delta(e_1^{a_1} \cdots e_n^{a_n}) &= (\Psi_P \otimes \Psi_P) \sum_{\bold{b} + \bold{b}' = \bold{a}} \binom{a_1}{b_1} \cdots \binom{a_n}{b_n} e_1^{b_1} \cdots e_n^{b_n} \otimes e_1^{b_1'} \cdots e_n^{b_n'}\\
&= \sum_{\bold{b} + \bold{b}' = \bold{a}} \binom{a_1}{b_1} \cdots \binom{a_n}{b_n} b_1{!} \cdots b_n{!} (b_1'){!} \cdots (b_n'){!} \omega_{\bold{b}}^* \otimes \omega_{\bold{b}'}^*\\
&= a_1{!} \cdots a_n{!} \sum_{\bold{b} + \bold{b}' = \bold{a}} \omega_{\bold{b}}^* \otimes \omega_{\bold{b}'}^*\\
&= \Delta \Psi_P(e_1^{a_1} \cdots e_n^{a_n})
\end{align*}
proving the claim. The compatibility of the counits is clear.
\end{proof}

The vector space $\Sym_P(V) = \bigoplus_{i \ge 0} \Sym_P^i(V)$ is graded, where $\Sym_P^i(V)$ is the image in the symmetric algebra of $V^{\otimes i}$. The projection from this graded vector space to its components $k \oplus V$ of degree $\le 1$, followed by the map $(\lambda, v) \mapsto \lambda P + v$ defines
\[
\xymatrix{
\Sym_P(V) \ar@{->>}[r] & k \oplus V \ar[rr]^-{\left(\begin{smallmatrix} P, \,1_V \end{smallmatrix}\right)} & & V
}
\]
and as $P$ varies these maps give the components of the linear map
\[
d_V: \bigoplus_P \Sym_P(V) \lto V\,, \qquad d_V|_{\Sym_P(V)}(v_0,v_1,v_2,\ldots) = v_0 P + v_1\,.
\]
Recall the linear map $c_V$, which we may compute in our basis to be
\[
c_V: \Sym(V^*)^{\circ} \lto V\,, \qquad c_V(\theta) = \sum_i \theta(x_i) e_i\,.
\]

\begin{lemma}\label{lemma:triangle_com} For a finite-dimensional vector space $V$ the diagram
\[
\xymatrix@C+1pc{
\bigoplus_P \Sym_P(V) \ar[dr]_-{d_V}\ar[rr]^-{\Psi}_-{\cong} & & \Sym(V^*)^{\circ} \ar[dl]^-{c_V}\\
& V
}
\]
commutes, and hence the pair $\big( \bigoplus_P \Sym_P(V), d_V \big)$ is also universal among pairs consisting of a (cocommutative) coalgebra and a linear map from that coalgebra to $V$.
\end{lemma}
\begin{proof}
We have $c_V \Psi_P(1) = \sum_i P_i e_i = P$ and
\begin{align*}
c_V \Psi_P(e_j) = \sum_i \Psi_P(e_j)(x_i) e_i = \sum_i \frac{\partial}{\partial x_j}(x_i) e_i = e_j
\end{align*}
whereas for a monomial $m \in \Sym_P(V)$ of degree $> 1$ both legs of the diagram vanish.
\end{proof}

In summary: there is a linear map $d_V$ from the coproduct of copies of the symmetric coalgebra indexed by the points of $V$. By the universal property of the Hopf dual, there is a unique morphism of coalgebras $\Psi$ making the diagram of Lemma \ref{lemma:triangle_com} commute, and what we have done in the above is compute explicitly this unique morphism of coalgebras in terms of differential operators, and prove that it is an isomorphism. 

In the above we have focused on finite-dimensional vector spaces $V$ because in this case the role of differential operators is most transparent. But the symmetric coalgebra $\Sym(V)$ is defined for any vector space $V$, with the same formulas for $\Delta, \varepsilon$ as those given above, and the coproduct $\bigoplus_{P \in V} \Sym_P(V)$ is therefore still a coalgebra. The linear map $d_V$ defined above remains well-defined, even if $V$ is infinite-dimensional.

\begin{proposition}\label{prop:bangV} For any vector space $V$ (not necessarily finite-dimensional) the pair
\[
{!} V := \big( \bigoplus_{P \in V} \Sym_P(V), d_V \big)
\]
is universal among pairs consisting of a cocommutative coalgebra and a linear map to $V$.
\end{proposition}
\begin{proof}
This is already implicit in Sweedler, as explained in \cite[Appendix B]{murfet_coalg}, but we give here another argument. Let $C$ be a coalgebra and $\varphi: C \lto V$ a linear map. We write $C$ as a direct limit of its finite-dimensional sub-coalgebras $\{ C_i \}_{i \in I}$ \cite[Theorem 2.2.1]{sweedler}. For each $i \in I$ let $V_i$ be the finite-dimensional subspace $\varphi(C_i)$ of $V$ and $\varphi_i: C_i \lto V_i$ be the restriction of $\varphi$. By Lemma \ref{lemma:triangle_com} there is a unique morphism of coalgebras $\Phi_i$ making
\[
\xymatrix@C+2pc{
C_i \ar[r]^-{\Phi_i} \ar[dr]_-{\varphi_i} & \bigoplus_{P \in V_i} \Sym_P(V_i) \ar[d]^{d_{V_i}}\\
& V_i
}
\]
commute. The inclusions $V_i \subseteq V$ induce morphisms of algebras $\Sym_P(V_i) \lto \Sym_P(V)$ which are easily checked to be injective morphisms of coalgebras. We let $\Phi'_i$ denote the composite with the direct sum of these inclusions:
\[
\xymatrix{
C_i \ar[r]^-{\Phi_i} & \bigoplus_{P \in V_i} \Sym_P(V_i) \ar[r] & \bigoplus_{P \in V_i} \Sym_P(V) \subseteq \bigoplus_{P \in V} \Sym_P(V)
}\,.
\]
These morphisms are compatible with inclusions $C_i \subseteq C_j$, and so induce a morphism of coalgebras $\Phi: C \lto \bigoplus_{P \in V} \Sym_P(V)$ which satisfies $d_V \circ \Phi = \varphi$. To show that $\Phi$ is unique with this property, let $\Phi'$ be some other morphism of coalgebras satisfying $d_V \circ \Phi' = \varphi$. For each $i \in I$ the restriction $\Phi'|_{C_i}$ factors as a morphism of coalgebras through
\[
\Sym_{P_1}(V) \oplus \cdots \oplus \Sym_{P_n}(V)
\]
for some finite set of points $P_1,\ldots,P_n \in V$ (depending on $i$) since $C_i$ is finite-dimensional. Moreover $\Phi'|_{C_i}$ must factor further through a finite-dimensional subspace of this sum, so there exists a finite-dimensional subspace $W_i \subseteq V$ such that $P_1,\ldots,P_n \in W_i$ and $\Phi'|_{C_i}$ factors as follows:
\[
\xymatrix{
C_i \ar[rr]^{\Phi'|_{C_i}}\ar[dr]_-{\Psi_i} & & \bigoplus_{P \in V} \Sym_P(V)\\
& \bigoplus_{P \in W_i} \Sym_{P_j}(W_i) \ar[ur]
}
\]
We may without loss of generality assume $V_i \subseteq W_i$ so that this factorisation $\Psi_i$ and the composite of $\Phi_i$ with the inclusion into $\bigoplus_{P \in W_i} \Sym_P(W_i)$ are two morphisms of coalgebras which agree when post-composed with $d_{W_i}$ (since $d_V \circ \Phi' = \varphi$). By the universal property, we conclude that they agree, and hence $\Phi'|_{C_i} = \Phi|_{C_i}$. It follows that $\Phi' = \Phi$.
\end{proof}

\begin{definition}\label{defn:cofree_coalg_gen}
The pair $({!} V, d_V)$ is referred to as the \emph{cofree coalgebra generated by $V$}.
\end{definition}

The cofree coalgebra is used to model the exponential connective of linear logic in the Sweedler semantics, as explained in Section \ref{section:sweedler_sem} below. For this reason we refer to $d_V$ as the \emph{dereliction map}. If there is no chance of confusion we will often write $d$ for $d_V$. Note that the comultiplication $\Delta$ on ${!} V$ and the counit $\varepsilon$ are derived from the comultiplication and counit on the summands $\Sym_P(V)$ in the obvious fashion; see \cite[p.50]{sweedler}. 

In the remainder of this section $V$ is an arbitrary vector space.

\begin{definition} For an integer $n > 0$ set $[n] = \{1,\ldots,n\}$.
\end{definition}

\begin{definition}[(Ket notation)]\label{defn:ket} For $P,v_1,\ldots,v_s \in V$ we write
\[
\ket{v_1,\ldots,v_s}_P \in \Sym_P(V) \subseteq {!} V
\]
for the image in the summand $\Sym_P(V)$ of the tensor $v_1 \otimes \cdots \otimes v_s \in V^{\otimes s}$. The identity $1 \in \Sym_P(V)$ is denoted $\ket{\emptyset}_P$. Given $I = \{i_1,\ldots,i_t\} \subseteq [s]$ we write
\be\label{eq:accum_ket}
\ket{v_I}_P := \ket{v_{i_1},\ldots,v_{i_t}}_P\,.
\ee
\end{definition}

In this notation the formulas for the comultiplication and counit are
\begin{align}
\Delta &: {!} V \lto {!} V \otimes {!} V\,, & \Delta \ket{v_1,\ldots,v_s}_P &= \sum_{I \subseteq [s]} \ket{v_I}_P \otimes \ket{v_{I^c}}_P\,,\label{eq:ket1}\\
\varepsilon &: {!} V \lto k\,, & \varepsilon \ket{v_1,\ldots,v_s}_P &= \delta_{s = 0} \cdot 1\label{eq:ket2}
\end{align}
where in the formula for $\Delta$, $I^c$ denotes the complement in $[s]$ and $I$ ranges over all subsets, including the empty set. In particular $\Delta \ket{\emptyset}_P = \ket{\emptyset}_P \otimes \ket{\emptyset}_P$ and $\varepsilon \ket{\emptyset}_P = 1$.

\subsection{Group-like elements are points}\label{section:group_like_elements}

In algebraic geometry a $k$-point of an algebra $A$ is by definition a morphism of algebras $A \lto k$, or equivalently a morphism of schemes $\Spec(k) \lto \Spec(A)$. In the case where $A = \Sym(V^*)$ for $V$ finite-dimensional such points are canonically identified with $V$ itself:
\begin{align*}
\Alg_k( \Sym(V^*), k ) \cong \Hom_k( V^*, k ) \cong V^{**} \cong V
\end{align*}
This bijection identifies a point $P = (P_1,\ldots,P_n)$ in $V$ with coordinates $P_i \in k$ in a chosen basis $e_1,\ldots,e_n$ for $V$ with the maximal ideal
\[
\mf{m}_P = (x_1 - P_1,\ldots,x_n-P_n) \subseteq \Sym(V^*)
\]
where as above we write $x_i = e_i^*$ for the dual basis. From the coalgebraic point of view a $k$-point of a coalgebra $C$ is a morphism of coalgebras $k \lto C$. Such morphisms are in canonical bijection with the \emph{group-like elements} \cite[p.57]{sweedler} of $C$
\[
G(C) = \{ x \in C \l \Delta(x) = x \otimes x \text{ and } \varepsilon(x) = 1 \}
\]
via the bijection
\begin{gather*}
\Coalg_k(k, C) \lto G(C)\\
\varphi \longmapsto \varphi(1)\,.
\end{gather*}
The points of a $k$-algebra $A$ are related to the points of $A^{\circ}$ since by Theorem \ref{theorem:fund_adjunc}
\begin{gather*}
\Alg_k( A, k ) \cong \Coalg_k(k, A^{\circ})
\end{gather*}
and so in particular
\[
V \cong \Alg_k( \Sym(V^*), k ) \cong \Coalg_k( k, \Sym(V^*)^{\circ} ) \cong \Coalg_k( k, {!} V ) \cong G( {!} V )\,.
\]
 
\subsection{Primitive elements are tangent vectors}\label{section:tangent_vectors}

From the point of view of algebraic geometry a tangent vector at a $k$-point of an algebra $A$ is a morphism of algebras $\varphi: A \lto k[\varepsilon]/(\varepsilon^2)$. The point of $A$ at which such a tangent vector is ``attached'' is given by the composite
\be\label{eq:point_of_tangent}
\xymatrix@C+2pc{
A \ar[r]^-{\varphi} & k[\varepsilon]/(\varepsilon^2) \ar[r] & k
}
\ee
where in the second map, $\varepsilon \longmapsto 0$. The appropriateness of this definition can be seen readily in the case $A = \Sym(V^*)$ for $V$ finite-dimensional where, using the coordinates $x_i$ of the previous section to identify $A$ with $k[x_1,\ldots,x_n]$, tangent vectors
\[
v = \sum_{i=1}^n v_i \frac{\partial}{\partial x_i} \qquad v_i \in k
\]
at a point $P = (P_1,\ldots,P_n)$ are in bijective correspondence with $k$-algebra morphisms
\begin{gather*}
\varphi_v: k[x_1,\ldots,x_n] \lto k[\varepsilon]/(\varepsilon^2)\\
f(x_1,\ldots,x_n) \longmapsto f(P) \cdot 1 + \sum_{i=1}^n v_i \frac{\partial f}{\partial x_i}\Big\vert_P \cdot \varepsilon\,.
\end{gather*}
From the coalgebraic point of view a tangent vector at a $k$-point of a coalgebra $C$ is a morphism of coalgebras $\varphi: (k[\varepsilon]/(\varepsilon^2))^* \lto C$. We set $\cat{T} = (k[\varepsilon]/(\varepsilon^2))^*$. The $k$-point at which the tangent vector is attached is given by the composite
\[
\xymatrix@C+2pc{
k \ar[r] & \cat{T} \ar[r]^-{\varphi} & C
}
\]
where the first map is the dual of the second map in \eqref{eq:point_of_tangent}. Such morphisms are in canonical bijection with the \emph{primitive elements} of $C$ \cite[p.199]{sweedler}
\[
\operatorname{Prim}(C) = \{ x \in C \l \Delta(x) = x \otimes g + g \otimes x \text{ for some } g \in G(C) \}
\]
via the bijection
\begin{gather*}
\Coalg_k( \cat{T}, C ) \lto \operatorname{Prim}(C)\\
\varphi \longmapsto \varphi(\varepsilon^*)\,.
\end{gather*}
The tangent vectors at points of a $k$-algebra $A$ are related to tangent vectors at points of the coalgebra $A^{\circ}$ via Theorem \ref{theorem:fund_adjunc}
\[
\Alg_k( A, k[\varepsilon]/(\varepsilon^2)) \cong \Alg_k( A, (k[\varepsilon]/(\varepsilon^2))^{**}) \cong \Coalg_k( \cat{T}, A^{\circ})
\]
and in particular there is a canonical bijection
\begin{align*}
\operatorname{Prim}({!} V) &\cong \Coalg_k(\cat{T}, {!} V)\\
&\cong \Coalg_k( \cat{T}, \Sym(V^*)^{\circ} )\\
&\cong \Alg_k( \Sym(V^*), k[\varepsilon]/(\varepsilon^2) )\\
&\cong \Hom_k( V^*, k[\varepsilon]/(\varepsilon^2) )\\
&\cong V \otimes ( k \oplus k \varepsilon )\\
&\cong V \oplus V \varepsilon\,.
\end{align*}
Given $P, v \in V$ the morphism of coalgebras $\cat{T} \lto {!} V$ corresponding to the pair $(P, v \varepsilon)$ under this bijection is precisely the morphism \eqref{eq:toucan} alluded to in the introduction.

\subsection{Definition of the Sweedler semantics}\label{section:sweedler_sem}

With the notation for the cofree coalgebra ${!} V$ and universal map $d_V$ as introduced above (see Definition \ref{defn:cofree_coalg_gen}) we now recall the definition of the Sweedler semantics $\den{-}$ in the category $\cat{V}$ of $k$-vector spaces from \cite{hyland} and \cite[\S 5.1, \S 5.3]{murfet_ll}. For each atomic formula $x$ of the logic, we choose a vector space $\dntn{x}$. For formulas $A, B$, define:
\begin{itemize}
\item $\dntn{A \otimes B} = \dntn{A} \otimes \dntn{B}$,
\item $\dntn{A \multimap B} = \Hom_k(\dntn{A}, \dntn{B})$,
\item $\dntn{{!} A} = {!}\dntn{A}$\,.
\end{itemize}
If $\Gamma$ is $A_1, ..., A_n$, then we define $\dntn{\Gamma} = \dntn{A_1} \otimes ... \otimes \dntn{A_n}$. Given a linear map $\pi: C \lto V$ with $C$ a coalgebra we write $\prom(\pi): C \lto {!} V$ for the unique morphism of coalgebras with $d_V \circ \prom(\pi) = \pi$ and we use the same notation in the syntax for the proof obtained from a proof $\pi$ of ${!}\Gamma \vdash B$ by applying the promotion rule to obtain a proof of ${!} \Gamma \vdash {!} B$.


\begin{definition}\label{defn: denotation of a proof}
The \emph{denotation} $\dntn{\pi}$ of a proof $\pi: \Gamma \vdash B$ is a linear map $\dntn{\pi}: \dntn{\Gamma} \to \dntn{B}$ defined recursively on the structure of proofs. The proof $\pi$ must match one of the proofs in the first column of the following table, and the second column gives its denotation. Here, $d$ denotes the dereliction map, $\Delta$ the comultiplication and $\varepsilon$ the counit:
\begin{center}
{{\tabulinesep=1.60mm
\begin{longtabu}{| c | c |}

\hline

\AxiomC{}
\RightLabel{\scriptsize axiom}
\UnaryInfC{$A \vdash A$}
\DisplayProof

&

$\dntn{\pi}(a) = a$

\\ \hline

\AxiomC{$\proofvdots{\pi_1}$}
\noLine\UnaryInfC{$\Gamma, A, B, \Delta \vdash C$}
\RightLabel{\scriptsize exch}
\UnaryInfC{$\Gamma, B, A, \Delta \vdash C$}
\DisplayProof

&

$\dntn{\pi}(\gamma \otimes b \otimes a \otimes \delta) = \dntn{\pi_1}(\gamma \otimes a \otimes b \otimes \delta)$

\\ \hline

\AxiomC{$\proofvdots{\pi_1}$}
\noLine\UnaryInfC{$\Gamma \vdash A$}
\AxiomC{$\proofvdots{\pi_2}$}
\noLine\UnaryInfC{$\Delta, A \vdash B$}
\RightLabel{\scriptsize cut}
\BinaryInfC{$\Gamma, \Delta \vdash B$}
\DisplayProof

&

$\dntn{\pi}(\gamma \otimes \delta) = \dntn{\pi_2}(\delta \otimes \dntn{\pi_1}(\gamma))$

\\ \hline

\AxiomC{$\proofvdots{\pi_1}$}
\noLine\UnaryInfC{$\Gamma, A, B \vdash C$}
\RightLabel{\scriptsize $\otimes L$}
\UnaryInfC{$\Gamma, A \otimes B \vdash C$}
\DisplayProof

&

$\dntn{\pi}(\gamma \otimes (a \otimes b)) = \dntn{\pi_1}(\gamma \otimes a \otimes b)$

\\ \hline

\AxiomC{$\proofvdots{\pi_1}$}
\noLine\UnaryInfC{$\Gamma \vdash A$}
\AxiomC{$\proofvdots{\pi_2}$}
\noLine\UnaryInfC{$\Delta \vdash B$}
\RightLabel{\scriptsize $\otimes R$}
\BinaryInfC{$\Gamma, \Delta \vdash A \otimes B$}
\DisplayProof

&  

$\dntn{\pi}(\gamma \otimes \delta) = \dntn{\pi_1}(\gamma) \otimes \dntn{\pi_2}(\delta)$

\\ \hline

\AxiomC{$\proofvdots{\pi_1}$}
\noLine\UnaryInfC{$\Gamma \vdash A$}
\AxiomC{$\proofvdots{\pi_2}$}
\noLine\UnaryInfC{$\Delta, B \vdash C$}
\RightLabel{\scriptsize $\multimap L$}
\BinaryInfC{$\Gamma, \Delta, A \multimap B \vdash C$}
\DisplayProof

&

$\dntn{\pi}(\gamma \otimes \delta \otimes \varphi) = \dntn{\pi_2}(\delta \otimes \varphi\circ\dntn{\pi_1}(\gamma))$

\\ \hline

\AxiomC{$\proofvdots{\pi_1}$}
\noLine\UnaryInfC{$\Gamma, A \vdash B$}
\RightLabel{\scriptsize $\multimap R$}
\UnaryInfC{$\Gamma \vdash A \multimap B$}
\DisplayProof

&

$\dntn{\pi}(\gamma) = \{a \mapsto \dntn{\pi_1}(\gamma \otimes a)\}$

\\ \hline

\AxiomC{$\proofvdots{\pi_1}$}
\noLine\UnaryInfC{$\Gamma, A \vdash B$}
\RightLabel{\scriptsize der}
\UnaryInfC{$\Gamma, {!}A \vdash B$}
\DisplayProof

&

$\dntn{\pi}(\gamma \otimes \overline{a}) = \dntn{\pi_1}(\gamma \otimes d(\overline{a}))$

\\ \hline

\AxiomC{$\proofvdots{\pi_1}$}
\noLine\UnaryInfC{${!}\Gamma \vdash A$}
\RightLabel{\scriptsize prom}
\UnaryInfC{${!}\Gamma \vdash {!}A$}
\DisplayProof

&

$\dntn{\pi} = \prom\dntn{\pi_1}$

\\ \hline

\AxiomC{$\proofvdots{\pi_1}$}
\noLine\UnaryInfC{$\Gamma \vdash B$}
\RightLabel{\scriptsize weak}
\UnaryInfC{$\Gamma, {!}A \vdash B$}
\DisplayProof

& 

$\dntn{\pi}(\gamma \otimes \overline{a}) = \varepsilon(\overline{a}) \dntn{\pi_1}(\gamma)$

\\ \hline

\AxiomC{$\proofvdots{\pi_1}$}
\noLine\UnaryInfC{$\Gamma, {!}A, {!}A \vdash B$}
\RightLabel{\scriptsize ctr}
\UnaryInfC{$\Gamma, {!}A \vdash B$}
\DisplayProof

& 

$\dntn{\pi}(\gamma \otimes \overline{a}) = \dntn{\pi_1}(\gamma \otimes \Delta(\overline{a}))$

\\ \hline
\end{longtabu}}
Table 2.1: Denotations of proofs in the Sweedler semantics.}
\end{center}
\end{definition}

\begin{definition}\label{defn:delta_map} For a vector space $V$ we denote by
\[
\delta_V: {!} V \lto {!} {!} V
\]
the unique morphism of coalgebras satisfying $d_{{!} V} \circ \delta_V = 1_{{!} V}$. Where there is no possibility of confusion we write $\delta$ for $\delta_V$.
\end{definition}

Given $P, v_1,\ldots,v_s \in V$ we have by \cite[Theorem 2.22]{murfet_coalg}
\be
\delta \ket{v_1, \ldots, v_s}_P = \sum_{\{C_1,\ldots,C_l\} \in \cat{P}_{[s]}} \Big|\, \ket{v_{C_1}}_P, \ldots, \ket{v_{C_l}}_P \Big\rangle_Q\,
\ee
where $\cat{P}_{[s]}$ denotes the set of partitions of $[s] = \{1,\ldots,s\}$ and $Q = \ket{\emptyset}_P$. Our partitions do not contain the empty set. As a special case $\delta\ket{\emptyset}_P = \ket{\emptyset}_Q$. 

Given a linear map $f: V \lto W$ there is a unique morphism of coalgebras ${!} f: {!} V \lto {!} W$ with the property that $d_W \circ {!} f = f \circ d_V$ and this makes ${!}$ into a functor ${!}: \cat{V} \lto \cat{V}$. In fact this functor is a comonad on $\cat{V}$, when equipped with the natural transformations $\delta: {!} \lto {!!}$ and $d: {!} \lto \operatorname{id}_{\cat{V}}$ which are component-wise the morphisms $\delta_V$ and $d_V$ defined above. If we let $\Delta, \varepsilon$ denote the assignment of comultiplication and counit maps to all the vector spaces ${!} V$ then, by construction, we have in the sense of \cite[Definition 2.1]{blutecs}:

\begin{lemma}\label{lemma:modality} The tuple $({!}, \delta, d, \Delta, \varepsilon)$ is a coalgebra modality on $\cat{V}$.
\end{lemma}

\subsection{Local cohomology and distributions}\label{section:residues}

In this section we explain how the cofree coalgebra arises in algebraic geometry, since this connection gives a useful context for the differential structure of the Sweedler semantics; the contents will however not be used in the sequel. For a finite-dimensional vector space $V$ of dimension $n$ with $R = \Sym(V^*)$, one proves using local duality \cite[Theorem 2.6]{murfet_coalg} that there is an isomorphism
\be\label{eq:otherbangV}
\bigoplus_{P \in V} H^{n}_{P}(R, \Omega^n_{R/k}) \cong \Sym(V^*)^{\circ}
\ee
where $H^n_P$ denotes local cohomology at $P$ \cite{residuesduality}. This isomorphism is defined by sending a class $\tau$ in the local cohomology at $P$ to the functional $f \mapsto \Res_P(f \tau)$ where $\Res_P$ denotes the generalised residue and $f \tau$ the action by $R$ on local cohomology. The isomorphism \eqref{eq:otherbangV} arises from isomorphisms $H^{n}_{P}(R, \Omega^n_{R/k}) \cong \Sym_P(V)$ identifying the identity $\ket{\emptyset}_P$ in $\Sym_P(V)$ with the class of the meromorphic differential form \cite[Definition 2.9]{murfet_coalg}
\be
\left[ \frac{\ud x_1 \wedge \cdots \wedge \ud x_n}{(x_1-P_1), \ldots, (x_n-P_n)} \right] \in H^{n}_{P}(R, \Omega^n_{R/k})\,.
\ee
It is easy to see that 
\begin{equation}\label{eq:residue_differentiates}
\Res_P\Big( f \ket{v}_P \Big) = \partial_{v}( f )|_{x = P}\,,
\end{equation}
and more generally that \cite[Lemma 2.13]{murfet_coalg}
\begin{equation}\label{eq:residue_differentiates2}
\Res_P\Big( f \ket{v_1,\ldots,v_s}_P \Big) = \partial_{v_1} \cdots \partial_{v_s}( f )|_{x = P}\,.
\end{equation}
Thus we may identify elements of ${!} V$ with functionals on the space of polynomial functions, given by evaluating derivatives at points of $V$.

\begin{remark} When $k = \mathbb{C}$ with $V = \mathbb{C} v$ and $z = v^*$ the generator of $R = \mathbb{C}[z]$, this is nothing but the Cauchy integral formula since we have
\be
\ket{\emptyset}_P = \left[ \frac{\ud z }{z-P} \right], \qquad \ket{v}_P = \left[ \frac{\ud z }{(z-P)^2} \right]
\ee
and the Cauchy formula says
\[
f'(P) = \frac{1}{2 \pi i} \oint_\gamma \frac{f(z)}{(z-P)^2} \ud z\,.
\]
\end{remark}

\begin{remark}\label{remark:distr} When $k = \mathbb{R}$ this agrees with the analytic theory of distributions, since by \cite[Theorem 3.2.1]{friedlander} the $\mathbb{C}$-vector space of distributions on the real manifold $V$ supported at a point $P$ is spanned by the functions
\[
f \longmapsto \partial_{v_1} \cdots \partial_{v_s}(f)|_{x=P}
\]
as $s \ge 0$ and $v_1,\ldots,v_s$ varies over all sequences in $V$. So in this case we can identify the coalgebra ${!} V \otimes_{\mathbb{R}} \mathbb{C}$ with the space of distributions on $V$ with finite support.

In the semantics of differential linear logic defined using finiteness spaces \cite{ehrhard-finiteness} and convenient vector spaces \cite{blutecon} the space ${!} V$ is a closure of the linear span of Dirac distributions (in our notation, $\ket{\emptyset}_P$) on $V$. More precisely, if $V$ is a finite-dimensional convenient vector space then ${!} V$ consists of distributions of compact support. For example, see \cite[Theorem 5.7]{blutecs} for the limit defining the distribution $\ket{v}_0$ in our notation. There is a similar role for Dirac distributions in the Coherent Banach space semantics of linear logic in \cite[\S 3.2]{girard_banach}. 

It is interesting to note that functional programs extended with Dirac distributions have already been considered in the literature on automatic differentiation; see \cite{nilsson}. For an abstract categorical theory of distributions via monads, see \cite{kock}.
\end{remark}

\begin{remark}
Any cocommutative coalgebra is the direct limit of finite-dimensional coalgebras, and the category of finite-dimensional cocommutative coalgebras is isomorphic to the category of zero-dimensional schemes over $k$. This is taken as the starting point of one approach to noncommutative geometry which has been influential in the study of $A_\infty$-algebras, where one posits that an arbitrary coalgebra is the coalgebra of distributions on a ``noncommutative space'' \cite[p.15]{kontnc}, \cite{kontnc2, lebruyn}.
\end{remark}

\section{Differential linear logic}

Let $k$ be an algebraically closed field of characteristic zero and $\cat{V}$ the category of $k$-vector spaces. This is a model of linear logic (see Section \ref{section:sweedler_sem}) when equipped with the comonad ${!}$ arising from the cofree coalgebra. We now explain how to equip this category with the structure necessary to make it a model of differential linear logic, following \cite{blutecs}.

Given vector spaces $V,W$ we write $\sigma_{V,W}: V \otimes W \lto W \otimes V$ for the linear \emph{swap map} defined on tensors by $\sigma_{V,W}(x \otimes y) = y \otimes x$. By \cite[Proposition 2.6]{blutecs} to equip $\cat{V}$ with the coalgebra modality $({!}, \delta, d, \Delta, \varepsilon)$ of Lemma \ref{lemma:modality} as a differential category, we need to define a deriving transformation in the sense of \cite[Definition 2.5]{blutecs}. 

\begin{definition} A \emph{deriving transformation} for $(\cat{V}, {!}, \delta, d, \Delta, \varepsilon)$ is a family of linear maps
\[
D_V: {!} V \otimes V \lto {!} V
\]
defined for all $V \in \cat{V}$ and natural in $V$, satisfying the following properties:
\begin{itemize}
\item[(D.1)] $\varepsilon \circ D = 0$, that is,
\be
\xymatrix@C+2pc{
{!} V \otimes V \ar[r]^-{D_V} & {!} V \ar[r]^-{\varepsilon} & k
} = 0\,.
\ee
\item[(D.2)] $\Delta \circ D = (1 \otimes D) \circ (\Delta \otimes 1) + (D \otimes 1) \circ (1 \otimes \sigma) \circ (\Delta \otimes 1)$, that is,
\be
\xymatrix@C+2pc{
{!} V \otimes V \ar[r]^-{D_V} & {!} V \ar[r]^-{\Delta} & {!} V \otimes {!} V
}
\ee
is equal to the sum
\begin{gather*}
\xymatrix@C+2pc{{!} V \otimes V \ar[r]^-{\Delta \otimes 1} & {!} V \otimes {!} V \otimes V \ar[r]^-{1 \otimes D_V} & {!} V \otimes {!} V} \quad +\\
\xymatrix@C+2pc{{!} V \otimes V \ar[r]^-{\Delta \otimes 1} & {!} V \otimes {!} V \otimes V \ar[r]^-{1 \otimes \sigma_{{!}V, V}}_{\cong} & {!} V \otimes V \otimes {!} V \ar[r]^-{D_V \otimes 1} & {!} V \otimes {!} V}
\end{gather*}
\item[(D.3)] $d \circ D = a \circ (\varepsilon \otimes 1)$, that is,
\be
\xymatrix@C+2pc{
{!} V \otimes V \ar[r]^-{D_V} & {!} V \ar[r]^-{d} & V 
}
\quad = \quad
\xymatrix@C+2pc{
{!} V \otimes V \ar[r]^-{\varepsilon \otimes 1} & k \otimes V \ar[r]^-{a}_-{\cong} & V
}
\ee
where $a(\lambda \otimes x) = \lambda x$.
\item[(D.4)] $\delta \circ D = D \circ (\delta \otimes D) \circ (\Delta \otimes 1)$, that is,
\be
\xymatrix@C+2pc{
{!} V \otimes V \ar[r]^-{D_V} & {!} V \ar[r]^-{\delta} & {!}{!} V
}
\ee
is equal to
\be\label{eq:D4final}
\xymatrix@C+2pc{
{!} V \otimes V \ar[r]^-{\Delta \otimes 1} & {!} V \otimes {!} V \otimes V \ar[r]^-{\delta \otimes D_V} & {!}{!} V \otimes {!} V \ar[r]^-{D_{{!}V}} & {!}{!} V\,.
}
\ee
\end{itemize}
\end{definition}

We refer to \cite[\S 2.2]{blutecs} for an explanation of these axioms. Briefly, (D.1) says the derivative of constant maps is zero, (D.2) is the product rule, (D.3) says the derivative of a linear map is constant, and (D.4) is the chain rule. Clearly the rules specify how to commute $D$ past the structural maps $\delta, d, \Delta, \varepsilon$. Here $d$ stands for the dereliction rule in linear logic, $\Delta$ for contraction and $\varepsilon$ for weakening. The map $\delta$ stands for promotion, since for a linear map $\phi: {!} V \lto W$ the unique lifting to a morphism of coalgebras $\Phi: {!} V \lto {!} W$ can be obtained as the composite
\be
\xymatrix@C+2pc{
{!} V \ar[r]^-{\delta} & {!!} V \ar[r]^-{{!} \phi} & {!} W\,.
}
\ee

\begin{definition}\label{defn:D} We define the $k$-linear map $D_V: {!} V \otimes V \lto {!}V$ by
\be\label{defn:DV}
D_V\big( \ket{v_1,\ldots,v_s}_P \otimes v \big) = \ket{ v, v_1,\ldots,v_s }_P\,.
\ee
\end{definition}

\begin{theorem}\label{main_theorem} $D_V$ is a deriving transformation for any vector space $V$.
\end{theorem}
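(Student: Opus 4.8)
The plan is to verify naturality and each of the axioms (D.1)--(D.4) by a direct computation, using the explicit description of ${!}V \cong \bigoplus_{P\in V}\Sym(V)$ and of the structure maps recalled in Appendix \ref{section:background} (following \cite{murfet_coalg}). First one checks that $D_V$ is well defined: it is the composite
\[
{!}V\otimes V \;\cong\; \bigoplus_{P\in V}\big(\Sym(V)\otimes V\big) \;\lto\; \bigoplus_{P\in V}\Sym(V) \;=\; {!}V
\]
whose $P$-component is multiplication in the symmetric algebra, and \eqref{defn:DV} records the effect of this composite on the spanning elements $\ket{\nu_1,\ldots,\nu_s}_P\otimes\nu$. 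Throughout we use that $\ket{\nu_1,\ldots,\nu_s}_P$ is symmetric and multilinear in the $\nu_i$, that $w(\ket{\nu_1,\ldots,\nu_s}_P)$ equals $1$ for $s=0$ and $0$ otherwise, that $d(\ket{\emptyset}_P)=P$, $d(\ket{\nu}_P)=\nu$ and $d(\ket{\nu_1,\ldots,\nu_s}_P)=0$ for $s\ge 2$, that ${!}f(\ket{\nu_1,\ldots,\nu_s}_P)=\ket{f\nu_1,\ldots,f\nu_s}_{fP}$ for a linear map $f$, and that
\[
\Delta\big(\ket{\nu_1,\ldots,\nu_s}_P\big)=\sum_{\{1,\ldots,s\}=A\sqcup B}\ket{\nu_A}_P\otimes\ket{\nu_B}_P,
\]
the sum being over all ways of writing $\{1,\ldots,s\}$ as a disjoint union of two (possibly empty) subsets, with $\ket{\nu_A}_P=\ket{\nu_i:i\in A}_P$.

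Naturality, (D.1) and (D.3) are then immediate. For naturality, both ${!}f\circ D_V$ and $D_W\circ({!}f\otimes f)$ send $\ket{\nu_1,\ldots,\nu_s}_P\otimes\nu$ to $\ket{f\nu,f\nu_1,\ldots,f\nu_s}_{fP}$. For (D.1), $\ket{\nu,\nu_1,\ldots,\nu_s}_P$ always has at least one entry, so $w$ annihilates it. For (D.3), $d(\ket{\nu,\nu_1,\ldots,\nu_s}_P)$ is $\nu$ when $s=0$ and $0$ when $s\ge1$, which is exactly $a(w(\ket{\nu_1,\ldots,\nu_s}_P)\otimes\nu)$.

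Axiom (D.2) is a bookkeeping identity for $\Delta$. Index the entries of $\ket{\nu,\nu_1,\ldots,\nu_s}_P$ by $\{0,1,\ldots,s\}$, with $0$ the label of the new direction $\nu$. Then $\Delta(\ket{\nu,\nu_1,\ldots,\nu_s}_P)=\sum_{\{0,\ldots,s\}=A\sqcup B}\ket{\nu_A}_P\otimes\ket{\nu_B}_P$ splits according to whether $0\in B$ or $0\in A$. Writing $B=\{0\}\cup B'$ in the first case shows the sub-sum with $0\in B$ equals $(1\otimes D)(\Delta\otimes 1)(\ket{\nu_1,\ldots,\nu_s}_P\otimes\nu)$; writing $A=\{0\}\cup A'$ in the second case shows the sub-sum with $0\in A$ equals $(D\otimes1)(1\otimes\sigma)(\Delta\otimes1)(\ket{\nu_1,\ldots,\nu_s}_P\otimes\nu)$. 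Adding the two gives the product rule.

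The main work is (D.4), the chain rule, for which one needs the explicit formula for the comonad comultiplication: since $\delta(\ket{\emptyset}_P)=\ket{\emptyset}_{\ket{\emptyset}_P}$, a Fa\`a di Bruno computation gives
\[
\delta\big(\ket{\nu_1,\ldots,\nu_s}_P\big)=\sum_{\pi=\{B_1,\ldots,B_m\}}\ket{\ket{\nu_{B_1}}_P,\ldots,\ket{\nu_{B_m}}_P}_{\ket{\emptyset}_P},
\]
the sum being over set partitions of $\{1,\ldots,s\}$ into nonempty blocks (recorded in Appendix \ref{section:background}). Expanding both sides of (D.4) on $\ket{\nu_1,\ldots,\nu_s}_P\otimes\nu$: on the left, $\delta(\ket{\nu,\nu_1,\ldots,\nu_s}_P)$ is a sum over set partitions of $\{0,1,\ldots,s\}$, and isolating the block $\{0\}\cup C$ containing $0$ (with $C\subseteq\{1,\ldots,s\}$) rewrites it as a sum over pairs $(C,\rho)$ with $\rho$ a partition of $\{1,\ldots,s\}\setminus C$, the corresponding term being $\ket{\ket{\nu,\nu_C}_P,\ldots}_{\ket{\emptyset}_P}$. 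On the right, $(\Delta\otimes1)$ produces a sum over $\{1,\ldots,s\}=A\sqcup B$; applying $\delta$ to $\ket{\nu_A}_P$ and $D$ to $\ket{\nu_B}_P\otimes\nu$, then prepending the result $\ket{\nu,\nu_B}_P$ as a new entry via $D_{{!}V}$, yields a sum over pairs $(A,\rho)$ with $\rho$ a partition of $A$, with corresponding term $\ket{\ket{\nu,\nu_B}_P,\ldots}_{\ket{\emptyset}_P}$. Matching $B$ with $C$ and $A$ with $\{1,\ldots,s\}\setminus C$ gives a bijection of index sets under which the terms agree (using symmetry of the outer ket), so the two sides coincide. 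The only real obstacle is organizing this final combinatorial comparison cleanly and pinning down the normalization in the formula for $\delta$; everything else reduces to direct substitution into \eqref{defn:DV} and the structure-map formulas above.
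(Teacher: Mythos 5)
Your proposal is correct and follows essentially the same route as the paper: direct verification of (D.1)--(D.4) on the spanning elements $\ket{\nu_1,\ldots,\nu_s}_P\otimes\nu$, with the same split of $\Delta\ket{\nu,\nu_1,\ldots,\nu_s}_P$ according to which factor receives the new index $0$ for (D.2), and the same partition-counting argument for (D.4) (your ``isolate the block containing $0$'' is exactly the paper's surjection $\theta$ and its fibres). The only addition is your explicit check of naturality, which the paper leaves implicit.
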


We split the proof of the theorem into lemmas. We prefer to give the proofs without first choosing a basis of $V$, but if one is willing to do so, then the connection between these identities and the usual rules of calculus follows from writing the formula for the comultiplication $\Delta$ as a kind of Taylor expansion; see for example \cite[(B.65)]{seiler}. Throughout we use the ket notation of Definition \ref{defn:ket}.

\begin{lemma} (D.1) holds for $V$.
\end{lemma}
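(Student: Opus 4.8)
The plan is to reduce immediately to the explicit description of the Sweedler coalgebra recalled in Appendix~\ref{section:background}. The vector space $!V$ is spanned by the elements $\ket{\nu_1,\ldots,\nu_s}_P$ with $P,\nu_1,\ldots,\nu_s \in V$ and $s \geq 0$, so $!V \otimes V$ is spanned by the tensors $\ket{\nu_1,\ldots,\nu_s}_P \otimes \nu$; since both $D_V$ and $w$ are $k$-linear it suffices to evaluate $w \circ D_V$ on such a tensor. By Definition~\ref{defn:D} this value is
\[
w\big(D_V(\ket{\nu_1,\ldots,\nu_s}_P \otimes \nu)\big) = w\big(\ket{\nu,\nu_1,\ldots,\nu_s}_P\big),
\]
so the whole statement comes down to the formula for the weakening map.

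Now $w : !V \lto k$ is the counit of the coalgebra $!V$. On a group-like element a counit is forced to take the value $1$, which gives $w(\ket{\emptyset}_P) = 1$; more generally, in the explicit description $w$ annihilates every element $\ket{\mu_1,\ldots,\mu_t}_P$ with $t \geq 1$, that is, $w$ is the projection of $!V$ onto its ``constant'' ($s=0$) part. The element $\ket{\nu,\nu_1,\ldots,\nu_s}_P$ produced by $D_V$ has $s+1 \geq 1$ entries, hence lies in the kernel of $w$, and therefore $w \circ D_V = 0$. There is no genuine obstacle here — (D.1) is the assertion that ``the derivative of a constant is zero'' and becomes a one-line computation once the formula for the counit is in hand; the only point requiring a little care is to confirm that the stated formula for $w$ matches our normalisation of the notation $\ket{\nu_1,\ldots,\nu_s}_P$, which is recorded in the background material.
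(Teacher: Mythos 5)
Your proof is correct and is essentially the paper's own argument: both reduce to the formula $D_V(\ket{\nu_1,\ldots,\nu_s}_P \otimes \nu) = \ket{\nu,\nu_1,\ldots,\nu_s}_P$ and the fact that the counit $w$ vanishes on any ket with at least one entry. The extra remarks about linearity and spanning sets are fine but add nothing beyond what the paper leaves implicit.
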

\begin{proof}
This is clear, since the counit $\varepsilon: {!} V \lto k$ vanishes on $\ket{v_1,\ldots,v_s}_P$ if $s > 0$.
\end{proof}

\begin{lemma} (D.2) holds for $V$.
\end{lemma}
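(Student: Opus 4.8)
The plan is to verify (D.2) by a direct computation on the spanning elements $\ket{\nu_1,\ldots,\nu_s}_P \otimes \nu$ of $!V \otimes V$, using the explicit formula for the coproduct on the Sweedler coalgebra,
\[
\Delta\ket{\nu_1,\ldots,\nu_s}_P \;=\; \sum_{I \sqcup J = \{1,\ldots,s\}} \ket{\nu_I}_P \otimes \ket{\nu_J}_P\,,
\]
where $\nu_I$ denotes the subsequence of $(\nu_1,\ldots,\nu_s)$ indexed by $I$ (so the empty subsequence contributes the group-like $\ket{\emptyset}_P$). Since $D$ and $\Delta$ both leave the base point $P$ untouched and every map in sight is $k$-linear, it suffices to compare the two sides of (D.2) on these elements.

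First I would compute the left-hand side $\Delta \circ D$. Applying $D$ inserts $\nu$ at the front, giving $\ket{\nu,\nu_1,\ldots,\nu_s}_P$, and then $\Delta$ produces a sum over subsets of the index set $\{0,1,\ldots,s\}$, where $0$ labels the new entry $\nu$. Partitioning that sum according to whether $0$ lands in the left or the right tensorand gives exactly
\[
\sum_{I \sqcup J = \{1,\ldots,s\}} \ket{\nu,\nu_I}_P \otimes \ket{\nu_J}_P \;+\; \sum_{I \sqcup J = \{1,\ldots,s\}} \ket{\nu_I}_P \otimes \ket{\nu,\nu_J}_P\,.
\]

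Next I would evaluate the two terms on the right-hand side. For $(1 \otimes D)\circ(\Delta \otimes 1)$, the map $\Delta \otimes 1$ produces $\sum_{I\sqcup J}\ket{\nu_I}_P \otimes \ket{\nu_J}_P \otimes \nu$ and then $1 \otimes D$ yields $\sum_{I \sqcup J}\ket{\nu_I}_P \otimes \ket{\nu,\nu_J}_P$. For $(D \otimes 1)\circ(1 \otimes \sigma_{!V,V})\circ(\Delta \otimes 1)$, the swap turns $\sum_{I\sqcup J}\ket{\nu_I}_P \otimes \ket{\nu_J}_P \otimes \nu$ into $\sum_{I\sqcup J}\ket{\nu_I}_P \otimes \nu \otimes \ket{\nu_J}_P$, and applying $D$ to the first two factors gives $\sum_{I\sqcup J}\ket{\nu,\nu_I}_P \otimes \ket{\nu_J}_P$. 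Adding the two terms reproduces the left-hand side after relabelling the dummy subsets, which proves (D.2). The only real work is the combinatorial bookkeeping of the coproduct and keeping track of where the swap $\sigma_{!V,V}$ moves the argument $\nu$; there is no genuine obstacle here.
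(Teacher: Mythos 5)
Your proposal is correct and follows essentially the same route as the paper: both compute $\Delta D$ on a spanning element $\ket{\nu_1,\ldots,\nu_s}_P\otimes\nu$ and split the resulting sum over subsets of $\{0,1,\ldots,s\}$ according to whether the index $0$ labelling $\nu$ falls in the left or right tensor factor, matching the two terms of the Leibniz rule. The only cosmetic difference is that you evaluate both sides separately while the paper rewrites the left-hand side directly in terms of $D$ applied to the pieces of the coproduct.
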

\begin{proof}
Setting $v_0 = v$ we have (recall the notational conventions of Definition \ref{defn:ket})
\begin{align*}
\Delta D \big( \ket{v_1,\ldots,v_s}_P \otimes v \big) &= \Delta \ket{v, v_1, \ldots, v_s }_P\\
&= \sum_{I \subseteq \{0,1,\ldots,s\}} \ket{v_I}_P \otimes \ket{v_{I^c}}_P\\
&= \sum_{0 \in I} \ket{v_I}_P \otimes \ket{v_{I^c}}_P + \sum_{0 \notin I} \ket{v_I}_P \otimes \ket{v_{I^c}}_P \\
&= \sum_{J \subseteq \{1,\ldots,s\}}\ket{v, v_J}_P \otimes \ket{v_{J^c}}_P + \sum_{J \subseteq \{1,\ldots,s\}} \ket{v_J}_P \otimes \ket{v, v_{J^c}}_P \\
&= \sum_{J \subseteq \{1,\ldots,s\}} \Big\{ D\big( \ket{v_J}_P \otimes v \big) \otimes \ket{v_{J^c}}_P + \ket{v_J}_P \otimes D\big( \ket{v_{J^c}}_P \otimes v \big) \Big\}
\end{align*}
as claimed, where for $I \subseteq \{0,1,\ldots,s\}$ we write $I^c$ for $\{0,\ldots,s\} \setminus I$ and for $J \subseteq \{1,\ldots,s\}$, we write $J^c$ for $\{1,\ldots,s\} \setminus J$.
\end{proof}

\begin{lemma} (D.3) holds for $V$.
\end{lemma}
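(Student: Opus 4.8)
The plan is to verify (D.3) directly on the generating elements $\ket{\nu_1,\ldots,\nu_s}_P \otimes \nu$ of ${!} V \otimes V$, splitting into the cases $s = 0$ and $s \geq 1$. The only input needed beyond Definition \ref{defn:D} is the explicit description of the dereliction map $d: {!} V \lto V$ recalled in Appendix \ref{section:background}: it sends the group-like element $\ket{\emptyset}_Q$ to $Q$, sends a primitive $\ket{\omega}_Q$ to $\omega$, and annihilates $\ket{\omega_1,\ldots,\omega_t}_Q$ for $t \geq 2$; together with the fact already used in the proof of (D.1) that $w\ket{\omega_1,\ldots,\omega_t}_Q$ equals $1$ when $t = 0$ and $0$ when $t \geq 1$.

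For the left-hand side, $d D\big( \ket{\nu_1,\ldots,\nu_s}_P \otimes \nu \big) = d\ket{\nu,\nu_1,\ldots,\nu_s}_P$, and since $\ket{\nu,\nu_1,\ldots,\nu_s}_P$ has length $s+1$ this is $\nu$ when $s = 0$ and $0$ when $s \geq 1$. For the right-hand side, $a(w \otimes 1)\big( \ket{\nu_1,\ldots,\nu_s}_P \otimes \nu \big) = \big(w\ket{\nu_1,\ldots,\nu_s}_P\big)\nu$, which is likewise $\nu$ when $s = 0$ and $0$ when $s \geq 1$. The two expressions agree in both cases, which is (D.3).

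I do not expect any real obstacle: the entire content is the formula for $d$ on the Sweedler basis, after which the check is a one-line case split on the length $s$. Conceptually this is just the assertion that the derivative of the linear map $d$ is the constant family equal to the identity, which is precisely what (D.3) encodes.
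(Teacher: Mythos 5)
Your proof is correct and is essentially identical to the paper's: both sides are evaluated on the basis elements $\ket{\nu_1,\ldots,\nu_s}_P \otimes \nu$ using the explicit formulas for $d$ and $w$, yielding $\delta_{s=0}\,\nu$ in each case. The paper simply compresses your case split into the Kronecker-delta notation $\delta_{s=0}$.
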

\begin{proof}
We have
\[
d D\big( \ket{v_1,\ldots,v_s}_P \otimes v \big) = d\ket{v,v_1,\ldots,v_s}_P = \delta_{s=0} v
\]
while
\[
a( \varepsilon \otimes 1 )\big( \ket{v_1,\ldots,v_s}_P \otimes v \big) = \varepsilon\ket{v_1,\ldots,v_s}_P \cdot v = \delta_{s=0} v\,.
\]
\end{proof}

\begin{lemma} (D.4) holds for $V$.
\end{lemma}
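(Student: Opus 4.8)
The plan is to verify (D.4) by evaluating both composites on a generating element $\ket{\nu_1,\ldots,\nu_s}_P \otimes \nu$ of ${!}V \otimes V$ and then matching the two resulting sums in ${!}{!}V$ via a combinatorial bijection. The one external ingredient I need is the explicit formula for the comonad comultiplication $\delta\colon {!}V \to {!}{!}V$ (reviewed in Appendix \ref{section:background}; alternatively it can be derived on the spot from the group-like identity $\delta\ket{\emptyset}_v = \ket{\emptyset}_{\ket{\emptyset}_v}$ by multilinearity and the Fa\`{a} di Bruno formula): writing $\Pi(T)$ for the set of set partitions of a finite set $T$ and, for a block $B = \{i_1 < \cdots < i_m\}$, abbreviating $\ket{\nu_B}_P := \ket{\nu_{i_1},\ldots,\nu_{i_m}}_P$, one has
\[
\delta\ket{\nu_1,\ldots,\nu_s}_P \;=\; \sum_{\pi \in \Pi(\{1,\ldots,s\})} \ket{\,(\ket{\nu_B}_P)_{B \in \pi}\,}_{\ket{\emptyset}_P}\,,
\]
so that in particular every summand is supported at $\ket{\emptyset}_P \in {!}V$.

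Setting $\nu_0 := \nu$, the left-hand side of (D.4) on the generator is $\delta D(\ket{\nu_1,\ldots,\nu_s}_P \otimes \nu) = \delta\ket{\nu_0,\nu_1,\ldots,\nu_s}_P$, which by the formula above equals $\sum_{\pi \in \Pi(\{0,1,\ldots,s\})} \ket{(\ket{\nu_B}_P)_{B\in\pi}}_{\ket{\emptyset}_P}$. For the right-hand side I would apply $\Delta \otimes 1$, then $\delta \otimes D$ (using Definition \ref{defn:D}), then $D_{{!}V}$: the first map gives $\sum_{I \subseteq \{1,\ldots,s\}} \ket{\nu_I}_P \otimes \ket{\nu_{I^c}}_P \otimes \nu$ with $I^c := \{1,\ldots,s\}\setminus I$; the second gives $\sum_I \delta\ket{\nu_I}_P \otimes \ket{\nu,\nu_{I^c}}_P$; and expanding $\delta\ket{\nu_I}_P$ (a sum over $\rho \in \Pi(I)$) and then using $D_{{!}V}(\ket{x_1,\ldots,x_m}_Q \otimes x) = \ket{x,x_1,\ldots,x_m}_Q$ yields
\[
\sum_{I \subseteq \{1,\ldots,s\}}\ \sum_{\rho \in \Pi(I)} \ket{\,\ket{\nu,\nu_{I^c}}_P,\ (\ket{\nu_B}_P)_{B\in\rho}\,}_{\ket{\emptyset}_P}\,.
\]

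To conclude I would set up the bijection between $\Pi(\{0,1,\ldots,s\})$ and the set of pairs $(I,\rho)$ with $I \subseteq \{1,\ldots,s\}$ and $\rho \in \Pi(I)$ given by $\pi \mapsto (I,\rho)$, where $\{0\}\cup I^c$ is the unique block of $\pi$ containing $0$ and $\rho = \pi \setminus \{\{0\}\cup I^c\}$, with inverse $(I,\rho) \mapsto \rho \cup \{\{0\}\cup I^c\}$. Because the tensors inside $\ket{-}_P$ and inside $\ket{-}_{\ket{\emptyset}_P}$ are symmetric (the order of entries is irrelevant), the block $\{0\}\cup I^c$ of $\pi$ contributes exactly $\ket{\nu_0,\nu_{I^c}}_P = \ket{\nu,\nu_{I^c}}_P$ while the remaining blocks contribute the family $(\ket{\nu_B}_P)_{B\in\rho}$; hence the two displayed sums agree term by term and (D.4) follows.

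I do not expect a real obstacle here: the mathematical content is precisely the partition identity ``extract the block containing the new direction $\nu$'', which is the combinatorial shadow of the chain rule. The only thing requiring care is the bookkeeping --- noting that $\delta\ket{\nu_I}_P$ is indexed by partitions of $I$ (not of $\{1,\ldots,s\}$), and that $D_{{!}V}$ adjoins $\ket{\nu,\nu_{I^c}}_P$ as a brand-new block rather than merging it into an existing one. If one would rather sidestep quoting the closed form for $\delta$, the identity $\delta\ket{\emptyset}_v = \ket{\emptyset}_{\ket{\emptyset}_v}$ together with Fa\`{a} di Bruno gives the same computation.
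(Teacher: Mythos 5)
Your proof is correct and follows essentially the same route as the paper's: evaluate both composites on a generator, expand $\delta$ as a sum over set partitions, and match terms by extracting the block containing the new index $0$. The only difference is organizational — you package the matching as a clean bijection $\pi \mapsto (I,\rho)$, whereas the paper uses the surjection $\theta$ that forgets $0$ and enumerates its fibers, treating the $s=0$ and $I=\emptyset$ contributions separately; your version subsumes these cases uniformly.
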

\begin{proof}
The trivial case is, with $Q = \ket{\emptyset}_P$,
\[
\delta D( \ket{\emptyset}_P \otimes v ) = \delta\ket{v}_P = \big|\, \ket{v}_P \big\rangle_Q
\]
and on the other side
\begin{align*}
D_{{!} V}( \delta \otimes D )( \Delta \otimes 1)( \ket{\emptyset}_P \otimes v ) &= D_{{!} V}( \delta \otimes D)( \ket{\emptyset}_P \otimes \ket{\emptyset}_P \otimes v )\\
&= D_{{!} V}( \ket{\emptyset}_Q \otimes \ket{v}_P )\\
&= \big|\, \ket{v}_P \big\rangle_Q\,.
\end{align*}
Now we consider the case $s > 0$. Putting $v_0 = v$ we have
\[
\delta D\big(  \ket{v_1,\ldots,v_s}_P \otimes v \big) = \sum_{X \in \cat{P}_{\{0,1,\ldots,s\}}} \Big| \prod_{x \in X} \ket{v_{x}}_P \Big\rangle_Q
\]
where for a partition $X = \{ x_1,\ldots,x_t \}$ the notation means multiplication in the symmetric algebra, that is
\[
\Big| \prod_{x \in X} \ket{v_{x}}_P \Big\rangle_Q = \Big|\, \ket{v_{x_1}}_P, \ldots, \ket{v_{x_t}}_P \Big\rangle_Q\,. 
\]
There is a surjective function
\begin{gather*}
\theta: \cat{P}_{\{0,1,\ldots,s\}} \lto \cat{P}_{\{1,\ldots,s\}}\\
\theta( X ) = \big\{ x \setminus \{0\} \l x \in X \text{ and } x \neq \{0\} \big\}
\end{gather*}
and given a partition $X = \{ x_1,\ldots,x_t \}$ of $\{1,\ldots,s\}$,
\begin{align*}
\theta^{-1}(X) &= \Big\{ \{ x_1 \cup \{0\}, x_2, \ldots, x_t \},\\
&\quad\{ x_1, x_2 \cup \{0\}, \ldots, x_t \},\\
&\quad\ldots,\\
&\quad\{ x_1, x_2, \ldots, x_{t-1}, x_t \cup \{0\} \}\\
&\quad\{ x_1, x_2, \ldots, x_t, \{0\} \} \Big\}\,.
\end{align*}
With this in mind we have, writing $\prod_{x' \neq x} \ket{v_{x'}}_P$ for the product in the symmetric algebra of $\ket{v_{x'}}_P$ as $x'$ ranges over elements of $X \setminus \{x\}$, that $\delta D\big(  \ket{v_1,\ldots,v_s}_P \otimes v \big)$ is equal to
\be\label{eq_written72}
\sum_{X \in \cat{P}_{\{1,\ldots,s\}}} \Big\{ \sum_{x \in X} \Big|\, \ket{v, v_x}_P\,, \prod_{x' \neq x} \ket{v_{x'}}_P\Big\rangle_Q + \Big|\, \ket{v}_P\,, \prod_{x \in X} \ket{v_x}_P \Big\rangle_Q \Big\}\,.
\ee
Note that when $X = \big\{ \{ 1,\ldots,s \} \big\}$ the summand is
\[
\Big|\, \ket{v,v_1,\ldots,v_s}_P \Big\rangle_Q + \Big|\, \ket{v}_P, \ket{v_1,\ldots,v_s}_P \Big\rangle_Q\,.
\]
On the other hand, the right hand side \eqref{eq:D4final} of the (D.4) identity is
\begin{align*}
&D_{{!} V}(\delta \otimes D)(\Delta \otimes 1)\big(  \ket{v_1,\ldots,v_s}_P \otimes v \big)\\
&= \sum_{I \subseteq \{1,\ldots,s\}} D_{{!}V}(\delta \otimes D)\Big( \ket{v_I}_P \otimes \ket{v_{I^c}}_P \otimes v \Big)\\
&= \sum_{I \subseteq \{1,\ldots,s\}} D_{{!} V}\Big( \delta \ket{v_I}_P \otimes \ket{v, v_{I^c}}_P \Big)\\
&= D_{{!} V}\Big( \delta\ket{\emptyset}_P \otimes \ket{v, v_1, \ldots, v_s }_P \Big)\\
&\qquad + \sum_{\emptyset \subset I} \sum_{Y \in \cat{P}_I} D_{{!}V}\Big( \big| \prod_{y \in Y} \ket{v_y}_P \big\rangle_Q \otimes \ket{v, v_{I^c}}_P \Big)
\end{align*}
\begin{align*}
&= \Big|\, \ket{ v, v_1,\ldots,v_s }_P \Big\rangle_Q + \sum_{\emptyset \subset I}\sum_{Y \in \cat{P}_I}  \Big|\, \ket{v, v_{I^c}}_P\,, \prod_{y \in Y} \ket{v_y}_P \Big\rangle_Q\\
&= \Big|\, \ket{ v, v_1,\ldots,v_s }_P \Big\rangle_Q + \sum_{Y \in \cat{P}_{\{1,\ldots,s\}}} \Big|\, \ket{v}_P\,, \prod_{y \in Y} \ket{v_y}_P \Big\rangle_Q\\
&\qquad+ \sum_{\emptyset \subset I \subset \{1,\ldots,s\}} \sum_{Y \in \cat{P}_I} \Big|\, \ket{v, v_{I^c}}_P\,, \prod_{y \in Y} \ket{v_y}_P \Big\rangle_Q
\end{align*}
which matches \eqref{eq_written72} since the last sum can be rewritten as
\[
\sum_{\substack{X \in \cat{P}_{\{1,\ldots,s\}}\\ \text{ with } |X| > 1}} \sum_{x \in X} \Big|\, \ket{v, v_x}_P\,, \prod_{x' \neq x} \ket{v_{x'}}_P \Big\rangle_Q\,.
\]
\end{proof}

Together the previous lemmas complete the proof of Theorem \ref{main_theorem}.

\begin{corollary}\label{corollary:diffcat} $\cat{V}$ is a differential category.
\end{corollary}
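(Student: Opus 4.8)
The plan is to obtain this as an immediate consequence of Theorem~\ref{main_theorem}. By \cite[Proposition 2.6]{blutecs}, a symmetric monoidal category equipped with a coalgebra modality $({!}, \delta, d, \Delta, w)$ together with a deriving transformation is a differential category, so the only thing to do is to check that $\cat{V}$, carrying its Sweedler-semantics coalgebra modality recalled in Appendix~\ref{section:background}, admits a deriving transformation. The candidate is the family $D_V$ of Definition~\ref{defn:D}: by the four lemmas proved above it satisfies axioms (D.1)--(D.4), and by the formula \eqref{defn:DV} it indeed takes values in ${!} V$, as required of a deriving transformation.

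The remaining ingredient is naturality of $D_V$ in $V$, which I would verify directly. For a linear map $f : V \lto W$, the functorial action of the comonad sends $\ket{\nu_1,\ldots,\nu_s}_P \in {!} V$ to $\ket{f\nu_1,\ldots,f\nu_s}_{f(P)} \in {!} W$ (Appendix~\ref{section:background}), so that
\[
D_W \circ ({!} f \otimes f)\big( \ket{\nu_1,\ldots,\nu_s}_P \otimes \nu \big) \;=\; \ket{f\nu, f\nu_1,\ldots,f\nu_s}_{f(P)} \;=\; {!} f \circ D_V\big( \ket{\nu_1,\ldots,\nu_s}_P \otimes \nu \big)\,.
\]
Since the elements $\ket{\nu_1,\ldots,\nu_s}_P \otimes \nu$ span ${!} V \otimes V$, the two linear maps $D_W \circ ({!} f \otimes f)$ and ${!} f \circ D_V$ coincide, which establishes naturality. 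With this in hand $D_V$ is a deriving transformation, and the corollary follows by \cite[Proposition 2.6]{blutecs}.

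I do not expect a genuine obstacle here: Theorem~\ref{main_theorem} already does all of the work, and what remains is bookkeeping. The two points worth stating with a little care are that the notion of differential category being invoked is that of \cite{blutecs} — a coalgebra modality plus a natural family satisfying (D.1)--(D.4), with no extra monoidal-coherence axiom — and that the naturality argument uses only the explicit description of ${!} f$ on the spanning elements $\ket{\nu_1,\ldots,\nu_s}_P$. Once these are fixed, the implication ``deriving transformation $\Rightarrow$ differential category'' is exactly \cite[Proposition 2.6]{blutecs} and nothing further is needed.
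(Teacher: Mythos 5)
Your proposal is correct and follows the same route as the paper, which likewise deduces the corollary immediately from Theorem~\ref{main_theorem} via \cite[Proposition 2.6]{blutecs}. Your explicit verification of naturality of $D_V$ in $V$ is a welcome supplement --- the paper's lemmas only check (D.1)--(D.4) and leave naturality implicit --- and your computation on the spanning elements $\ket{\nu_1,\ldots,\nu_s}_P \otimes \nu$ is the right way to settle it.
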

\begin{proof}
This follows from \cite[Proposition 2.6]{blutecs}.
\end{proof}

We are grateful to the referee who pointed out to us (see \cite[\S4.1]{cite4}) that an additional axiom is now routinely added to deriving transformations:

\begin{lemma}
For any vector space $V$, the diagram
\[
\begin{tikzcd}[column sep=large]
{!}V \otimes V \otimes V \arrow[dd, swap,"{D_V \otimes 1}"]\arrow[r, "{1 \otimes \sigma_{V,V}}"]& {!}V \otimes V \otimes V \arrow[r, "{D_V \otimes 1}"]& {!}V \otimes V\arrow[dd, "{D_V}"] \\ \\
{!}V \otimes V \arrow[rr, swap, "{D_V}"]& & {!}V
\end{tikzcd}
\]
commutes.
\end{lemma}

\begin{proof}
We compute:
\begin{align*}
D_V(D_V \otimes 1)(1 \otimes \sigma)\left(\ket{v_1, ..., v_s}_P \otimes v \otimes v'\right)
&= D_V(D_V \otimes 1)\left(\ket{v_1, ..., v_s}_P \otimes v' \otimes v\right)
\\&= D_V\left(\ket{v', v_1, ..., v_s}_P \otimes v\right)
\\&= \ket{v, v', v_1, ..., v_s}_P
\\&= \ket{v', v, v_1, ..., v_s}_P
\\&= \ket{v, v_1, ..., v_s}_P \otimes v'
\\&= D_V\left(\ket{v, v_1, ..., v_s}_P \otimes v'\right)
\\&= D_V(D_V \otimes 1)\left(\ket{v_1, ..., v_s}_P \otimes v \otimes v'\right).
\end{align*}
\end{proof}

\begin{remark}\label{remark:justify}
From the geometric point of view (see Section \ref{section:residues}) the definition of $D_V$ is justified as follows: adding $v$ to a ket contributes, inside a residue, a partial derivative in the direction $v$ by \eqref{eq:residue_differentiates2}. To state this more formally, let $\cat{D}(R)$ denote the algebra of $k$-linear differential operators on $R = \Sym(V^*)$ and observe that there is a canonical map $\iota: V \lto \cat{D}(R)$ sending $v \in V$ to the differential operator $\partial_v$ and we have a map
\be
\xymatrix@C+2pc{
H^n_P(R, \Omega^n_{R/k}) \otimes V \ar[r]^-{1 \otimes \iota} & H^n_P(R, \Omega^n_{R/k}) \otimes \cat{D}(R) \ar[r]^-{a} & H^n_P(R, \Omega^n_{R/k})
}
\ee
where $a$ denotes the action of the ring $\cat{D}(R)$ on local cohomology \cite[Lemma 2.7]{murfet_coalg}. These maps assemble in the colimit \eqref{eq:otherbangV} to give \eqref{defn:DV}.
\end{remark}

\subsection{Codereliction, cocontraction, coweakening}\label{section:coder}

An alternative formulation of the differential structure in differential linear logic is in terms of \emph{codereliction, cocontraction} and \emph{coweakening} maps; see \cite{fiore} and \cite[\S 5.1]{blutecon}. This has the advantage of providing an appealing symmetry to the formulation of the syntax. In this section we briefly sketch the definition of these maps in the Sweedler semantics. 
\vspace{0.2cm}

First we recall the canonical commutative Hopf structure on ${!} V$ of \cite[\S 6.4]{sweedler}.

\begin{lemma}\label{lemma:bangisadditive} Given vector spaces $V_1,V_2$ then there is an isomorphism of coalgebras
\begin{gather*}
\Theta: {!} V_1 \otimes {!} V_2 \lto {!} (V_1 \oplus V_2)\,,\\
\ket{v_1,\ldots,v_s}_P \otimes \ket{w_1,\ldots,w_t}_Q \longmapsto \ket{v_1,\ldots,v_s,w_1,\ldots,w_t}_{(P,Q)}\,.
\end{gather*}
\end{lemma}
\begin{proof}
The existence of this isomorphism is due to Sweedler, for the explicit calculation of the map see \cite[Remark 2.19]{sweedler}.
\end{proof}

Using this and the definitions in \cite{sweedler} it is easy to check that the product $\nabla$ is
\begin{gather*}
\nabla: {!} V \otimes {!} V \lto {!} V\,,\\
\ket{v_1,\ldots,v_s}_P \otimes \ket{w_1,\ldots,w_t}_Q \longmapsto \ket{v_1,\ldots,v_s,w_1,\ldots,w_t}_{P+Q}\,,
\end{gather*}
while the antipode $S$ is
\begin{gather*}
S: {!} V \lto {!} V\,,\\
\ket{v_1,\ldots,v_s}_P \longmapsto \ket{-v_1,\ldots,-v_s}_{-P}
\end{gather*}
and the unit $u: k \lto {!} V$ is $u(1) = \ket{\emptyset}_0$. By \cite[Theorem 6.4.8]{sweedler} these maps make ${!} V$ into a commutative (and cocommutative) Hopf algebra. In the terminology of \cite{ehrhard-survey} the map $\nabla$ is the \emph{cocontraction} map and $u$ is the \emph{coweakening} map (the antipode seems not to have a formal role in differential linear logic). Finally,

\begin{definition} The \emph{codereliction} $\bar{d}$ is the composite
\[
\xymatrix@C+2pc{
V \cong k \otimes V \ar[r]^-{u \otimes 1} & {!} V \otimes V \ar[r]^-{D} & {!} V
}
\]
which is given by $v \mapsto \ket{v}_0$.
\end{definition}

Note that we can recover $D$ as
\begin{gather*}
\xymatrix@C+2pc{
{!} V \otimes V \ar[r]^-{1 \otimes \bar{d}} & {!} V \otimes {!} V \ar[r]^-{\nabla} & {!} V
}\\
\ket{v_1,\ldots,v_s}_P \otimes v \mapsto \ket{v_1,\ldots,v_s}_P \otimes \ket{v}_0 \mapsto \ket{v,v_1,\ldots,v_s}_P\,.
\end{gather*}
It seems more convenient to model differentiation syntactically using the codereliction, cocontraction and coweakening maps, rather than the deriving transformation $D$ itself. We briefly sketch how this works, following \cite{ehrhard-survey}. In the sequent calculus for linear logic one introduces three new deduction rules ``dual'' to dereliction, contraction and weakening:
\[
\AxiomC{$\Gamma, {!}A, \Delta \vdash B$}
\LeftLabel{(Codereliction): }
\RightLabel{\scriptsize coder}
\UnaryInfC{$\Gamma, A, \Delta \vdash B$}
\DisplayProof
\]
\[
\AxiomC{$\Gamma, !A, \Delta \vdash B$}
\LeftLabel{(Cocontraction): }
\RightLabel{\scriptsize coctr}
\UnaryInfC{$\Gamma, !A, !A, \Delta \vdash B$}
\DisplayProof
\]
\[
\AxiomC{$\Gamma, !A, \Delta \vdash B$}
\LeftLabel{(Coweakening): }
\RightLabel{\scriptsize coweak}
\UnaryInfC{$\Gamma, \Delta \vdash B$}
\DisplayProof
\]
together with new cut-elimination rules \cite[\S 1.4.3]{ehrhard-survey}. 

\begin{definition}\label{defn:derivative_proof} Given a proof $\pi$ of ${!} A \vdash B$ in linear logic, the \emph{derivative} $\partial \pi$ is the proof
\be
\begin{mathprooftree}
\AxiomC{$\pi$}
\noLine\UnaryInfC{$\vdots$}
\def\extraVskip{5pt}
\noLine\UnaryInfC{${!} A \vdash B$}
\RightLabel{\scriptsize coctr}
\UnaryInfC{${!} A, {!} A \vdash B$}
\RightLabel{\scriptsize coder}
\UnaryInfC{${!} A, A \vdash B$}
\end{mathprooftree}
\ee
whose denotation is, by our earlier remark, the composite
\be
\xymatrix@C+2pc{
{!} \den{A} \otimes \den{A} \ar[r]^-{D} & {!} \den{A} \ar[r]^-{\den{\pi}} & \den{B}\,.
}
\ee
\end{definition}

\begin{remark}\label{remark:totem}
Given $\pi$ as above we have the function \cite[Definition 5.10]{murfet_ll}
\be
\den{\pi}_{nl}: \den{A} \lto \den{B}\,, \qquad P \longmapsto \den{\pi}\ket{\emptyset}_P\,,
\ee
and for $P, v \in \den{A}$ we interpret the vector
\be
\den{\pi}D( \ket{\emptyset}_P \otimes v ) = \den{\pi}\ket{v}_P \in \den{B}
\ee
as the derivative of $\den{\pi}_{nl}$ at the point $P$ in the direction $v$. Here we implicitly identify $\den{A}$ with the tangent space $T_P\den{A}$ and $\den{B}$ with the tangent space $T_{\den{\pi}_{nl}(P)} \den{B}$. This interpretation is justified by the following elaboration of the remarks in the Introduction. 

Let $\operatorname{prom}(\pi)$ denote the proof which is the promotion of $\pi$, which has for its denotation the unique morphism of coalgebras $\den{ \operatorname{prom}(\pi) }: {!} \den{A} \lto {!} \den{B}$ with $d \circ \den{ \operatorname{prom}(\pi) } = \den{\pi}$. Let $\gamma: (k[\varepsilon]/\varepsilon^2)^* \lto {!} \den{A}$ be the morphism of coalgebras as in Section \ref{section:tangent_vectors} corresponding to the tangent vector $v$ at a point $P \in \den{A}$. Then the morphism of coalgebras
\be\label{eq:prompiafterpsi}
\den{ \operatorname{prom}(\pi) } \circ \gamma : (k[\varepsilon]/\varepsilon^2)^* \lto {!} \den{B}
\ee
has the following values, writing $Q = \den{\pi}_{nl}(P)$, we have by \cite[Theorem 2.22]{murfet_coalg}
\begin{align*}
\den{ \operatorname{prom}(\pi) }\gamma(1) &= \den{ \operatorname{prom}(\pi) } \ket{\emptyset}_P = \ket{\emptyset}_Q\,,\\
\den{ \operatorname{prom}(\pi) }\gamma(\varepsilon^*) &= \den{ \operatorname{prom}(\pi) } \ket{v}_P = \Big|\, \den{\pi}\ket{v}_P \Big\rangle_Q\,.
\end{align*}
Under the bijection of Section \ref{section:tangent_vectors} the morphism of coalgebras \eqref{eq:prompiafterpsi} therefore corresponds to the tangent vector $\den{\pi}\ket{v}_P \in \den{B}$ at $Q$.
\end{remark}

It is easy using the formulas for $\nabla, D$ to check that the $\nabla$-rule of \cite[\S 4.3]{blutecs} is satisfied:

\begin{lemma}\label{lemma_nablarule} The diagram
\be
\xymatrix@C+2pc@R+1.5pc{
V \otimes {!} V \otimes {!} V \ar[d]_-{\sigma_{V,{!}V} \otimes 1} \ar[r]^-{1 \otimes \nabla} & V \otimes {!} V \ar[r]^-{\sigma_{V,{!}V}} & {!} V \otimes V \ar[d]^-{D}\\
{!} V \otimes V \otimes {!} V \ar[r]_-{D \otimes 1} & {!} V \otimes {!} V \ar[r]_-{\nabla} & {!} V
}
\ee
commutes.
\end{lemma}

This, together with \cite[Theorem 4.12]{blutecs}, shows that $\cat{V}$ with the comonad ${!}$ and deriving transformation $D$ is a model of the differential calculus in the sense of \cite[Definition 4.11]{blutecs}.

\begin{corollary} $\cat{V}$ is a categorical model of the differential calculus.
\end{corollary}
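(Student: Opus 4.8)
The statement follows by assembling structure that has already been established, so the proof will be short; the plan is to verify that $\cat{V}$, with the coalgebra modality $({!}, \delta, d, \Delta, w)$, the Hopf algebra structure $(\nabla, u, S)$ on each ${!}V$, and the deriving transformation $D$, meets all the hypotheses of \cite[Definition 4.11]{blutecs}, and then to invoke \cite[Theorem 4.12]{blutecs}.

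First I would recall that $\cat{V}$ is the standard model of intuitionistic linear logic in vector spaces \cite{hyland, murfet_ll}, so that ${!}$ is a monoidal coalgebra modality: the Seely isomorphisms ${!}(V \oplus W) \cong {!}V \otimes {!}W$ and ${!}\,0 \cong k$ are realised concretely by the map $\Psi$ of Section \ref{section:coder} and by the unit $u$. Next, by \cite[\S 6.4]{sweedler} each ${!}V$ carries the commutative Hopf algebra structure $(\nabla, u, S)$, natural in $V$ and compatible with $(\Delta, w)$; on the additive category $\cat{V}$ this is exactly the data of a bialgebra (storage) modality in the sense of \cite[\S 4]{blutecs}. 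Finally, $\cat{V}$ is a differential category by the Corollary above, with deriving transformation $D$, and the preceding Lemma verifies the $\nabla$-rule of \cite[\S 4.3]{blutecs}, i.e. the compatibility of $D$ with cocontraction. These are precisely the data and axioms of a model of the differential calculus in the sense of \cite[Definition 4.11]{blutecs}, so \cite[Theorem 4.12]{blutecs} applies and yields the claim.

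The only points requiring care are bookkeeping ones: matching our explicit structure maps $\Psi, \nabla, u, S, \Delta, w, D$ with the coherence conditions in \cite{blutecs}. Everything except the $\nabla$-rule is either part of the statement that $\cat{V}$ models linear logic (recorded in \cite{murfet_ll, sweedler}) or immediate from the explicit formulas --- for instance, $k$-linearity of $D_V$ supplies additivity in the differentiated variable for free. I therefore anticipate no genuine obstacle; the one piece of new input, the $\nabla$-rule, is the Lemma already proved.
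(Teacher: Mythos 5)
Your proposal is correct and follows the same route as the paper: the corollary is obtained by combining the $\nabla$-rule Lemma with \cite[Theorem 4.12]{blutecs}, the remaining hypotheses of \cite[Definition 4.11]{blutecs} being supplied by the already-established coalgebra modality, the Sweedler Hopf structure $(\nabla, u, S)$, and the differential category structure. The extra bookkeeping you spell out (Seely isomorphisms, bialgebra modality) is exactly what the paper leaves implicit.
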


\section{Examples}\label{section:examples}

In this section we give various examples of proofs $\pi$ and the derivatives $\den{\pi} \circ D$ of their denotations, according to Definition \ref{defn:derivative_proof}. The encoding of integers and binary sequences in linear logic is based on the following encoding of the composition rule.

\begin{definition} For any formula $A$ let $C^1_A$ denote the proof
\begin{center}
\AxiomC{}
\UnaryInfC{$A \vdash A$}
\AxiomC{}
\UnaryInfC{$A \vdash A$}
\RightLabel{\scriptsize$\multimap L$}
\BinaryInfC{$A, A \multimap A \vdash A$}
\DisplayProof
\end{center}
We define recursively for $n > 1$ a proof $C^n_A$ of $A, (A \multimap A)^n \vdash A$, where $(A \multimap A)^n$ denotes a sequence of $n$ copies of $A \multimap A$, to be
\begin{center}
\AxiomC{}
\UnaryInfC{$A \vdash A$}
\AxiomC{$C^{n-1}_A$}
\noLine\UnaryInfC{$\vdots$}
\noLine\UnaryInfC{$A, (A \multimap A)^{n-1} \vdash A$}
\RightLabel{\scriptsize$\multimap L$}
\BinaryInfC{$A, (A \multimap A)^n \vdash A$}
\DisplayProof
\end{center}
\end{definition} 

\begin{definition} For $n \ge 1$ let $\comp^n_A$ denote the proof
\begin{center}
\AxiomC{$C^{n}_A$}
\noLine\UnaryInfC{$\vdots$}
\noLine\UnaryInfC{$A, (A \multimap A)^{n} \vdash A$}
\RightLabel{\scriptsize$\multimap R$}
\UnaryInfC{$(A \multimap A)^n \vdash A \multimap A$}
\DisplayProof
\end{center}
We define $\comp^0_A$ to be the proof
\begin{center}
\AxiomC{}
\UnaryInfC{$A \vdash A$}
\RightLabel{\scriptsize$\multimap R$}
\UnaryInfC{$\vdash A \multimap A$}
\DisplayProof
\end{center}
\end{definition}

\begin{remark}
If $V = \den{A}$ and $\alpha_i \in \den{A \multimap A} = \End_k(V)$ for $1 \le i \le n$ then
\be\label{order_comp}
\den{\comp^n_A}( \alpha_1 \otimes \cdots \otimes \alpha_n ) = \alpha_n \circ \cdots \circ \alpha_1\,,
\ee
while $\den{\comp^0_A} = 1_V$. Note the reversed ordering on the right hand side!
\end{remark}

\subsection{Church numerals}\label{section:church}

\begin{definition} The type of \emph{integers on $A$} \cite[\S 5.3.2]{girard_llogic} is:
\[
\inta_A = {!}( A \multimap A ) \multimap (A \multimap A)\,.
\]
For $n \ge 0$ we define the Church numeral $\underline{n}_A$ to be the proof
\begin{center}
\AxiomC{$\comp^n_A$}
\noLine\UnaryInfC{$\vdots$}
\noLine\UnaryInfC{$(A \multimap A)^{n} \vdash A \multimap A$}
\doubleLine\RightLabel{\scriptsize $n \times$ der}
\UnaryInfC{${!}(A \multimap A)^n \vdash A \multimap A$}
\doubleLine\RightLabel{\scriptsize $n \times$ ctr}
\UnaryInfC{${!}(A \multimap A) \vdash A \multimap A$}
\RightLabel{\scriptsize $\multimap R$}
\UnaryInfC{$\vdash \inta_A$}
\DisplayProof
\end{center}
Generally we omit the final step, since it is irrelevant semantically. In the case $n = 0$ the formula ${!}(A \multimap A)$ is introduced on the left by a weakening rule.
\end{definition}

\begin{example} The proof $\underline{2}_A$ (see e.g. \cite[Example 5.9]{murfet_ll}) is
\begin{center}
\AxiomC{}
\UnaryInfC{$A \vdash A$}
\AxiomC{}
\UnaryInfC{$A \vdash A$}
\AxiomC{}
\UnaryInfC{$A \vdash A$}
\RightLabel{\scriptsize$\multimap L$}
\BinaryInfC{$A, A \multimap A \vdash A$}
\RightLabel{\scriptsize$\multimap L$}
\BinaryInfC{$A, A \multimap A, A \multimap A \vdash A$}
\RightLabel{\scriptsize$\multimap R$}
\UnaryInfC{$A \multimap A, A \multimap A \vdash A \multimap A$}
\RightLabel{\scriptsize der}
\UnaryInfC{$!( A \multimap A ), A \multimap A \vdash A \multimap A$}
\RightLabel{\scriptsize der}
\UnaryInfC{$!( A \multimap A), !(A \multimap A) \vdash A \multimap A$}
\RightLabel{\scriptsize ctr}
\UnaryInfC{$!( A \multimap A) \vdash A \multimap A$}
\DisplayProof
\end{center}
\end{example}

From now on $A$ is fixed and we write $\underline{n}$ for $\underline{n}_A$. Let $V = \den{A}$ so $\den{A \multimap A} = \End_k(V)$. In the notation of Remark \ref{remark:totem}, there is a function
\be
\den{\underline{n}}_{nl}: \End_k(V) \lto \End_k(V)\,.
\ee

\begin{lemma} For $n \ge 0$ and $\alpha \in \End_k(V)$, we have $\den{\underline{n}}\ket{\emptyset}_\alpha = \alpha^n$ so $\den{\underline{n}}_{nl}(\alpha) = \alpha^n$.
\end{lemma}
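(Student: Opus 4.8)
The plan is to unwind the denotation of $\underline{n}$ through the layers of its construction, matching each syntactic rule against the corresponding structure map in $\cat{V}$. First I would recall that $\den{\underline{n}}$ is a linear map $\den{{!}(A \multimap A)} \lto \den{A \multimap A}$, i.e.\ ${!}\End_k(V) \lto \End_k(V)$, obtained as a composite: the denotation of $\comp^n_A$ (which is $\den{\comp^n_A}(\alpha_1 \otimes \cdots \otimes \alpha_n) = \alpha_n \circ \cdots \circ \alpha_1$ by \eqref{order_comp}), precomposed with $n$ dereliction maps $d: {!}\End_k(V) \lto \End_k(V)$ into each of the $n$ tensor factors, and then precomposed with the iterated contraction $\Delta^{(n-1)}: {!}\End_k(V) \lto {!}\End_k(V)^{\otimes n}$. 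So the claim reduces to computing this composite on the vector $\ket{\emptyset}_\alpha \in {!}\End_k(V)$.

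The key step is evaluating $\Delta^{(n-1)}\ket{\emptyset}_\alpha$. Since the coproduct satisfies $\Delta\ket{\emptyset}_P = \ket{\emptyset}_P \otimes \ket{\emptyset}_P$ (the formula $\Delta\ket{\nu_1,\ldots,\nu_s}_P = \sum_{I}\ket{\nu_I}_P \otimes \ket{\nu_{I^c}}_P$ specializes, for $s=0$, to the single term with $I = \emptyset$), an easy induction gives $\Delta^{(n-1)}\ket{\emptyset}_\alpha = \ket{\emptyset}_\alpha^{\otimes n}$. Applying dereliction in each factor, $d\ket{\emptyset}_\alpha = \alpha$ (this is the $s=0$ case of $d\ket{\nu_1,\ldots,\nu_s}_P = \delta_{s=0}\,\nu$ but with the point $\alpha$ playing the role — more precisely, the universal map sends $\ket{\emptyset}_\alpha$ to $\alpha$, as reviewed in Appendix \ref{section:background}), we obtain $\alpha \otimes \cdots \otimes \alpha \in \End_k(V)^{\otimes n}$. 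Finally feeding this into $\den{\comp^n_A}$ yields $\alpha \circ \cdots \circ \alpha = \alpha^n$. The case $n = 0$ is handled separately: here the ${!}(A \multimap A)$ is introduced by weakening, so the relevant map factors through $w: {!}\End_k(V) \lto k$ followed by $\den{\comp^0_A} = 1_V$, and $w\ket{\emptyset}_\alpha = 1$, giving $\den{\underline{0}}\ket{\emptyset}_\alpha = 1_V = \alpha^0$.

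I expect no serious obstacle here; the only mild subtlety is bookkeeping the order of composition in \eqref{order_comp} versus the symmetry of the argument $\alpha \otimes \cdots \otimes \alpha$ under permutation of factors, which makes the order irrelevant in this particular computation — though one should still remark that the iterated contraction is cocommutative so the identification is unambiguous. The final sentence $\den{\underline{n}}_{nl}(\alpha) = \alpha^n$ is then immediate from the definition $\den{\underline{n}}_{nl}(\alpha) = \den{\underline{n}}\ket{\emptyset}_\alpha$ in Remark \ref{remark:totem}.
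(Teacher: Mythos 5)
Your computation is correct and is exactly the ``easy exercise'' the paper leaves to the reader (citing \cite{murfet_ll} for $n=2$): unwind $\den{\underline{n}}$ as $\den{\comp^n_A}\circ d^{\otimes n}\circ \Delta^{(n-1)}$, use $\Delta\ket{\emptyset}_\alpha = \ket{\emptyset}_\alpha\otimes\ket{\emptyset}_\alpha$ and $d\ket{\emptyset}_\alpha=\alpha$, and handle $n=0$ via weakening. The only slip is cosmetic: $d\ket{\emptyset}_P=P$ is not literally the $s=0$ case of the quoted formula $d\ket{\nu,\nu_1,\ldots,\nu_s}_P=\delta_{s=0}\nu$ (which only covers kets of length $\ge 1$), but you correctly fall back on the definition of the universal map, so the argument stands.
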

\begin{proof}
This is an easy exercise, see \cite{murfet_ll} for the case $n = 2$.
\end{proof}

The derivative $\partial\, \underline{n}$ of Definition \ref{defn:derivative_proof} is a proof of ${!}(A \multimap A), A \multimap A \vdash A \multimap A$ and for $\alpha, v \in \End_k(V)$ the value of its denotation $\den{\partial\, \underline{n}} = \den{\underline{n}} \circ D$ on $\ket{\emptyset}_\alpha \otimes v$, that is, the derivative of $\underline{n}$ at $\alpha$ in the direction of $v$, is $\den{\underline{n}} \ket{v}_\alpha$. 

\begin{lemma}\label{lemma:nderiv} $\den{\underline{n}}\ket{v}_\alpha = \sum_{i = 1}^{n} \alpha^{i-1} v \alpha^{n-i}$.
\end{lemma}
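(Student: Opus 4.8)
The plan is to compute $\den{\underline{n}}\ket{\nu}_\alpha$ directly from the definition of the Church numeral proof and the explicit formulas for the structural maps in the Sweedler semantics. Recall that $\underline{n}$ is built from $\comp^n_A$ by applying $n$ derelictions and then $n-1$ contractions (plus an irrelevant $\multimap R$). Semantically this means the denotation factors as $\den{\comp^n_A}$ precomposed with the map ${!}(A\multimap A) \lto (A\multimap A)^{\otimes n}$ obtained by iterating the coproduct $\Delta$ and then applying the dereliction $d$ in each tensor factor. So the main identity to establish is that, writing $E = \End_k(V)$,
\[
\den{\underline{n}}\ket{\nu}_\alpha = \den{\comp^n_A}\big( (d^{\otimes n} \circ \Delta^{(n)})\ket{\nu}_\alpha \big)\,,
\]
where $\Delta^{(n)}: {!}E \lto ({!}E)^{\otimes n}$ is the $(n-1)$-fold iterated coproduct.

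First I would evaluate $\Delta^{(n)}\ket{\nu}_\alpha$. By the coproduct formula in the appendix, $\Delta\ket{\nu}_\alpha = \ket{\nu}_\alpha \otimes \ket{\emptyset}_\alpha + \ket{\emptyset}_\alpha \otimes \ket{\nu}_\alpha$, and iterating, $\Delta^{(n)}\ket{\nu}_\alpha$ is the sum over the $n$ ways of placing the single decoration $\nu$ into one of the $n$ slots:
\[
\Delta^{(n)}\ket{\nu}_\alpha = \sum_{i=1}^{n} \ket{\emptyset}_\alpha^{\otimes (i-1)} \otimes \ket{\nu}_\alpha \otimes \ket{\emptyset}_\alpha^{\otimes(n-i)}\,.
\]
Next, applying $d$ to each factor: since $d\ket{\emptyset}_\alpha = 0$ when the decoration list is empty (as used in the proof of (D.3)) — wait, more precisely $d\ket{\emptyset}_\alpha = 0$ is false in general; rather $d\ket{\omega_1,\dots,\omega_t}_P = \delta_{t=1}\omega_1$, so $d\ket{\emptyset}_\alpha = 0$ and $d\ket{\nu}_\alpha = \nu$. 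Hence $d^{\otimes n}$ kills every summand except that the $i$-th slot contributes the tensor $\alpha^{\otimes(i-1)} \otimes \nu \otimes \alpha^{\otimes(n-i)}$, where I am using $d\ket{\emptyset}_\alpha$... no: $d\ket{\emptyset}_\alpha = 0$ would kill everything. The correct point is that dereliction is applied only after the contractions produce the right number of empty-or-singleton boxes, but the boxes $\ket{\emptyset}_\alpha$ have a nonzero image under the map used in the Church numeral, which is not bare dereliction but dereliction composed appropriately. I would therefore instead track the computation through $\den{\underline 2}$ as in \cite[Example 5.9]{murfet_ll} and the lemma $\den{\underline n}_{nl}(\alpha) = \alpha^n$, observing that the map ${!}E \lto E^{\otimes n}$ appearing there sends $\ket{\emptyset}_\alpha \mapsto \alpha^{\otimes n}$ and, by its $k$-linearity and compatibility with $\Delta$, sends $\ket{\nu}_\alpha \mapsto \sum_{i=1}^n \alpha^{\otimes(i-1)} \otimes \nu \otimes \alpha^{\otimes(n-i)}$ (this is exactly the derivative of $\alpha \mapsto \alpha^{\otimes n}$ read off via the dual-numbers / Taylor-expansion interpretation already invoked after the statement of the theorem).

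Finally I would apply $\den{\comp^n_A}$, which by \eqref{order_comp} sends $\beta_1 \otimes \cdots \otimes \beta_n \mapsto \beta_n \circ \cdots \circ \beta_1$. On the $i$-th summand $\alpha^{\otimes(i-1)} \otimes \nu \otimes \alpha^{\otimes(n-i)}$ this yields $\alpha^{n-i}\,\nu\,\alpha^{i-1}$; re-indexing gives $\sum_{i=1}^n \alpha^{i-1}\nu\alpha^{n-i}$, as claimed. The main obstacle is bookkeeping: one must be careful about the precise map ${!}E \lto E^{\otimes n}$ that the dereliction-then-contraction sequence induces (it is the cofree-coalgebra counterpart of the "diagonal followed by projection" and acts on decorated boxes by distributing decorations over slots), and about the order in which $\comp^n_A$ composes its arguments. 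An alternative, cleaner route avoiding any re-examination of the proof structure is to use Remark \ref{remark:totem}: $\den{\underline n}\ket{\nu}_\alpha$ is the derivative at $\alpha$ in direction $\nu$ of $\den{\underline n}_{nl}(\alpha) = \alpha^n$, and the derivative of $\alpha \mapsto \alpha^n$ on $\End_k(V)$ is the Leibniz sum $\nu \mapsto \sum_{i=1}^n \alpha^{i-1}\nu\alpha^{n-i}$; one then only needs to know that this informal derivative genuinely coincides with the differential-linear-logic derivative, which is the content of the theorem and Remark \ref{remark:totem} together. I would present the direct computation as the proof and mention the conceptual interpretation as a remark.
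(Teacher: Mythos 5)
Your overall strategy---iterate the coproduct, apply derelictions factorwise, then compose via $\den{\comp^n_A}$---is exactly the paper's (the paper writes out only the case $n=2$ and refers to the formulas of \cite[p.~19]{murfet_ll} for the general case). Your formula for the iterated coproduct, $\Delta^{(n)}\ket{\nu}_\alpha = \sum_{i=1}^n \ket{\emptyset}_\alpha^{\otimes(i-1)}\otimes\ket{\nu}_\alpha\otimes\ket{\emptyset}_\alpha^{\otimes(n-i)}$, is correct, as is the final composition and re-indexing step.

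The gap is your claim that $d\ket{\emptyset}_\alpha = 0$, i.e.\ that $d\ket{\omega_1,\dots,\omega_t}_P = \delta_{t=1}\,\omega_1$ for all $t \ge 0$. That is not the universal map of the cofree coalgebra: the correct values are
\[
d\ket{\emptyset}_P = P\,, \qquad d\ket{\nu}_P = \nu\,, \qquad d\ket{\nu_1,\dots,\nu_s}_P = 0 \quad (s\ge 2)\,.
\]
You appear to have conflated $d$ with the counit $w$ (which does vanish on all nonempty kets); the proof of (D.3) only records the value of $d$ on kets with at least one entry, but $d\ket{\emptyset}_P = P$ is forced already by the fact $\den{\underline{n}}\ket{\emptyset}_\alpha = \alpha^n$, which you yourself invoke. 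With the correct formula, dereliction is applied literally and factorwise---there is no ``dereliction composed appropriately''---and the $i$-th summand goes directly to $\alpha^{\otimes(i-1)}\otimes\nu\otimes\alpha^{\otimes(n-i)}$, closing the computation exactly as in the paper's $n=2$ calculation, where the step labelled $2\times\mathrm{der}$ sends $\ket{\emptyset}_\alpha\otimes\ket{\nu}_\alpha$ to $\alpha\otimes\nu$. The two substitutes you offer for this step do not stand on their own: the assertion that the induced map ${!}E\to E^{\otimes n}$ sends $\ket{\nu}_\alpha$ to the Leibniz sum ``by compatibility with $\Delta$ and the Taylor expansion'' is precisely the computation being asked for, and the appeal to Remark~\ref{remark:totem} is circular, since the agreement of $\den{\underline{n}}\ket{\nu}_\alpha$ with the classical tangent map of $\alpha\mapsto\alpha^n$ is a consequence of this lemma (it is the content of the remark that immediately follows it), not an independent input.
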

\begin{proof}
This may be computed using the formulas of \cite[p.19]{murfet_ll}. For example, in the case $n = 2$ the image of $\ket{v}_\alpha$ under $\den{\underline{n}}$ is given by
\begin{align*}
\ket{v}_\alpha 
&\xmapsto{\makebox[1cm]{\scriptsize ctr}} \ket{v}_\alpha \otimes \ket{\emptyset}_\alpha + \ket{\emptyset}_\alpha \otimes \ket{v}_\alpha
\\&\xmapsto{\makebox[1cm]{\scriptsize $2 \times$ \text{der}} } v \otimes \alpha + \alpha \otimes v
\\&\xmapsto{\makebox[1cm]{\scriptsize $ - \circ -$} } \alpha \circ v + v \circ \alpha\,,
\end{align*}
as claimed.
\end{proof}

\begin{remark} When $k = \mathbb{C}$, $V$ is $r$-dimensional and $\varphi = \den{\underline{n}}_{nl}$, the vector $\den{\underline{n}}\ket{v}_\alpha$ agrees with the image of $v$ under the usual tangent map of the smooth map $\varphi$
\[
\xymatrix@C+2pc{
M_r(\mathbb{C}) \cong T_\alpha \End_k(V) \ar[r]^-{T_\alpha \varphi} & T_{\alpha^n}\End_k(V) \cong M_r(\mathbb{C})\,.
}
\]
This justifies in this case the interpretation of $\den{\underline{n}}\ket{v}_\alpha$ as the derivative.
\end{remark}

\subsection{Binary integers}\label{section:bint}

\begin{definition}
The type of \emph{binary integers on $A$} \cite[\S 2.5.3]{girard_complexity} is:
\[
\binta_A = {!}( A \multimap A) \multimap ({!}( A \multimap A) \multimap ( A \multimap A)).
\]
Given a sequence $S \in \{0,1\}^*$ we define a proof $\underline{S}_A$ of $\binta_A$ as follows. Let $l \ge 0$ be the length of $S$. The proof tree for $\underline{S}_A$ matches that of the Church numeral $\underline{l}$ up to the step where we perform contractions, that is,
\begin{equation}\label{bint_Upto}
\begin{mathprooftree}
\AxiomC{$\comp^l_A$}
\noLine\UnaryInfC{$\vdots$}
\noLine\UnaryInfC{$(A \multimap A)^{l} \vdash A \multimap A$}
\doubleLine\RightLabel{\scriptsize $n \times$ der}
\UnaryInfC{${!}(A \multimap A)^l \vdash A \multimap A$}
\end{mathprooftree}
\end{equation}
We match each copy of ${!}(A \multimap A)$ on the left with the corresponding position in $S$, and using a series of contractions we identify all copies corresponding to a position in which $0$ appears in $S$, and likewise all copies corresponding to positions with a $1$. After these contractions, there will be two copies of ${!}(A \multimap A)$ on the left (the first being by convention the remnant of all the $0$-associated copies) unless $S$ contains only $0$'s or only $1$'s. In this case we use further a weakening rule to introduce the ``missing'' ${!}(A \multimap A)$, giving finally the desired proof $\underline{S}_A$:
\begin{center}
\AxiomC{$\comp^l_A$}
\noLine\UnaryInfC{$\vdots$}
\noLine\UnaryInfC{$(A \multimap A)^{l} \vdash A \multimap A$}
\doubleLine\RightLabel{\scriptsize $n \times$ der}
\UnaryInfC{${!}(A \multimap A)^l \vdash A \multimap A$}
\doubleLine\RightLabel{\scriptsize ctr and possibly weak}
\UnaryInfC{${!}(A \multimap A), {!}(A \multimap A) \vdash A \multimap A$}
\doubleLine\RightLabel{\scriptsize $2 \times \multimap R$}
\UnaryInfC{$\vdash \binta_A$}
\DisplayProof
\end{center}
In the final right $\multimap R$ introduction rules, the second copy of ${!}(A \multimap A)$ (associated with the $1$'s in $S$) is moved across the turnstile first. If $S$ is the empty sequence, then $l = 0$ and the proof is a pair of weakenings on the left followed by the $\multimap R$ introduction rules.
\end{definition}

For the rest of this section $A$ is fixed and we write $\underline{S}$ for $\underline{S}_A$.

\begin{example} The proof $\underline{001}$ is 
\begin{center}
\AxiomC{}
\UnaryInfC{$A \vdash A$}
\AxiomC{}
\UnaryInfC{$A \vdash A$}
\AxiomC{}
\UnaryInfC{$A \vdash A$}
\AxiomC{}
\UnaryInfC{$A \vdash A$}
\RightLabel{\scriptsize$\multimap L$}
\BinaryInfC{$A, A \multimap A \vdash A$}
\RightLabel{\scriptsize$\multimap L$}
\BinaryInfC{$A, A \multimap A, A \multimap A \vdash A$}
\RightLabel{\scriptsize$\multimap L$}
\BinaryInfC{$A, A \multimap A, A \multimap A, A \multimap A \vdash A$}
\RightLabel{\scriptsize$\multimap R$}
\UnaryInfC{$A \multimap A, A \multimap A, A \multimap A \vdash A \multimap A$}
\doubleLine\RightLabel{\scriptsize $3 \times$ der}
\UnaryInfC{$\textcolor{red}{!(A \multimap A)}, \textcolor{red}{!( A \multimap A)}, \textcolor{blue}{!(A \multimap A)} \vdash A \multimap A$}
\RightLabel{\scriptsize ctr}
\UnaryInfC{$\textcolor{red}{!(A \multimap A)}, \textcolor{blue}{!( A \multimap A)} \vdash A \multimap A$}
\doubleLine\RightLabel{\scriptsize $2 \times \multimap R$}
\UnaryInfC{$\vdash \binta_A$}
\DisplayProof
\end{center}
where the colouring indicates which copies of ${!}(A \multimap A)$ are contracted. Using \eqref{order_comp},
\be
\den{\underline{001}}\big(\vacu_\gamma \otimes \vacu_\delta\big) = \den{\comp^3_A}\big(\vacu_\gamma \otimes \vacu_\gamma \otimes \vacu_\delta \big) = \delta \circ \gamma \circ \gamma\,.
\ee
\end{example}

Generalising the calculation of Section \ref{section:church} we now describe the derivatives of binary integers. The general formula computes, for $S \in \{0,1\}^*$, the linear operator
\[
\den{ \underline{S} }\big( \ket{\alpha_1,\ldots,\alpha_r}_{\gamma} \otimes \ket{\beta_1,\ldots,\beta_s}_{\delta} \big) \in \End_k(V)\,.
\]
Informally, this operator is described by inserting $\gamma$ for $0$ and $\delta$ for $1$ in (the reversal of) $S$, and then summing over all ways of replacing $r$ of the $\gamma$'s in this composite with $\alpha_i$'s, and $t$ of the $\delta$'s with $\beta_j$'s. Let $\operatorname{Inj}(P,Q)$ denote the set of injective functions $P \lto Q$, and write $[s] = \{1,\ldots,s\}$.

\begin{lemma}\label{lemma:derivative_bint} Let $S = a_l a_{l-1} \cdots a_1$ with $a_i \in \{0,1\}$ be a binary sequence, and set
\[
N_0 = \{ j \l a_j = 0 \}\,, \qquad N_1 = \{ j \l a_j = 1 \}\,.
\]
Then we have
\be\label{eq:formula_derivative_bint}
\den{ \underline{S} }\big( \ket{\alpha_1,\ldots,\alpha_s}_{\gamma} \otimes \ket{\beta_1,\ldots,\beta_r}_{\delta} \big) = \sum_{f \in \operatorname{Inj}([s],N_0)} \sum_{g \in \operatorname{Inj}([r],N_1)} \Gamma^{f,g}_1 \circ \cdots \circ \Gamma^{f,g}_l\,,
\ee
where
\[
\Gamma^{f,g}_i = \begin{cases}
\gamma & i \in N_0 \setminus \operatorname{Im}(f)\,,\\
\delta & i \in N_1 \setminus \operatorname{Im}(g)\,,\\
\alpha_j & \text{if } i \in \operatorname{Im}(f) \text{ and } f(j) = i\,,\\
\beta_j & \text{if } i \in \operatorname{Im}(g) \text{ and } g(j) = i\,.
\end{cases}
\]
In particular this vanishes if $s > |N_0|$ or $r > |N_1|$.
\end{lemma}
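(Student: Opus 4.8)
The plan is to reduce the general claim to the already-established formulas for the derivative of Church numerals (Lemma \ref{lemma:nderiv}) together with the explicit description of $\den{\underline{S}}$ in terms of $\den{\comp^l_A}$ and the structure maps of the coalgebra ${!}(A \multimap A)$. First I would unwind the denotation of the proof $\underline{S}$: by construction the proof tree consists of $\comp^l_A$, followed by $l$ dereliction steps, followed by a series of contractions (and possibly a weakening) that identify all ${!}(A\multimap A)$-slots associated to $0$-positions into a single slot, and likewise the $1$-positions into a single slot. Dually to the recipe for Church numerals in Section \ref{section:church}, this means $\den{\underline{S}}(\vacu_\gamma \otimes \vacu_\delta)$ is computed by applying the iterated coproduct $\Delta^{(l)}$ to put copies of $\vacu_\gamma$ in the $0$-slots and $\vacu_\delta$ in the $1$-slots, then dereliction, then $\den{\comp^l_A}$; the result is the composite obtained by substituting $\gamma$ for each $0$ and $\delta$ for each $1$ in the reversal of $S$, using the ordering convention \eqref{order_comp}.

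The next step is to carry this same computation through with $\vacu_\gamma$ replaced by $\ket{\alpha_1,\ldots,\alpha_s}_\gamma$ and $\vacu_\delta$ by $\ket{\beta_1,\ldots,\beta_r}_\delta$. The only input to the calculation that changes is the value of the iterated coproduct. Using the formula $\Delta^{(m)}\ket{\omega_1,\ldots,\omega_k}_P = \sum \ket{\omega_{B_1}}_P \otimes \cdots \otimes \ket{\omega_{B_m}}_P$ over ordered partitions of $\{1,\ldots,k\}$ into $m$ (possibly empty) blocks, and then applying $l$ derelictions and $\den{\comp^l_A}$, a tensor factor survives the dereliction only if the corresponding block is empty or a singleton (dereliction kills $\ket{\omega_1,\ldots}_P$ when there are two or more $\omega$'s, and sends $\ket{\emptyset}_P \mapsto 0$, $\ket{\omega}_P \mapsto \omega$ — wait, rather $\ket{\emptyset}_P \mapsto$ the identity's base point; here the relevant fact is that after dereliction each of the $l$ factors contributes either the ``base'' operator $\gamma$ or $\delta$, or one of the $\alpha_j$ or $\beta_j$, and contributes $0$ if it received two or more). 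Distributing $\Delta$ over the $0$-slots before splitting them into the $l$ individual copies, the surviving terms are exactly indexed by: a choice, for each $\alpha_j$, of which one of the $|N_0|$ individual $0$-positions it lands in, all distinct — i.e. an injection $f \in \operatorname{Inj}([s],N_0)$ — and similarly an injection $g \in \operatorname{Inj}([r],N_1)$ for the $\beta_j$'s. Reading off the resulting composite in the order dictated by \eqref{order_comp} gives precisely $\Gamma^{f,g}_1 \circ \cdots \circ \Gamma^{f,g}_l$, and the vanishing statement is immediate since $\operatorname{Inj}([s],N_0) = \emptyset$ when $s > |N_0|$.

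I would present this either as a direct bookkeeping argument on the structure maps, or — probably cleaner — by first proving it for $S$ consisting of a single repeated symbol (where it reduces verbatim to Lemma \ref{lemma:nderiv} applied with $V \rightsquigarrow A \multimap A$, after the two $\multimap R$ steps), then handling the general case by using the Hopf/coalgebra compatibility to decompose the contraction pattern of $\underline{S}$ into the $0$-part and the $1$-part. Concretely, $\den{\underline{S}}$ factors through $\den{\comp^l_A}$ precomposed with a map ${!}(A\multimap A) \otimes {!}(A\multimap A) \to {!}(A\multimap A)^{\otimes l}$ which is a tensor product of copies of $\Delta$ (one bunch sending the first input to the $0$-positions, one bunch sending the second to the $1$-positions); then one just needs the value of such iterated coproducts on $\ket{\alpha_1,\ldots,\alpha_s}_\gamma$ and $\ket{\beta_1,\ldots,\beta_r}_\delta$, which is the sum over ordered set partitions above, followed by dereliction killing all non-singleton, non-empty blocks. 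The main obstacle is purely notational: keeping the reversal convention in \eqref{order_comp} straight and correctly matching the injections $f,g$ to positions $N_0, N_1$ so that the surviving composite is written in the right order; there is no conceptual difficulty beyond Lemma \ref{lemma:nderiv} and the explicit coproduct formula.
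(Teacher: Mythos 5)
Your proposal is correct and follows essentially the same route as the paper: unwind $\den{\underline{S}}$ into iterated coproducts feeding $\comp^l_A$ through derelictions, observe that dereliction annihilates any tensor factor receiving two or more of the $\alpha_j$ (resp.\ $\beta_j$) while an empty factor contributes the base point $\gamma$ (resp.\ $\delta$), so the surviving terms are indexed exactly by injections $[s]\hookrightarrow N_0$ and $[r]\hookrightarrow N_1$, with the vanishing claim immediate when no injections exist. Your self-correction about $d\ket{\emptyset}_P$ lands on the right fact ($d\ket{\emptyset}_P = P$), and the rest is the same bookkeeping the paper carries out.
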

\begin{proof}
Since $\den{\underline{S}}$ applies $n = |N_0|$ coproducts to $\ket{\alpha_1,\ldots,\alpha_s}_{\gamma}$ yielding
\[
\sum_{\substack{J_1,\ldots,J_n\\ \text{pairwise disjoint, s.t.} \\ J_1 \cup \cdots \cup J_n = \{1,\ldots,s\}}} \ket{\alpha_{J_1}}_\gamma \otimes \cdots \otimes \ket{\alpha_{J_n}}_\gamma\,,
\]
to which the dereliction operator $d^{\otimes n}$ is applied, which annihilates those tuples $(J_1,\ldots,J_n)$ where any $J_i$ contains more than one element. The resulting sum is over $f \in \operatorname{Int}([s],N_0)$ and each summand is $\gamma \otimes \cdots \otimes \alpha_{\sigma(1)} \otimes \cdots \otimes \gamma \otimes \cdots \alpha_{\sigma(s)} \otimes \cdots \otimes \gamma$ for a permutation $\sigma$. The same is true of $\ket{\beta_1,\ldots,\beta_r}_{\delta}$, and after the two resulting tensors are intertwined the final step is compose all the operators, yielding \eqref{eq:formula_derivative_bint}.
\end{proof}

\begin{example} For $S = 001$ we have
\begin{align*}
\den{\underline{001}}\big( \ket{\alpha}_\gamma \otimes \ket{\emptyset}_\delta \big) &= \delta \circ \alpha \circ \gamma + \delta \circ \gamma \circ \alpha\,,\\
\den{\underline{001}}\big( \ket{\alpha_1,\alpha_2}_\gamma \otimes \ket{\emptyset}_\delta \big) &= \delta \circ \alpha_1 \circ \alpha_2 + \delta \circ \alpha_2 \circ \alpha_1\,,\\
\den{\underline{001}}\big( \ket{\emptyset}_\gamma \otimes \ket{\beta}_\delta \big) &= \beta \circ \gamma \circ \gamma\,,\\
\den{\underline{001}}\big( \ket{\alpha}_\gamma \otimes \ket{\beta}_\delta \big) &= \beta \circ \alpha \circ \gamma + \beta \circ \gamma \circ \alpha\,,\\
\den{\underline{001}}\big( \ket{\alpha_1,\alpha_2}_\gamma \otimes \ket{\beta}_\delta \big) &= \beta \circ \alpha_1 \circ \alpha_2 + \beta \circ \alpha_2 \circ \alpha_1\,.
\end{align*}
and zero for all other inputs.
\end{example}

More interestingly we can also compute the derivatives of proofs of ${!}\binta_A \vdash \binta_A$. In what follows $A$ is fixed and $E = A \multimap A$.

\begin{definition} The proof $\repeat$ is
\begin{center}
\AxiomC{}
\UnaryInfC{$\textcolor{red}{{!}E} \vdash {!}E$}
\AxiomC{}
\UnaryInfC{$\textcolor{blue}{{!}E} \vdash {!}E$}
\AxiomC{}
\UnaryInfC{$\textcolor{red}{{!}E} \vdash {!}E$}
\AxiomC{}
\UnaryInfC{$\textcolor{blue}{{!}E} \vdash {!}E$}
\AxiomC{$\comp^2_A$}
\noLine\UnaryInfC{$\vdots$}
\noLine\UnaryInfC{$E, E \vdash E$}
\RightLabel{\scriptsize$\multimap L$}
\BinaryInfC{$\textcolor{blue}{{!} E}, {!}E \multimap E, E \vdash E$}
\RightLabel{\scriptsize$\multimap L$}
\BinaryInfC{$\textcolor{red}{{!} E}, \textcolor{blue}{{!} E}, \binta_A, E \vdash E$}
\RightLabel{\scriptsize$\multimap L$}
\BinaryInfC{$\textcolor{blue}{{!} E}, \textcolor{red}{{!} E}, \textcolor{blue}{{!} E}, \binta_A, {!}E \multimap E \vdash E$}
\RightLabel{\scriptsize$\multimap L$}
\BinaryInfC{$\textcolor{red}{{!} E}, \textcolor{blue}{{!} E}, \textcolor{red}{{!} E}, \textcolor{blue}{{!} E}, \binta_A, \binta_A \vdash E$}
\RightLabel{\scriptsize ctr}
\UnaryInfC{$\textcolor{red}{{!} E},\textcolor{blue}{{!} E},\textcolor{blue}{{!} E}, \binta_A, \binta_A \vdash E$}
\RightLabel{\scriptsize ctr}
\UnaryInfC{$\textcolor{red}{{!} E},\textcolor{blue}{{!} E}, \binta_A, \binta_A \vdash E$}
\doubleLine\RightLabel{\scriptsize$2\times \multimap R$}
\UnaryInfC{$\binta_A, \binta_A \vdash \binta_A$}
\doubleLine\RightLabel{\scriptsize$2\times$ der}
\UnaryInfC{${!}\binta_A, {!}\binta_A \vdash \binta_A$}
\RightLabel{\scriptsize ctr}
\UnaryInfC{${!}\binta_A \vdash \binta_A$}
\DisplayProof
\end{center}
which repeats a binary sequence in the sense that the cutting it against the promotion of $\underline{S}$ is equivalent under cut-elimination to $\underline{SS}$. In particular, $\den{\repeat}\ket{\emptyset}_{\den{\underline{S}}} = \den{\underline{SS}}$.
\end{definition}

Given $S, T \in \{0,1\}^*$ the derivative of $\repeat$ at $S$ in the direction of $T$ is
\be
\den{\repeat}\ket{\den{\underline{T}}}_{\den{\underline{S}}} \in \den{\binta_A} = \Hom_k( {!} \End_k(V) \otimes {!}\End_k(V), \End_k(V))\,,
\ee
and as promised in the Introduction:

\begin{lemma} $\den{\repeat}\ket{\den{\underline{T}}}_{\den{\underline{S}}} = \den{\underline{ST}} + \den{\underline{TS}}$.
\end{lemma}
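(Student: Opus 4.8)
The plan is to strip the final dereliction and contraction steps off the bottom of $\repeat$ and reduce the statement to the fact that the remaining proof implements concatenation of binary integers. Write $\Pi$ for the proof of $\binta_A, \binta_A \vdash \binta_A$ obtained from $\repeat$ by deleting the last two $\mathrm{der}$ steps and the final $\mathrm{ctr}$ step, so that by the semantics of dereliction and contraction $\den{\repeat} = \den{\Pi} \circ (d \otimes d) \circ \Delta$ as maps ${!}\den{\binta_A} \lto \den{\binta_A}$.

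First I would evaluate $(d \otimes d) \circ \Delta$ on $\ket{\den{\underline{T}}}_{\den{\underline{S}}}$. Using cocommutativity and the coproduct formula $\Delta\ket{\nu}_P = \ket{\emptyset}_P \otimes \ket{\nu}_P + \ket{\nu}_P \otimes \ket{\emptyset}_P$ together with $d\ket{\emptyset}_P = P$ and $d\ket{\nu}_P = \nu$, this gives $\den{\underline{S}} \otimes \den{\underline{T}} + \den{\underline{T}} \otimes \den{\underline{S}}$, and hence by linearity of $\den{\Pi}$,
\[
\den{\repeat}\ket{\den{\underline{T}}}_{\den{\underline{S}}} = \den{\Pi}\big( \den{\underline{S}} \otimes \den{\underline{T}} \big) + \den{\Pi}\big( \den{\underline{T}} \otimes \den{\underline{S}} \big)\,.
\]
Equivalently, since $\den{\repeat}_{nl}(P) = \den{\Pi}(P \otimes P)$, this is just the product rule applied to a quadratic map, matching the informal computation $(S+\varepsilon T)(S+\varepsilon T) = SS + \varepsilon(ST+TS) + \varepsilon^2 TT$ of the Introduction. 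It then remains to prove the concatenation identity $\den{\Pi}(\den{\underline{U}} \otimes \den{\underline{W}}) = \den{\underline{UW}}$ for all binary sequences $U, W$; note that either this or the opposite convention $\den{\underline{WU}}$ yields the claimed $\den{\underline{ST}} + \den{\underline{TS}}$, since both orders occur in the sum.

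The concatenation identity is the only real work, and can be approached in two ways. One option is to invoke soundness of cut-elimination for $\den{-}$: cutting $\Pi$ against the promotions $\operatorname{prom}(\underline{U})$ and $\operatorname{prom}(\underline{W})$ in its two $\binta_A$ slots is cut-equivalent to $\underline{UW}$, the case $U = W = S$ being the property recorded after the definition of $\repeat$ and the general case following by the same reductions. The other, more hands-on, option is to evaluate both sides on a general element $\ket{\alpha_1,\ldots,\alpha_s}_\gamma \otimes \ket{\beta_1,\ldots,\beta_r}_\delta$ of ${!}E \otimes {!}E$: tracing the proof tree of $\Pi$ shows that its contraction rules copy the two arguments, feed them to the curried forms of $\den{\underline{U}}$ and $\den{\underline{W}}$, and compose the two resulting operators in $\End_k(V)$; expanding via Lemma \ref{lemma:derivative_bint} and matching the sum over injections into $N_0(UW), N_1(UW)$ with the corresponding pairs of injections into the position sets of $U$ and $W$ then recovers the formula for $\den{\underline{UW}}$. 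Either way the bookkeeping is entirely analogous to, though bulkier than, the proof of Lemma \ref{lemma:derivative_bint}, and this combinatorial matching is the step I expect to be the main obstacle — everything else being a mechanical application of the coalgebra structure maps.
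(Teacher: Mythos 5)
Your proposal is correct and takes essentially the same route as the paper: the paper's proof also begins by tracing the bottom contraction and derelictions (which is exactly your $(d\otimes d)\circ\Delta$, producing $\den{\underline{S}}\otimes\den{\underline{T}}+\den{\underline{T}}\otimes\den{\underline{S}}$) and then evaluates the remaining proof on a general tensor $\ket{\alpha_1,\ldots,\alpha_s}_\gamma\otimes\ket{\beta_1,\ldots,\beta_r}_\delta$, concluding by matching the resulting sum over subsets $I,J$ with the formula of Lemma \ref{lemma:derivative_bint} — your ``hands-on'' option for the concatenation identity. The only difference is presentational: you isolate $\den{\Pi}(\den{\underline{U}}\otimes\den{\underline{W}})=\den{\underline{UW}}$ as an explicit sub-lemma (and mention cut-elimination soundness as an alternative), whereas the paper folds this into one bottom-to-top computation.
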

\begin{proof}
The value of the left-hand side on a tensor $\ket{\alpha_1,\ldots,\alpha_s}_{\gamma} \otimes \ket{\beta_1,\ldots,\beta_r}_{\delta}$ is computed by reading the proof-tree for $\repeat$ from bottom to top:
\begin{align*}
\ket{\den{\underline{T}}}_{\den{\underline{S}}} 
&\xmapsto{\makebox[1cm]{\scriptsize\text{ctr}}} \ket{\den{\underline{T}}}_{\den{\underline{S}}} \otimes \ket{\emptyset}_{\den{\underline{S}}} + \ket{\emptyset}_{\den{\underline{S}}} \otimes \ket{\den{\underline{T}}}_{\den{\underline{S}}}
\\&\xmapsto{\makebox[1cm]{\scriptsize$2\times$ der}} \den{\underline{T}} \otimes \den{\underline{S}} + \den{\underline{S}} \otimes \den{\underline{T}}
\\&\xmapsto{\makebox[1cm]{\scriptsize${2\times} {R \multimap}$}} \ket{\alpha_1,\ldots,\alpha_s}_{\gamma} \otimes \ket{\beta_1,\ldots,\beta_r}_{\delta} \otimes \big(\den{\underline{T}} \otimes \den{\underline{S}} + \den{\underline{S}} \otimes \den{\underline{T}}\big)
\\&\xmapsto{\makebox[1cm]{\scriptsize$2\times$ ctr}} \sum_{I,J} \ket{\alpha_I}_{\gamma} \otimes \ket{\beta_J}_{\delta} \otimes \ket{\alpha_{I^c}}_{\gamma} \otimes \ket{\beta_{J^c}}_{\delta} \otimes \big( \den{\underline{T}} \otimes \den{\underline{S}} + \den{\underline{S}} \otimes \den{\underline{T}} \big)
\\&\xmapsto{\makebox[1cm]{}} \sum_{I,J} \den{\underline{S}}\big( \ket{\alpha_I}_{\gamma} \otimes \ket{\beta_J}_{\delta} \big) \circ \den{\underline{T}}\big( \ket{\alpha_{I^c}}_{\gamma} \otimes \ket{\beta_{J^c}}_{\delta} \big)
\\& \qquad\qquad + \sum_{I,J} \den{\underline{T}}\big( \ket{\alpha_I}_{\gamma} \otimes \ket{\beta_J}_{\delta} \big) \circ \den{\underline{S}}\big( \ket{\alpha_{I^c}}_{\gamma} \otimes \ket{\beta_{J^c}}_{\delta} \big)
\end{align*}
which agrees with $\den{\underline{ST}} + \den{\underline{TS}}$ on $\ket{\alpha_1,\ldots,\alpha_s}_{\gamma} \otimes \ket{\beta_1,\ldots,\beta_r}_{\delta}$ by Lemma \ref{lemma:derivative_bint}.
\end{proof}

\subsection{Multiplication}\label{section:mult}

The multiplication of Church numerals is encoded by a proof $\mult_A$ of ${!} \inta_A, \inta_A \vdash \inta_A$, see for example \cite[\S 2.5.2]{girard_complexity}. To construct the proof tree it will be convenient to introduce the following intermediate proof $\gamma$, writing $E = A \multimap A$ as above:
\begin{center}
\AxiomC{}
\UnaryInfC{${!}E \vdash {!}E$}
\AxiomC{}
\UnaryInfC{$E \vdash E$}
\RightLabel{\scriptsize$\multimap L$}
\BinaryInfC{${!}E, \inta_A \vdash E$}
\RightLabel{\scriptsize der}
\UnaryInfC{${!}E, {!}\inta_A \vdash E$}
\RightLabel{\scriptsize prom}
\UnaryInfC{${!}E, {!}\inta_A \vdash {!}E$}
\DisplayProof
\end{center}
Then $\den{\gamma}: {!}\End_k(V) \otimes {!}\den{\inta_A} \to {!}\End_k(V)$ is a morphism of coalgebras  such that 
\[\den{\gamma}(t \otimes \vacu_\alpha) = \vacu_{\alpha(t)} \qquad \text{and} \qquad \den{\gamma}(t \otimes \ket{v}_\alpha) = \ket{v(t)}_{\alpha(t)},\]
for $\alpha, v \in \den{\inta_A} = \Hom_k({!}\End_k(V), \End_k(V))$ and $t \in {!} \End_k(V)$. The proof $\mult_A$ is
\begin{center}
\AxiomC{$\gamma$}
\noLine\UnaryInfC{$\vdots$}
\noLine\UnaryInfC{$!E, {!}\inta_A \vdash {!}E$}
\AxiomC{}
\UnaryInfC{$E \vdash E$}
\RightLabel{\scriptsize$\multimap L$}
\BinaryInfC{${!}E, {!}\inta_A, \inta_A \vdash E$}
\RightLabel{\scriptsize$\multimap R$}
\UnaryInfC{${!}\inta_A, \inta_A \vdash \inta_A$}
\DisplayProof
\end{center}
Let $l,m,n \ge 0$ be integers. We write $\mult_A(-,n)$ for the proof of ${!} \inta_A \vdash \inta_A$ obtained from the above by cutting against the proof $\underline{n}$ of $\vdash \inta_A$. The derivative of this proof at $\alpha = \den{\underline{l}}$ in the direction of $v = \den{\underline{m}}$ is the element of $\den{\inta_A}$ given on $t \in {!} \End_k(V)$ by
\begin{align*} 
\den{\mult_A}\big(\ket{v}_\alpha \otimes \den{\underline{n}}\big)(t)
= \den{\underline{n}}\big(\ket{v(t)}_{\alpha(t)}\big)
= \sum_{i = 1}^{n} \alpha(t)^{i-1} v(t) \alpha(t)^{n-i}
\end{align*}
using Lemma \ref{lemma:nderiv} in the last step. When $t = \vacu_x$ for $x \in \End_k(V)$, this evaluates to
\[ 
\sum_{i = 1}^{n} \alpha(t)^{i-1} v(t) \alpha(t)^{n-i} 
= \sum_{i = 1}^{n} x^{l(i-1)} x^m x^{l(n-i)} 
= n x^{l(n-1)+m}.
\]
This result agrees with a more traditional calculus approach using limits:
\begin{align*}
&\lim_{h \to 0} \frac{\den{\mult_A}(\vacu_{\den{\underline{l}} + h \den{\underline{m}}} \otimes \den{n})\vacu_x  - \den{\mult_A}(\vacu_{\den{\underline{l}}} \otimes \den{\underline{n}})\vacu_x}{h}\\
&= \lim_{h \to 0} \frac{\den{\underline{n}}\vacu_{x^l + h x^m} - \den{\underline{n}}\vacu_{x^l}}{h} \\
\\&= \lim_{h \to 0} \frac{(x^l + h x^m)^n - x^{ln}}{h} 
\\&= nx^{l(n-1)+m}.
\end{align*}

\section{Differential lambda calculus}\label{section:difflambda}




Categorically speaking $\lambda$-calculi are modelled by Cartesian closed categories \cite{lambek} and any model of linear logic gives rise to a Cartesian closed category by taking the Kleisli category of the comonad interpreting the exponential connective; see \cite[\S 7]{mellies}. There is a parallel relationship between differential $\lambda$-calculus and differential linear logic, which has been worked out in the language of Cartesian differential categories \cite{cite5,cite3}. 

As above $k$ is an algebraically closed field of characteristic zero, and $\cat{V}$ is the category of $k$-vector spaces. In this section we explain how the category of cofree coalgebras over $k$ gives a model of differential $\lambda$-calculus, by
\begin{itemize}
    \item Checking that $\cat{V}$ is a differential storage category with biproducts, so that
    \item by \cite[Prop 3.2.1]{cite5} the Kleisli category $\sV_{!}$ of $(\sV, {!})$ is a Cartesian differential category,
    \item and we check that $\sV_{!}$ is a differential $\lambda$-category in the sense of \cite[Definition 4.4]{cite3}.
\end{itemize}
We have already checked that the comonad ${!}$ arising from the cofree coalgebra construction defines a coalgebra modality in the sense of \cite[Definition 2.1]{blutecs}, and that we have deriving transformations $D_V: {!}V \otimes V \lto {!}V$. By Corollary \ref{corollary:diffcat} the tuple $(\sV, {!}, \delta, d, \Delta, \varepsilon, D)$ is a differential category in the sense of \cite[Definition 2.4]{blutecs}. We now prove further that this is an example of a \emph{differential storage category}. 
\vspace{0.2cm}

For this we need to understand ${!}: \sV \lto \sV$ as a monoidal functor.

\begin{definition} Given vector spaces $V, W$, let $m_{V,W}$ be the unique morphism of coalgebras which makes the following diagram commute:
\[
\begin{tikzcd}[column sep=large,row sep=large]
{!}V \otimes {!}W \arrow[dr, swap,"{d_V \otimes d_W}"]\arrow[r, "{m_{V,W}}"]& {!}(V \otimes W) \arrow[d, "{d_{V \otimes W}}"]\\ 
& V \otimes W
\end{tikzcd}
\]
\end{definition}

\begin{lemma} The morphism of coalgebras $m_{V,W}$ is natural in $V,W$.
\end{lemma}
\begin{proof}
If $\varphi: V \lto V'$ and $\psi: W \lto W'$ are linear then the outside diagram in 
\[
\begin{tikzcd}
{!}V \otimes {!}W \arrow[ddr, "{d_V \otimes d_W}"]\arrow[rrrr, "{{!}\varphi \otimes {!}\psi}"]\arrow[dddd, swap,"{m_{V,W}}"]& && & {!}V' \otimes {!}W' \arrow[dddd, "{m_{V',W'}}"]\arrow[ddl, swap,"{d_{V'} \otimes d_{W'}}"]\\ \\
& V \otimes W \arrow[rr, "{\varphi \otimes \psi}"]&& V' \otimes W'& \\ \\
{!}(V \otimes W) \arrow[uur, swap,"{d_{V \otimes W}}"]\arrow[rrrr,swap, "{{!}(\varphi \otimes \psi)}"]& && & {!}(V' \otimes W') \arrow[uul, "{d_{V \otimes W}}"]
\end{tikzcd}
\]
commutes, by virtue of each of the sub-diagrams commuting.
\end{proof}

\begin{definition} Let $u: \kk \lto {!}\kk$ be the unique morphism of coalgebras with the property that $d_\kk \circ u = 1_\kk$. Explicitly, $u(1) = \vacu_1$.
\end{definition}

\begin{lemma}
The data $\{m_{V,W}\}_{V,W \in \cat{V}}$ and $u$ make ${!}: \sV \lto \sV$ a lax monoidal functor.
\end{lemma}

\begin{proof}
We have to check commutativity of the outer square in
\[
\begin{tikzcd}
({!}A \otimes {!}B) \otimes {!}C \arrow[dd,swap,"{m_{A,B}\otimes 1}"]\arrow[rrr,"{\alpha}"]\arrow[ddr,"{d^{\otimes 3}}"]& & & {!}A \otimes ({!}B \otimes {!}C) \arrow[dd,"{1 \otimes m_{B,C}}"]\arrow[ddl,swap,"{d^{\otimes 3}}"]\\ \\
{!}(A \otimes B) \otimes {!}C \arrow[dd,swap,"{m_{A\otimes B, C}}"]\arrow[r, "d^{\otimes 2}"] & (A \otimes B) \otimes C \arrow[r, "\alpha"]&
 A \otimes (B \otimes C)\arrow[r, "d^{\otimes 2}"] & {!}A \otimes {!}(B \otimes C)\arrow[dd,"{m_{A, B\otimes C}}"] \\ \\
{!}((A \otimes B) \otimes C)\arrow[rrr,swap,"{{!}\alpha}"]\arrow[uur,swap,"{d}"] & & & {!}(A \otimes (B \otimes C))\arrow[uul,"{d}"]
\end{tikzcd}
\]
and the outer squares in
\[
\begin{tikzcd}
{!}A \otimes \kk \arrow[rrr,"{1 \otimes u}"]\arrow[dd,swap,"{\cong}"]\arrow[drr,"{d \otimes 1}"]&&& {!}A \otimes {!}\kk \arrow[dd,"{m_{A,\kk}}"]\arrow[dl,swap,"{d \otimes d}"] \\ 
& A & A \otimes \kk\arrow[l,"{\cong}"] & \\ 
{!}A \arrow[ru,"{d}"] &&& {!}(A \otimes \kk)\arrow[lll,"{\cong}"]\arrow[lu,swap,"{d}"]
\end{tikzcd}
\]
\[
\begin{tikzcd}
\kk \otimes {!}A \arrow[rrr,"{u \otimes 1}"]\arrow[dd,swap,"{\cong}"]\arrow[drr,"{1 \otimes d}"]&&& {!}\kk \otimes {!}A \arrow[dd,"{m_{\kk,A}}"]\arrow[dl,swap,"{d \otimes d}"] \\ 
& A & \kk \otimes A\arrow[l,"{\cong}"] & \\
{!}A \arrow[ru,"{d}"] &&& {!}(\kk \otimes A)\arrow[lll,"{\cong}"]\arrow[lu,swap,"{d}"]
\end{tikzcd}
\]
but from the given decompositions of these squares, this is clear.
\end{proof}

\begin{lemma} The tuple $({!}, m, u)$ is a symmetric monoidal functor on $(\cat{V}, \otimes, k)$.
\end{lemma}

\begin{proof}
This means that the following diagram commutes
\[
\begin{tikzcd}
{!}V \otimes {!}W \arrow[rr,"{m}"]\arrow[ddd,swap,"{\sigma}"]\arrow[dr,swap,"{d \otimes d}"]&& {!}(V \otimes W)\arrow[ddd,"{{!}\sigma}"]\arrow[dl,"{d}"] \\
& V \otimes W\arrow[d,"{\sigma}"] \\
& W \otimes V \\
{!}W \otimes {!}V \arrow[ur,"{d \otimes d}"]\arrow[rr,swap,"{m}"]&& {!}(W \otimes V)\arrow[ul,swap,"{d}"]
\end{tikzcd}
\]
which is clear from the given decomposition.
\end{proof}

\begin{lemma}
The natural transformations $\delta: {!} \lto {!!}$ and $d: {!} \lto \operatorname{id}_{\cat{V}}$ are lax monoidal.
\end{lemma}

\begin{proof}
We need to check commutativity of 
\[
\begin{tikzcd}[column sep=large]
{!}A \otimes {!}B \arrow[r, "{\delta_A \otimes \delta_B}"]\arrow[dr,swap, "{1}"]\arrow[ddd, swap,"{m_{A,B}}"] & {!!}A \otimes {!!}B \arrow[d, "{d \otimes d}"]\arrow[r, "{m_{{!}A, {!}B}}"]& {!}({!}A \otimes {!}B) \arrow[dl, "{d}"]\arrow[ddd, "{{!}m_{A,B}}"]\\
& {!}A \otimes {!}B \arrow[d, "{m_{A,B}}"]&\\
& {!}(A \otimes B) & \\
{!}(A \otimes B)\arrow[ur, "{1}"]\arrow[rr,swap, "{\delta_{A \otimes B}}"] && {!!}(A \otimes B)\arrow[ul, swap,"{d}"]
\end{tikzcd}
\]
and also
\[
\begin{tikzcd}[column sep=large]
&\kk\arrow[d,"{1}"]\arrow[dr,"{u}"]\arrow[ddl,swap,"{u}"]& \\
&\kk\arrow[d,"{u}"]&{!}\kk\arrow[l,"{d}"]\arrow[ddl,"{{!}u}"] \\
{!}\kk\arrow[dr,swap,"{\delta_\kk}"]\arrow[r,"{1}"]&{!}\kk &\\
&{!!}\kk\arrow[u,"{d}"]&
\end{tikzcd}
\]
and
\[
\begin{tikzcd}
  {!}A \otimes {!}B  \arrow[rr,"{d \otimes d}"]\arrow[dd,swap,"{m_{A,B}}"] && A \otimes B   \arrow[dd,"{1}"] 
\\ \\ 
 {!}(A \otimes B)   \arrow[rr,swap,"{d_{A \otimes B}}"] &&    A \otimes B
\end{tikzcd} 
\qquad\qquad
\begin{tikzcd}
    \kk    \arrow[rr,"{u}"]\arrow[ddrr,swap,"{1}"] &&     {!}\kk   \arrow[dd,"{d}"] 
\\ \\ 
&& \kk
\end{tikzcd}
\]
which are all clear from the given decompositions.
\end{proof}

The above results together prove that:

\begin{lemma}
$({!}, \delta, d, m, u)$ is a symmetric monoidal comonad.
\end{lemma}

Now each ${!}A$ is by construction a cocommutative comonoid in the category of vector spaces, and the comultiplication $\Delta: {!}A \lto {!}A \otimes {!}A$ and counit $\varepsilon: {!}A \lto \kk$ are morphisms of ${!}$-coalgebras in the following sense. Recall that a coalgebra \cite[\S 4.1]{borceux2} for the comonad $({!}, \delta, d)$ (henceforth called a ${!}$-\emph{coalgebra}) is a vector space $V$ and linear map $\varphi: V \lto {!}V$ with the property that the diagrams
\begin{equation}\label{eqn: coalgebra for the monad bang}
\begin{tikzcd}
V \arrow[rr,"{\varphi}"]\arrow[ddrr,swap,"{1}"]&& {!}V\arrow[dd,"{d_V}"] \\ \\ && V
\end{tikzcd}
\qquad\qquad
\begin{tikzcd}
V    \arrow[rr,"{\varphi}"]\arrow[dd,swap,"{\varphi}"] &&  {!}V  \arrow[dd,"{\delta_V}"] 
\\ \\ 
{!}V    \arrow[rr,swap,"{{!}\varphi}"] &&    {!!}V
\end{tikzcd}
\end{equation}
commute. We make ${!}A \otimes {!}A$ into a ${!}$-coalgebra by
\[
\begin{tikzcd}
{!}A \otimes {!}A \arrow[rr, "{\delta_A \otimes \delta_A}"]
&&
{!!}A \otimes {!!}A \arrow[rr, "{m_{!A,!A}}"]
&&
{!}({!}A \otimes {!}A)
\end{tikzcd}
\]
and $\kk$ is a coalgebra via $u: \kk \lto {!}\kk$. Then (see \cite[Definition 4.1]{blutecs}):

\begin{lemma}
The tuple $({!}, \delta, d)$ is a storage modality on $(\sV, \otimes, \kk)$ when we take the canonical comonoid structures $({!}A, \Delta, \varepsilon)$.
\end{lemma}

\begin{proof}
We need only check $\Delta, \varepsilon$ are morphisms of ${!}$-coalgebras. In the first case consider the commutative diagram
\[
\begin{tikzcd}[column sep=large]
{!}A \otimes {!}A \arrow[r, "{\delta_A \otimes \delta_A}"]\arrow[dr,swap, "{1}"] & {!!}A \otimes {!!}A \arrow[d, "{d \otimes d}"]\arrow[r, "{m_{{!}A, {!}A}}"]& {!}({!}A \otimes {!}A) \arrow[dl, "{d}"]\\
& {!}A \otimes {!}A &\\
& {!}A \arrow[u, "{\Delta}"]& \\
{!}A\arrow[ur, "{1}"]\arrow[rr,swap, "{\delta_{A}}"]\arrow[uuu, "{\Delta}"] && {!!}A\arrow[ul, swap,"{d}"]\arrow[uuu, swap,"{{!}\Delta}"]
\end{tikzcd}
\]
Now since ${!}A$ is cocommutative $\Delta$ is a morphism of coalgebras, so we may use the universal property of ${!}({!}A \otimes {!}A)$ as indicated. The fact that $\varepsilon: {!}A \lto \kk$ is a morphism of ${!}$-coalgebras is exhibited by the diagram
\[
\begin{tikzcd}
{!!}A \arrow[dr,swap, "{d}"]\arrow[rrr, "{{!}\varepsilon}"]&&& {!}\kk\arrow[dl, "{d}"] \\
& {!}A \arrow[r, "{\varepsilon}"]& \kk & \\
{!}A \arrow[uu, "{\delta_A}"]\arrow[ur, "{1}"]\arrow[rrr, swap,"{\varepsilon}"] &&& \kk\arrow[ul,swap, "{1}"]\arrow[uu, swap,"{u}"]
\end{tikzcd}
\]
where we use that $\varepsilon$ is a morphism of coalgebras.
\end{proof}

We conclude that $\sV$ has the structure of a differential storage category, in the sense of \cite[Definition 4.10]{blutecs}.
 
\begin{theorem}
With the above structure $\sV$ is a differential storage category.
\end{theorem}

\begin{proof}
We have shown $\sV$ is an additive storage category \cite[Definition 4.4]{blutecs}, with a deriving transformation satisfying the $\nabla$-rule (Lemma \ref{lemma_nablarule}) so we are done.
\end{proof}

\subsection{Cartesian differential categories} \label{section: cartesian differential categories}

The Kleisli category of the comonad ${!}$ on $\cat{V}$ is equivalent to the category of cofree coalgebras; for the reader's convenience we recall the proof in Appendix \ref{section:kleisli}. Under this equivalence the tensor product of coalgebras, which gives the categorical product in $\Coalg_k$ by \cite[p.49, p.65]{sweedler}, corresponds to the product $(X, Y) \mapsto X \oplus Y$ on the Kleisli category. 

By \cite[Prop 3.2.1]{cite5}, the Kleisli category $\sV_{!}$ is a Cartesian differential category. In order to describe the Cartesian differential structure, recall that there is a bijection
\begin{gather*}
\xymatrix@C+2pc{ \Coalg_k( {!} X, {!} Y ) \ar[r]^-{\cong} & \Hom_k( {!} X, Y ) = \cat{V}_{!}(X,Y) }\\
F \longmapsto d_Y \circ F\,.
\end{gather*}
The Cartesian differential operator on $\sV_{!}$ prescribed by \cite{cite5} is a function
\[
\mathbb{D}_{X,Y}: \sV_{!}( X, Y ) \lto \sV_{!}( X \times X, Y )
\]
which may be viewed as a function
\[
\Hom_k({!} X, Y) \lto \Hom_k({!}(X \oplus X), Y)
\]
or equivalently as a function
\[
\Coalg_k( {!} X, {!} Y ) \lto \Coalg_k( {!} X \otimes {!} X, {!} Y )\,.
\]
In the following we write $D_X$ for both the deriving transformation ${!} X \otimes X \lto {!} X$ and the map obtained from $D_X$ by precomposing with the swap $\sigma_{!X, X}$. By definition $\mathbb{D}_{X,Y}$ assigns to a linear map $f: {!} X \lto Y$ the composite
\[
\xymatrix@R+1pc@C+1pc{
{!} X \otimes {!} X \ar[d]^-{\cong}_-{\Theta} & & & & Y\\
{!}(X \oplus X) \ar[r]_-{\Delta} & {!}(X \oplus X) \otimes {!}(X \oplus X) \ar[r]_-{{!} \pi_0 \otimes {!} \pi_1} & {!} X \otimes {!} X \ar[r]_-{d_X \otimes 1} & X \otimes {!} X \ar[r]_-{D_X} & {!} X \ar[u]_-{f}
}
\]
where $D_X$ is the deriving transformation, $\Theta$ is the canonical isomorphism of Lemma \ref{lemma:bangisadditive} and $\pi_i: X \oplus X \lto X$ denote the projections. In our case we can simplify this definition. Recall that a Cartesian differential category \cite[Definition 2.1.4]{cite5} is a Cartesian left additive category equipped with a Cartesian differential operator, denoted here by $\mathbb{D}$. The left additive structure $\operatorname{plus}(-,-)$ on $\sV_{!}$ is given by \cite[Proposition 1.3.3]{cite5} from the Cartesian and additive structure on $\sV$ (the usual $\oplus$ and $+$) as follows: given $F,G \in \Coalg_k({!} X, {!} Y)$ we define $\operatorname{plus}(F,G) \in \Coalg_k({!} X, {!} Y)$ to be the unique morphism of coalgebras with
\[
d_Y \circ \operatorname{plus}p(F,G) = d_Y \circ ( F + G )\,.
\]

\begin{lemma} The composite
\[
\xymatrix@C+2pc{
{!}X \otimes {!}X \ar[r]^-{\Theta}_{\cong} & {!}(X \oplus X) \ar[r]^-{\Delta} & {!}(X \oplus X) \otimes {!}(X \oplus X) \ar[r]^-{{!}\pi_0 \otimes {!}\pi_1} & {!}X \otimes {!}X
}
\]
is the identity map.
\end{lemma}

\begin{proof}
We can see this by direct calculation (see Definition \ref{defn:ket} for notation):
\begin{align*}
&\;\;({!}\pi_0 \otimes {!}\pi_1) \circ \Delta \circ \Theta \left(\ket{v_1, ..., v_s}_P \otimes \ket{w_1, ..., w_t}_Q\right)
\\&=
({!}\pi_0 \otimes {!}\pi_1) \circ \Delta\left(\ket{(v_1,0), ..., (v_s,0), (0,w_1), ..., (0,w_t)}_{(P,Q)}\right)
\\&= 
({!}\pi_0 \otimes {!}\pi_1)\Sum_{I \subseteq [s]} \Sum_{J \subseteq [t]}\ket{v_I, w_J}_{(P,Q)} \otimes \ket{v_{I^c}, w_{J^c}}_{(P,Q)}
\\&= 
\Sum_{I \subseteq [s]} \Sum_{J \subseteq [t]} \delta_{J = \emptyset}\ket{v_I}_{P} \otimes \delta_{I^c = \emptyset}\ket{w_{J^c}}_{Q}
\\&= 
\Sum_{I \subseteq [s]} \Sum_{J \subseteq [t]} \delta_{J^c = [t]}\delta_{I = [s]}\ket{v_I}_{P} \otimes \ket{w_{J^c}}_{Q}
\\&= 
\ket{v_1, ..., v_s}_P \otimes \ket{w_1, ..., w_t}_Q
\end{align*}
which proves the claim.
\end{proof}

The upshot is that \cite[Prop 3.2.1]{cite5} implies the following Cartesian differential operator makes $\cat{V}_{!}$ into a Cartesian differential category:

\begin{definition}
The Cartesian differential operator
\[
\mathbb{D}_{X,Y}: \sV_{!}(X,Y) \lto \sV_{!}(X \times X, Y)
\]
sends a linear map $f: {!}X \lto Y$ to the linear map
\[
\xymatrix@C+2pc{
{!}(X \oplus X) \ar[r]^{\Theta^{-1}}_-{\cong} & {!}X \otimes {!}X \ar[r]^{d_X \otimes 1} & X \otimes {!}X \ar[r]^-{D_X} & {!}X \ar[r]^{f} & Y
}
\]
\end{definition}

Direct computation shows that
\begin{align*}
\mathbb{D}_{X,Y}(f)&\left(\ket{v_1, ..., v_s}_P \otimes \ket{w_1, ..., w_t}_Q\right)\\
\qquad &= fD(d \otimes 1)\left(\ket{v_1, ..., v_s}_P \otimes \ket{w_1, ..., w_t}_Q\right)
\\&= fD(\delta_{s = 0} P \otimes \ket{w_1, ..., w_t}_Q+ \delta_{s = 1} v_1 \otimes \ket{w_1, ..., w_t}_Q)
\\&= \delta_{s = 0} f\ket{P,w_1, ..., w_t}_Q+ \delta_{s = 1} f\ket{v_1, w_1, ..., w_t}_Q.
\end{align*}
and the lifting of $\mathbb{D}_{X,Y}(f)$ to a morphism of coalgebras ${!}(X \oplus X) \lto {!} Y$ may be described explicitly using \cite[Theorem 2.22]{murfet_coalg}.




Next we show that $\sV_{!}$ together with the maps $\mathbb{D}_{X,Y}$ is a Cartesian \emph{closed} differential category in the sense of \cite[\S1.4]{cite5}. First we recall the closed structure on $\sV_{!}$. Throughout $A,B$ denote arbitrary vector spaces.

\begin{definition} Let $\Gamma_{A,B}$ denote the unique morphism of coalgebras making
\[
\begin{tikzcd}[row sep=large,column sep=large]
{!}A \otimes {!}\Hom_\kk({!}A, B) \arrow[r, "1 \otimes d"]\arrow[d,swap,dashed,"\Gamma_{A,B}"] & {!}A \otimes \Hom_\kk({!}A, B) \arrow[d, "\eval"] \\{!}B \arrow[r,swap,"d"] & B
\end{tikzcd}
\]
commute, where we take the usual coalgebra structure on the tensor product.
\end{definition}

\begin{definition} We define
\[
\HOM(A,B) = \Hom_\kk({!}A, B)\,.
\]
Given a morphism $f \in \sV_{!}(B, B')$, that is a linear map $f: {!}B \lto B'$, we define
\[
\HOM(A,f) \in \sV_{!}( \HOM(A,B), \HOM(A,B') ) = \Hom_k\big( {!} \Hom_k({!} A, B), \Hom_k({!} A, B') \big)
\]
to be the linear map corresponding under the Hom-tensor adjunction to $f \circ \Gamma_{A,B}$.
\end{definition}

Again, this may be computed explicitly using \cite[Theorem 2.22]{murfet_coalg}.


\begin{lemma}
$\HOM(A,-)$ is a functor $\sV_{!} \lto \sV_{!}$.
\end{lemma}

\begin{proof}
To show $\HOM(A, 1_B) = 1_{\HOM(A,B)}$ we have to show that $\eval \circ (1 \otimes d)$ corresponds under the Hom-tensor adjunction to the dereliction $d: {!}\Hom_\kk({!}A, B) \lto \Hom_\kk({!}A, B)$. This we can do by calculation; $\eval \circ (1 \otimes d)$ is:
\begin{equation}
\ket{v_1, ..., v_s}_P \otimes \ket{\zeta_1, ..., \zeta_t}_\alpha \mapsto \delta_{t=0} \alpha\ket{v_1, ..., v_s}_P + \delta_{t = 1} \zeta_1\ket{v_1, ..., v_s}_P,
\end{equation}
which agrees with the dereliction on ${!}\Hom_\kk({!}A, B)$.

Now suppose given linear maps $g: {!}B' \lto B''$ and $f: {!}B \lto B'$, and let $\bullet$ denote the Kleisli composition. To show that
\begin{equation} \label{eqn: HOM is functorial} 
\HOM(A,g) \bullet \HOM(A,f) = \HOM(A, g \bullet f),
\end{equation}
we first observe $g \bullet f$ is the linear map
\[
{!}B \lxto{\delta} {!!}B \lxto{{!}f} {!}B' \lxto{g} B''
\]
and the left hand side of \eqref{eqn: HOM is functorial} is the composite
\[
{!}\Hom_\kk({!}A, B) \lxto{\delta} {!!}\Hom_\kk({!}A, B) \lxto{{!}\HOM(A,f)} {!}\Hom_\kk({!}A, B')\lxto{\HOM(A,g)} \Hom_\kk({!}A, B'')
\]
which corresponds under the Hom-tensor adjunction to the left hand vertical composite in the following commutative diagram
\[
\begin{tikzcd}[row sep=large, column sep=large]
{!}A \otimes {!}\Hom_\kk({!}A, B)\arrow[d,swap, "{1 \otimes \delta}"]\arrow[dr, "1"]  \\
{!}A \otimes {!!}\Hom_\kk({!}A, B)\arrow[r, "{1 \otimes d}"]\arrow[d,swap, "{1 \otimes {!}\HOM(A,f)}"] & {!}A \otimes {!}\Hom_\kk({!}A, B)\arrow[d, "{1 \otimes \HOM(A,f)}"]\\
{!}A \otimes {!}\Hom_\kk({!}A, B')\arrow[r,"{1 \otimes d}"]\arrow[ddr,"{\Gamma_{A,B'}}"]\arrow[d,swap, "{1 \otimes \HOM(A,g)}"] & {!}A \otimes \Hom_\kk({!}A, B') \arrow[d, "{\eval}"] \\
{!}A \otimes \Hom_\kk({!}A, B'')\arrow[d,swap, "{\eval}"] & B' \\
B''& {!}B' \arrow[u, swap, "{d}"]\arrow[l, "{g}"]
\end{tikzcd}
\]
So to prove \eqref{eqn: HOM is functorial} it suffices to show
\[
\Gamma_{A,B'} \circ (1 \otimes {!}\HOM(A,f)) \circ (1 \otimes \delta) = {!}f \circ \delta \circ \Gamma_{A,B}.
\]
But both sides are morphisms of coalgebras, so we may compare them after postcomposition with $d$, and this reduces to
\[
\eval \circ (1 \otimes \HOM(A,f)) = f \circ \Gamma_{A,B},
\]
which is true by definition.
\end{proof}

\begin{lemma}
The functor $\HOM(A,-): \sV_{!} \lto \sV_{!}$ is right adjoint to $- \times A$.
\end{lemma}

\begin{proof}
Recall that $- \times A$ as a functor on $\sV_{!}$ sends $f: {!}B \lto B'$ to the linear map
\[
f \times A: {!}(B \oplus A) \cong {!}B \otimes {!}A \lxto{F \otimes 1} {!}B' \otimes {!}A \cong {!}(B' \oplus A) \lxto{d} B' \oplus A
\] 
where $F$ is the morphism of coalgebras lifting $f$, given explicitly by \cite[Theorem 2.22]{murfet_coalg}. Given $v_1, ..., v_s, P \in A$ and $\omega_1, ..., \omega_t, Q \in B$ and writing $\cat{P}_{[t]}$ for the set of partitions of $[t] = \{1,\ldots,t\}$ we may calculate $f \times A$ as the map
\begin{align*}
\ket{w_1, ..., w_t}_Q \otimes \ket{v_1, ..., v_s}_P
&\lxmapsto[2.5em]{F \otimes 1}
\sum_{C \in \cat{P}_{[t]}} \Ket{\prod_{c \in C} f\ket{w_c}_Q}_{f \vacu_Q} \otimes \ket{v_1, ..., v_s}_P
\\&\lxmapsto[2.5em]{\Theta} 
\sum_{C \in \cat{P}_{[t]}} \Ket{\prod_{c \in C} f\ket{w_c}_Q, v_1, ..., v_s}_{(f \vacu_Q, P)}
\\&\lxmapsto[2.5em]{d} 
\delta_{s = t = 0}(f \vacu_Q, P) + \delta_{s = 0, t > 0} (f\ket{w_1, ..., w_t}_Q, 0)\\
&\qquad\qquad + \delta_{s = 1, t = 0} (0, v_1).
\end{align*}
So in summary:
\begin{align*}
&(f \times A)(\ket{w_1, ..., w_t}_Q \otimes \ket{v_1, ..., v_s}_P) 
\\&\qquad\qquad=\big(\delta_{s = t = 0}f \vacu_Q + \delta_{s = 0, t > 0}f\ket{w_1, ..., w_t}_Q, \delta_{s = t = 0} P + \delta_{s = 1, t = 0} v_1\big).
\end{align*}
We have bijections for vector spaces $A,B,C$
\begin{align*}
\cCoalg_\kk({!}C \otimes {!}A, {!}B) 
&\cong \Hom_\kk({!}C \otimes {!}A, B)
\\&\cong \Hom_\kk({!}C, \Hom_\kk({!}A, B))
\\&\cong \cCoalg_\kk({!}C, {!}\Hom_\kk({!}A, B))
\\&= \cCoalg_\kk({!}C, {!}\HOM(A, B)),
\end{align*}
and hence a bijection
\begin{equation}
\sV_{!}(C \times A, B) \cong \sV_{!}(C, \HOM(A,B)).
\end{equation}
The question that remains is whether these bijections are natural in $C, B$. Clearly they are natural in $C$. To prove naturality in $B$ we have to show that for a linear map $f: {!}B \lto B'$
\[
\begin{tikzcd}[row sep=large, column sep=large]
  \sV_{!}(C \times A, B)  \arrow[r,"{\cong}"]\arrow[d,swap,"{\sV_{!}(1,f)}"] &  \sV_{!}(C, \HOM(A,B))  \arrow[d,"{\sV_{!}(1,\HOM(A,f))}"] 
\\ 
 \sV_{!}(C \times A, B')   \arrow[r,swap,"{\cong}"] &  \sV_{!}(C, \HOM(A,B')) 
\end{tikzcd}
\]
commutes. That is, given a morphism of coalgebras $\gamma: {!}C \otimes {!}A \lto {!}B$ we have to show
\be\label{eq:some_diagram}
\HOM(A,f) \circ \prom(\widetilde{d \circ \gamma}) = \widetilde{f \circ \gamma}: {!}C \lto \Hom_\kk({!}A, B')
\ee
where $\widetilde{z}$ denotes the morphism corresponding to $z$ under the Hom-tensor adjunction. For this consider the diagram

\[
\begin{tikzcd}
{!}A \otimes {!}C \arrow[dddddd, bend right=60, swap,"{\gamma}"] \arrow[rdd, bend left=18, "{1 \otimes \prom(\widetilde{d \circ \gamma})}"]\arrow[dd, "{1 \otimes \widetilde{d \circ \gamma}}"]    \\ \\
{!}A \otimes \Hom_\kk({!}A, B) \arrow[dd,swap, "{\eval}"] & {!}A \otimes {!}\Hom_\kk({!}A, B) \arrow[dd, "{1 \otimes \HOM(A,f)}"]\arrow[l,swap, "{1 \otimes d}"]\arrow[ldddd, bend left=0,swap, "{\Gamma_{A,B}}"]\\ \\
B  & {!}A \otimes \Hom_\kk({!}A, B')\arrow[dd, "{\eval}"] \\ \\
{!}B \arrow[uu, "{d}"]\arrow[r, bend right=0, "{f}"]& B'
\end{tikzcd}
\]
From the calculation
\begin{align*}
d \circ \Gamma_{A,B} \circ (1 \otimes \prom(\widetilde{d \circ \gamma})) &= \eval \circ (1 \otimes d) \circ \prom(\widetilde{d \circ \gamma})\\
&= \eval \circ (1 \otimes \widetilde{d \circ \gamma})\\
&= d \circ \gamma
\end{align*}
we deduce that $\Gamma_{A,B} \circ (1 \otimes \prom(\widetilde{d \circ \gamma})) = \gamma$ since both sides are morphisms of coalgebras. From this and the above diagram we easily deduce \eqref{eq:some_diagram}.
\end{proof}


\begin{lemma}
With the above structure $\sV_{!}$ is a Cartesian closed left additive category.
\end{lemma}
\begin{proof}
We need to show that 
\[
\sV_{!}(A \times B, C) \lxto{\cong} \sV_{!}(A, \HOM(B,C))
\]
is an isomorphism of monoids. But this map is the Hom-tensor adjunction
\[
\Hom_\kk({!}A \otimes {!}B, C) \lxto{\cong} \Hom_\kk({!}A, \Hom_\kk({!}B, C))
\]
which is linear, so this is clear.
\end{proof}

\begin{theorem}\label{theorem:cofree_difflambda}
$\sV_{!}$ is a differential $\lambda$-category \cite[Definition 4.4]{cite3} and thus a model of the simply-typed differential $\lambda$-calculus \cite[\S4.3]{cite3}.
\end{theorem}

\begin{proof}
First we observe that $\sV_{!}$ is a Cartesian closed differential category in the sense of \cite[Definition 4.2]{cite3}. It is a Cartesian closed left additive category, and we have already observed in Section \ref{section: cartesian differential categories} it has an operator $\mathbb{D}_{X,Y}(-)$ satisfying the axioms of a Cartesian differential category. It remains to check the axiom (D-Curry) which says given $f: C \times A \lto B$ in $\sV_{!}$ and denoting currying by $\Lambda$, that
\begin{equation}
    \mathbb{D}(\Lambda f) = \Lambda(\mathbb{D}(f) \circ \<\pi_1 \times 0_A, \pi_2 \times 1_A\>).
\end{equation}
Here $\Lambda f: C \lto \HOM(A,B)$ and so $\mathbb{D}(\Lambda f): C \times C \lto \HOM(A,B)$, whereas the right hand side corresponds under adjunction to
\begin{equation}
    (C \times C) \times A \lxto[5em]{\<\pi_1 \times 0_A, \pi_2 \times 1_A\>} (C \times A) \times (C \times A) \lxto{\mathbb{D}(f)} B.
\end{equation}
In $\cat{V}$ this map is the composition of (where $0_A$ denotes the lift of $0: {!} A \lto A$)
\[
\xymatrix@C+2pc{
{!} C \otimes {!} C \otimes {!} A \ar[d]_-{\cong} & {!} C \otimes {!} A \otimes {!} C \otimes {!}A\\
{!}(C \oplus C) \otimes {!} A \ar[d]_-{\Delta \otimes \Delta}\\
{!}(C \oplus C) \otimes {!}(C \oplus C) \otimes {!} A \otimes {!} A \ar[r]_-{\cong} & {!}(C \oplus C) \otimes {!} A \otimes {!}(C \oplus C) \otimes {!} A \ar[uu]_-{{!} \pi_1 \otimes {!} 0_A \otimes {!} \pi_2 \otimes 1_{{!}A}}
}
\]
with $\mathbb{D}(f)$ which is
\[
\xymatrix@C+2pc{
{!} C \otimes {!} A \otimes {!} C \otimes {!}A \ar[d]^-{\cong} & B\\
{!}(C \oplus A) \otimes {!}(C \oplus A) \ar[d]^-{d_{C \oplus A} \otimes 1}\\
(C \oplus A) \otimes {!}(C \oplus A) \ar[r]^-{D_{C \oplus A}} & {!}(C \oplus A) \ar[uu]_-{f}
}
\]
As above we write $[n] = \{1,\ldots,n\}$. This composite is the linear map ${!} C \otimes {!} C \otimes {!} A \lto B$ given by the formula
\begin{align*}
\ket{\alpha_1, ..., \alpha_r}_P &\otimes \ket{\beta_1, ..., \beta_s}_Q \otimes \ket{\gamma_1,\ldots,\gamma_t}_R\\
&\longmapsto \ket{\alpha_1, ..., \alpha_r,\beta_1, ..., \beta_s}_{(P,Q)} \otimes \ket{\gamma_1,\ldots,\gamma_t}_R\\
&\lxmapsto[2.5em]{\Delta \otimes \Delta} \sum_{A \subseteq [r]}\sum_{B \subseteq [s]} \sum_{C \subseteq [t]} \ket{\alpha_A, \beta_B}_{(P,Q)} \otimes \ket{\alpha_{A^c}, \beta_{B^c} }_{(P,Q)} \otimes \ket{\gamma_C}_R \otimes \ket{\gamma_{C^c}}_R\\
&\longmapsto \Sum_{A,B,C} \ket{\alpha_A, \beta_B}_{(P,Q)} \otimes \ket{\gamma_{C}}_R \otimes \ket{\alpha_{A^c}, \beta_{B^c} }_{(P,Q)} \otimes \ket{\gamma_C^c}_R
\end{align*}
\begin{align*}
&\longmapsto \Sum_{A,B,C} \delta_{B = \emptyset}\ket{\alpha_A}_{P} \otimes \delta_{C = \emptyset}\ket{\emptyset}_0 \otimes \delta_{A^c = \emptyset} \ket{\beta_{B^c} }_{Q} \otimes \ket{\gamma_C^c}_R\\
&= \ket{\alpha_1,\ldots,\alpha_r}_P \otimes \ket{\emptyset}_0 \otimes \ket{\beta_1,\ldots,\beta_s}_Q \otimes \ket{\gamma_1,\ldots,\gamma_t}_R\\
&\longmapsto \ket{\alpha_1,\ldots,\alpha_r}_{(P,0)} \otimes \ket{\beta_1,\ldots,\beta_s,\gamma_1,\ldots,\gamma_t}_{(Q,R)}\\
&\lxmapsto[2.5em]{d \otimes 1} \delta_{r=0} (P,0) \otimes \ket{\beta_1,\ldots,\beta_s,\gamma_1,\ldots,\gamma_t}_{(Q,R)}\\
&\qquad + \delta_{r=1} \alpha \otimes \ket{\beta_1,\ldots,\beta_s,\gamma_1,\ldots,\gamma_t}_{(Q,R)}\\
&\lxmapsto[2.5em]{D} \delta_{r=0} \ket{P, \beta_1,\ldots,\beta_s,\gamma_1,\ldots,\gamma_t}_{(Q,R)}\\
& \qquad + \delta_{r=1} \ket{\alpha, \beta_1,\ldots,\beta_s,\gamma_1,\ldots,\gamma_t}_{(Q,R)}\\
&\lxmapsto[2.5em]{f} \delta_{r=0} f\ket{P, \beta_1,\ldots,\beta_s,\gamma_1,\ldots,\gamma_t}_{(Q,R)}\\
&\qquad + \delta_{r=1} f\ket{\alpha, \beta_1,\ldots,\beta_s,\gamma_1,\ldots,\gamma_t}_{(Q,R)}\,.
\end{align*}
On the other hand $\mathbb{D}(\Lambda f)$ is the linear map
\[
\xymatrix@C+2pc{
{!} C \otimes {!} C \ar[r]^-{d \otimes 1} & C \otimes {!} C \ar[r]^-{D} & {!} C \ar[r]^-{\Lambda f} & \Hom_k({!} A, B)
}
\]
which sends $\ket{\alpha_1, ..., \alpha_r}_P \otimes \ket{\beta_1, ..., \beta_s}_Q$ to the linear map
\begin{align*}
\ket{\gamma_1,\ldots,\gamma_t}_R \longmapsto f\Big( \big( \delta_{r=0} \ket{P, \beta_1,\ldots,\beta_s}_Q + \delta_{r=1} \ket{\alpha, \beta_1,\ldots,\beta_s}_Q \big) \otimes \ket{\gamma_1,\ldots,\gamma_t}_R \Big)
\end{align*}
This clearly corresponds under adjunction to the linear map ${!} C \otimes {!} C \otimes {!} A \lto B$ computed above, completing the proof.
\end{proof}

\appendix

\section{Kleisli categories and coalgebras}\label{section:kleisli}

For background on (co)monads and their (co)algebras see \cite[\S 4.1]{borceux2}. 
Let ${!}: \cat{V} \lto \cat{V}$ be the comonad discussed above, and $\delta: {!} \lto {!}{!}$ the natural transformation of Definition \ref{defn:delta_map}.

\begin{lemma}
Given a vector space $V$ there is a bijection between cocommutative counital coalgebra structures on $V$ and ${!}$-coalgebra structures on $V$.
\end{lemma}

\begin{proof}
Suppose $(V, \Delta, \varepsilon)$ is such a coalgebra. There is a unique morphism of coalgebras $\varphi_{\Delta, \varepsilon}: V \lto {!}V$ such that $d_V \circ \varphi_{\Delta, \varepsilon} = 1_V$, and moreover
\[
\begin{tikzcd}[row sep=large, column sep=large]
  V  \arrow[r,"{\varphi_{\Delta, \varepsilon}}"]\arrow[d,swap,"{\varphi_{\Delta, \varepsilon}}"] & {!}V   \arrow[d,"{\delta_V}"] 
\\
  {!}V  \arrow[r,swap,"{{!}\varphi_{\Delta, \varepsilon}}"] &    {!!}V
\end{tikzcd}
\]
is easily seen to commute, so $(V, \varphi_{\Delta, \varepsilon})$ is a ${!}$-coalgebra. Then $(\Delta, \varepsilon) \mapsto \varphi_{\Delta, \varepsilon}$ is our claimed bijection. Given a ${!}$-coalgebra $(V, \varphi)$ define $\Delta$ as the composite
\[
\begin{tikzcd}
V \arrow[rr, "{\varphi}"]&& {!}V \arrow[rr, "{\Delta}"]&& {!}V \otimes {!}V \arrow[rr, "{d \otimes d}"]&& V \otimes V,
\end{tikzcd}
\]
and $\varepsilon$ as
\[
\begin{tikzcd}
V \arrow[rr, "{\varphi}"]&& {!}V \arrow[rr, "{\varepsilon}"]&& \kk.
\end{tikzcd}
\]
From commutativity of
\[
\begin{tikzcd}
V \arrow[dr, swap, "{1}"]\arrow[r,"{\varphi}"]& {!}V\arrow[d,"{d}"] \\
& V
\end{tikzcd}
\qquad\qquad
\begin{tikzcd}
V \arrow[r,"{\varphi}"]\arrow[d,"{\varphi}"]& {!}V\arrow[d,"{\delta_V}"] \\
{!}V \arrow[r,"{{!}\varphi}"]& {!!}V
\end{tikzcd}
\]
we deduce commutativity of
\[
\begin{tikzcd}
V \arrow[r,"{\varphi}"]\arrow[d,swap,"{\varphi}"]& {!}V \arrow[r,"{\Delta}"]\arrow[d,swap,"{\delta_V}"]& {!}V \otimes {!}V\arrow[d,swap,"{\delta_V \otimes \delta_V}"]\arrow[ddr,bend left,"{1}"] \\
{!}V \arrow[r,"{{!}\varphi}"]\arrow[d,swap,"{\Delta}"]& {!!}V \arrow[r,"{\Delta}"]& {!!}V \otimes {!!}V \arrow[dr,"{d \otimes d}"]\\
{!}V \otimes {!}V\arrow[urr,"{{!}\varphi \otimes {!}\varphi}"]\arrow[rr,swap,"{d \otimes d}"] && V \otimes V\arrow[r,swap,"{\varphi \otimes \varphi}"]  & {!}V \otimes {!}V
\end{tikzcd}
\]
That is,
\be\label{eq:useful_eq}
\Delta \circ \varphi = (\varphi \otimes \varphi) \circ (d \otimes d) \circ \Delta \circ \varphi
\ee
We use this to prove coassociativity of $(V, \Delta, \varepsilon)$ as follows. We must show that
\[
\begin{tikzcd}[column sep=large]
V \arrow[r, "{\varphi}"]\arrow[d,swap, "{\varphi}"]& {!}V\arrow[r, "{\Delta}"] & {!}V \otimes {!}V\arrow[r, "{d \otimes d}"] & V \otimes V \arrow[d, "{\varphi \otimes 1}"]\\
{!}V \arrow[d,swap, "{\Delta}"]&&& {!}V \otimes V \arrow[d, "{\Delta \otimes 1}"]\\
{!}V \otimes {!}V \arrow[d,swap, "{d \otimes d}"]&&& {!}V \otimes {!}V \otimes V\arrow[d, "{d \otimes d \otimes 1}"] \\
V \otimes V \arrow[r,swap, "{1 \otimes \varphi}"]& V \otimes {!}V \arrow[r,swap, "{1 \otimes \Delta}"]& V \otimes {!}V \otimes {!}V \arrow[r,swap, "{1 \otimes d \otimes d}"]& V \otimes V \otimes V
\end{tikzcd}
\]
commutes, or stated differently, that the two ways around the outside of the following diagram agree when precomposed with $(d \otimes d) \circ \Delta \circ \varphi$:
\[
\begin{tikzcd}
V \otimes V \arrow[r,"{1 \otimes \varphi}"]\arrow[ddd,swap,"{\varphi \otimes 1}"]& V \otimes {!}V \arrow[rr,"{1 \otimes \Delta}"]\arrow[d,swap,"{\varphi \otimes 1}"]&& V \otimes {!}V \otimes {!}V\arrow[ddd,"{1 \otimes d \otimes d}"] \\
& {!}V \otimes {!}V\arrow[r,"{1 \otimes \Delta}"]\arrow[dr,bend right,swap,"{\Delta \otimes 1}"]\arrow[dr,phantom,"(\dagger)"] & {!}V \otimes {!}V \otimes {!}V\arrow[ur,swap,"{d \otimes 1 \otimes 1}"] \\
&& {!}V \otimes {!}V \otimes {!}V\arrow[d,"{1 \otimes 1 \otimes d}"]\arrow[u,swap,"{1}"] \\
{!}V \otimes V\arrow[uur,"{1 \otimes \varphi}"]\arrow[rr,swap,"{\Delta \otimes 1}"] && {!}V \otimes {!}V \otimes V\arrow[r,swap,"{d \otimes d \otimes 1}"] & V \otimes V \otimes V
\end{tikzcd}
\]
In this diagram every square but the one marked $(\dagger)$ commutes. Now precomposing both ways around this diagram with $(d \otimes d) \circ \Delta \circ \varphi$ amounts to precomposing the two ways around the triangle $(\dagger)$ with the right hand side of \eqref{eq:useful_eq} and therefore to precomposing with $\Delta \circ \varphi$. But by coassociativity of $\Delta$, the two ways around $(\dagger)$ agree when precomposed with $\Delta$. The usphot is that the above diagram commutes, when precomposed with $(d \otimes d) \circ \Delta \circ \varphi$.

For counitality we need
\[
\begin{tikzcd}
V \arrow[r, "{\varphi}"]\arrow[dd,swap, "{1}"]& {!}V \arrow[r, "{\Delta}"]\arrow[d, swap,"{1}"]& {!}V \otimes {!}V \arrow[r, "{d \otimes d}"]\arrow[d,swap, "{\varepsilon \otimes 1}"]\arrow[dr, swap,"{1 \otimes d}"]& V \otimes V \arrow[d, "{\varphi \otimes 1}"]\\
& {!}V\arrow[r, "{\cong}"]\arrow[dl, "{d}"] & \kk \otimes {!}V \arrow[dr,swap, "{1 \otimes d}"]& {!}V \otimes V\arrow[d, "{\varepsilon \otimes 1}"] \\
V \arrow[rrr, "{\cong}"]&&& \kk \otimes V
\end{tikzcd}
\]
to commute, which is clear from the given decomposition. For cocommutativity of $\Delta$ we need commutativity of
\[
\begin{tikzcd}
V \arrow[r, "{\varphi}"]\arrow[dr,swap, "{\varphi}"]& {!}V \arrow[r, "{\Delta}"]\arrow[d, "{1}"]& {!}V \otimes {!}V \arrow[r, "{d \otimes d}"]\arrow[d, "{\sigma}"]& V \otimes V\arrow[d, "{\sigma}"] \\
& {!}V \arrow[r,swap, "{\Delta}"]& {!}V \otimes {!}V \arrow[r,swap, "{d \otimes d}"]& V \otimes V
\end{tikzcd}
\]
which is again clear. We have now assigned to any ${!}$-coalgebra structure $\varphi$ on $V$ a coalgebra structure $\Delta_\varphi, \varepsilon_\varphi$. Next we observe that the assignments 
\[
(\Delta, \varepsilon) \longmapsto \varphi_{\Delta, \varepsilon}, \qquad \varphi \longmapsto (\Delta_\varphi, \varepsilon_\varphi)
\]
are mutually inverse. Clearly given $(\Delta, \varepsilon)$ the diagrams
\[
\begin{tikzcd}
V \arrow[r, "{\varphi_{\Delta, \varepsilon}}"]\arrow[d,swap, "{\Delta}"] & {!}V \arrow[r, "{\Delta}"] & {!}V \otimes {!}V \arrow[r, "{d \otimes d}"]& V \otimes V \\
V \otimes V \arrow[urr,swap, "{\varphi_{\Delta, \varepsilon} \otimes \varphi_{\Delta, \varepsilon}}"]\arrow[urrr, bend right=15, swap, "{1}"]
\end{tikzcd}
\qquad\qquad
\begin{tikzcd}
V \arrow[r, "{\varphi_{\Delta, \varepsilon}}"]\arrow[rr,swap,bend right, "{\varepsilon}"] & {!}V \arrow[r, "{\varepsilon}"] & \kk
\end{tikzcd}
\]
show that $\Delta_{\varphi_{\Delta, \varepsilon}} = \Delta$ and $\varepsilon_{\varphi_{\Delta, \varepsilon}} = \varepsilon$. Given $\varphi$ we observe that $\varphi: (V, \Delta_\varphi, \varepsilon_\varphi) \lto {!}V$ is a morphism of coalgebras and satisfies $d \circ \varphi = 1$, which completes the proof.
\end{proof}

\begin{lemma} \label{lemma: morphism of bang coalg iff morphism of coalg}
Let $(V, \varphi), (W, \psi)$ be ${!}$-coalgebras and $(V, \Delta_\varphi, \varepsilon_\varphi), (W, \Delta_\psi, \varepsilon_\psi)$ the associated coalgebras. A linear map $f: V \lto W$ is a morphism of ${!}$-coalgebras if and only if it is a morphism of coalgebras.
\end{lemma}

\begin{proof}
To say $f$ is a morphism of ${!}$-algebras is to say
\[
\begin{tikzcd}
  V  \arrow[rr,"{\varphi}"]\arrow[dd,swap,"{f}"] && {!}V   \arrow[dd,"{{!}f}"] 
\\ \\ 
 W   \arrow[rr,swap,"{\psi}"] &&    {!}W
\end{tikzcd} \tag{*}
\]
commutes, whereas to say $f$ is a morphism of coalgebras is to say
\[
\begin{tikzcd}
  V  \arrow[rr,"{\Delta_\varphi}"]\arrow[dd,swap,"{f}"] && V \otimes V   \arrow[dd,"{f \otimes f}"] 
\\ \\ 
 W   \arrow[rr,swap,"{\Delta_\psi}"] &&    W \otimes W
\end{tikzcd}
\qquad\qquad
\begin{tikzcd}
   V     \arrow[dd,"{f}"]\arrow[rr,"{\varepsilon_\varphi}"] &&    \kk   
\\ \\ 
W\arrow[uurr,swap,"{\varepsilon_\psi}"] 
\end{tikzcd} \tag{$\dagger$}
\]
commute. Now if $(*)$ commutes it is clear that the diagrams in $(\dagger)$ commute. Conversely suppose the diagrams in $(\dagger)$ commute. We know then that $\varphi, \psi$ are morphisms of coalgebras, so it suffices to check $(*)$ after post composition by $d$.
\end{proof}

Recall that for a comonad $T$ on a category $\sC$ we have the Kleisli category $\sC_T$ and the Eilenberg-Moore category $\sC^T$ \cite[\S 4.1]{borceux2}. Note that a comonad on $\cat{C}$ is the same thing as a monad on $\cat{C}^\op$. Explicitly, $\sC_T$ is the category with
\begin{itemize}
    \item $\ob \sC_T = \ob \sC$,
    \item $\sC_T(x,y) = \sC(Tx, y)$,
    \item $\id_x^{\sC_T} \in \sC(Tx, x)$ is the counit $\varepsilon$,
    \item $\sC_T(y,z) \times \sC_T(x,y) \lto \sC_T(x,z)$ is $(g,f) \mapsto g \circ T(f) \circ \delta_X$, where $\delta: T \lto TT$.
\end{itemize}
while $\sC^T$ is category of coalgebras for $T$ as defined in (\ref{eqn: coalgebra for the monad bang}).
There is a fully faithful functor $\sC_T \lto \sC^T$ defined by $X \mapsto (TX, \delta_X)$. 

\begin{proposition}
$\sV^{!}$ is isomorphic to $\cCoalg_\kk$.
\end{proposition}

\begin{proof}
We define $F: \sV^{!} \lto \cCoalg_\kk$ by $F(V, \varphi) = (V, \Delta_\varphi, \varepsilon_\varphi)$ and $G: \cCoalg_\kk \lto \sV^{!}$ by $G(V, \Delta, \varepsilon) = (V, \varphi_{\Delta, \varepsilon})$ in the above notation. On morphism sets both of $F$ and $G$ are the identity (Lemma \ref{lemma: morphism of bang coalg iff morphism of coalg}). Clearly $F \circ G = 1$ and $G \circ F = 1$.
\end{proof}

\begin{corollary}
$\sV_{!}$ is equivalent to the full subcategory of cofree coalgebras in $\cCoalg_\kk$.
\end{corollary}

\begin{proof}
This is immediate from the above but can also be seen directly using
\[
\cCoalg_\kk({!}A, {!}B) \cong \Hom_k({!}A, B) \cong \sV_{!}(A,B).
\]
\end{proof}

\bibliographystyle{amsalpha}

\begin{thebibliography}{BHLS03}

\bibitem{blutecon}
R.~Blute, T.~Ehrhard and C.~Tasson, \textsl{A convenient differential category}, arXiv preprint \href{https://arxiv.org/abs/1006.3140}{[arXiv:1006.3140]}, 2010.

\bibitem{blutecs}
R.~Blute, J.~Cockett and R.~Seely, \textsl{Differential categories}, Mathematical Structures in Computer Science 16.06, pp.1049--1083, 2006.

\bibitem{cite5}
R.~Blute, J.~Cockett and R.~Seely, \textsl{Cartesian differential categories}, Theory and Applications of Categories 22.23, pp.622--672, 2009.

\bibitem{borceux2}
F.~Borceux, \textsl{Handbook of Categorical Algebra 2: Categories and Structures}, Encyclopedia of Mathematics and Its Applications, Cambridge University Press, 1994.

\bibitem{bourbaki}
N.~Bourbaki, \textsl{Algebra I}, Springer-Verlag, 1989. 

\bibitem{cite3}
A.~Bucciarelli, T.~Ehrhard, and G.~Manzonetto, \textsl{Categorical models for simply typed resource calculi}, Electronic Notes in Theoretical Computer Science 265, pp.213--230, 2010.

\bibitem{cite4}
J.~Cockett and J.~-S.~Lemay, \textsl{Integral categories and calculus categories}, Mathematical Structures in Computer Science, pp.1--66, 2018.

\bibitem{autodiff}
G.~Corliss, \textsl{Automatic differentiation bibliography}, in Automatic Differentiation of Algorithms: Theory, Implementation, and Application, pp. 331--353. SIAM, Philadelphia, Pennsylvania, USA, 1991.  \href{http://liinwww.ira.uka.de/bibliography/-Math/auto.diff.html}{updated on-line version}.

\bibitem{ehrhard-kothe}
T.~Ehrhard, \textsl{On K\"othe sequence spaces and linear logic}, Mathematical Structures in Computer Science 12, pp. 579--623, 2002.

\bibitem{ehrhard-finiteness}
\bysame, \textsl{Finiteness spaces}, Mathematical Structures in Computer Science 15, pp. 615--646, 2005.

\bibitem{ehrhard-survey}
\bysame, \textsl{An introduction to Differential Linear Logic: proof-nets, models and antiderivatives}, arXiv preprint \href{https://arxiv.org/abs/1606.01642}{[arXiv:1606.01642]} 2016.

\bibitem{ehrhard-laurent}
T.~Ehrhard and O.~Laurent, \textsl{Interpreting a finitary $\pi$-calculus in differential interaction nets}, International Conference on Concurrency Theory, Springer Berlin Heidelberg, 2007.

\bibitem{difflambda}
T.~Ehrhard and L.~Regnier, \textsl{The differential $\lambda$-calculus}, Theoretical Computer Science 309, pp. 1--41, 2003.

\bibitem{diffnets}
\bysame, \textsl{Differential interaction nets}, Theoretical Computer Science 364, pp. 166--195, 2006.

\bibitem{fiore}
M.~Fiore, \textsl{Differential structure in models of multiplicative biadditive intuitionistic linear logic}, Proceedings of TLCA, pp. 163–-177, 2007.

\bibitem{friedlander}
F.~Friedlander and M.~Joshi, \textsl{Introduction to the Theory of Distributions}, Cambridge University Press, 1998.

\bibitem{frolicher}
A.~Frolicher and A.~Kriegl, \textsl{Linear spaces and differentiation theory}, Wiley, 1988.

\bibitem{girard_llogic}
J.-Y.~Girard, \textsl{Linear Logic}, Theoretical Computer Science 50 (1), 1--102, 1987.

\bibitem{girard_normal}
\bysame, \textsl{Normal functors, power series and the $\lambda$-calculus} Annals of Pure and Applied
Logic, 37: 129--177, 1988.

\bibitem{girard_complexity}
\bysame, \textsl{Light linear logic}, Information and Computation 14, 1995.

\bibitem{girard_banach}
\bysame, \textsl{Coherent Banach spaces: a continuous denotational semantics}, Theoretical Computer Science 227.1, pp. 275--297, 1999.

\bibitem{girard_prooftypes}
J.-Y.~Girard, Y.~Lafont, and P.~Taylor, \textsl{Proofs and Types}, Cambridge Tracts in Theoretical Computer Science 7, Cambridge University Press, 1989.

\bibitem{residuesduality}
R.~Hartshorne, \textsl{Residues and duality}, Lecture Notes in Mathematics, No. 20, Springer-Verlag, Berlin, 1966.

\bibitem{hyland}
M.~Hyland and A.~Schalk, \textsl{Glueing and orthogonality for models of linear logic}, Theoretical Computer Science, 294: 183--231, 2003.

\bibitem{kock}
A.~Kock, \textsl{Commutative monads as a theory of distributions}, Theory and Applications of Categories 26.4 (2012): 97--131. 

\bibitem{kontnc}
M.~Kontsevich and Y.~Soibelman, \textsl{Notes on $A_\infty$-algebras, $A_\infty$-categories and non-commutative geometry}, In Homological mirror symmetry (pp. 1-67). Springer Berlin Heidelberg, 2008.

\bibitem{kontnc2}
\bysame, \textsl{Deformation theory}, in preparation, accessible \href{https://www.math.ksu.edu/~soibel/Book-vol1.ps}{[online]}, 2002.

\bibitem{lambek}
J.~Lambek and P.~J.~Scott, \textsl{Introduction to higher order categorical logic}, Cambridge Studies in Advanced Mathematics, vol. 7, Cambridge University Press, Cambridge, 1986.

\bibitem{lebruyn}
L.~Le Bruyn, \textsl{Noncommutative geometry and dual coalgebras}, arXiv preprint, \href{http://arxiv.org/abs/0805.2377}{[arXiv:0805.2377]}, 2008.

\bibitem{mellies}
P-A.~Melli\`{e}s, \textsl{Categorical semantics of linear logic}, in Interactive models of computation and program behaviour, Panoramas et Synth\`{e}ses $27$, Soci\'{e}t\'{e} Math\'{e}matique de France, 2009.

\bibitem{murfet_coalg}
D.~Murfet, \textsl{On Sweedler's cofree cocommutative coalgebra}, J. Pure and
Applied Algebra. 219, pp. 5289--5304 \href{http://arxiv.org/abs/1406.5749}{[arXiv:1406.5749]}, 2015.

\bibitem{murfet_ll}
\bysame, \textsl{Logic and linear algebra: an introduction}, arXiv preprint, \href{https://arxiv.org/abs/1407.2650}{[arXiv:1407.2650]}, 2014.

\bibitem{nilsson}
H.~Nilsson, \textsl{Functional automatic differentiation with {D}irac impulses}, In ACM SIGPLAN Notices (Vol. 38, No. 9, pp. 153-164), 2003.

\bibitem{paige}
R.~Paige and S.~Koenig, \textsl{Finite differencing of computable expressions}, ACM Transactions on Programming Languages and Systems (TOPLAS), 4(3), pp. 402--454, 1982.

\bibitem{incdiff}
G.~Ramalingam and T.~Reps, \textsl{A categorized bibliography on incremental computation}, In POPL, pp. 502--510. ACM, 1993.

\bibitem{seiler}
W.~Seiler, \textsl{Involution: The formal theory of differential equations and its applications in computer algebra} (Vol. 24), Springer Science \& Business Media, 2009.

\bibitem{sweedler}
M.~Sweedler, \textsl{Hopf Algebras}, W.~A.~Benjamin, New York, 1969.
  
\end{thebibliography}
\providecommand{\bysame}{\leavevmode\hbox to3em{\hrulefill}\thinspace}
\providecommand{\href}[2]{#2}

\end{document}